\documentclass[a4paper,11pt]{article}

\newif\ifdraft
\drafttrue

\usepackage{style}
\usepackage{shortcuts}
\usepackage{xcolor}

\algnewcommand\algorithmicforeach{\textbf{for each}}
\algdef{S}[FOR]{ForEach}[1]{\algorithmicforeach\ #1\ \algorithmicdo}

\newcommand*\samethanks[1][\value{footnote}]{\footnotemark[#1]}
\def\proofsubparagraph{\subparagraph}

\title{Distributed Santa Claus via Global Rounding\footnote{This research was funded in whole or in part by the Austrian Science Fund (FWF) \url{https://doi.org/10.55776/P36280} and \url{https://doi.org/10.55776/I6915}. For open access purposes, the author has applied a CC BY public copyright license to any author-accepted manuscript version arising from this submission.
This research was also supported by Dutch Research Council (NWO) project "The Twilight Zone of Efficiency: Optimality of Quasi-Polynomial Time Algorithms" [grant number OCEN.W.21.268].}}
\author{Tijn de Vos\thanks{TU Graz} \textcircled{r}\footnote{The author ordering was randomized using \url{https://www.aeaweb.org/journals/policies/random-author-order/} generator. It is requested that citations of this work list the authors separated by \texttt{\textbackslash textcircled\{r\}} instead of commas.}\hspace{.5em} Leo Wennmann\thanks{University of Southern Denmark} \textcircled{r}\hspace{.5em} Malte Baumecker\samethanks[2] \\ \textcircled{r}\hspace{.5em} Yannic Maus\samethanks[2] \textcircled{r}\hspace{.5em} Florian Schager\samethanks[2]}
\date{\today}

\begin{document}

\maketitle
\begin{abstract}
    In this paper, we initiate the study of a new class of problems in the \CONGEST model: Mixed packing \emph{and} covering linear programs (LP). 
    Previously, the design of optimization algorithms in the distributed setting was heavily focused on solving linear programs with either packing or covering constraints. We are the first to explore the class of linear programs with both packing \emph{and} covering constraints by providing a general-purpose \CONGEST solver for such LPs and by studying the sequentially well-studied Santa Claus problem as a central representative. 
    This NP-hard problem can be modeled as a bipartite graph of children and gifts where an edge indicates that a child desires a gift.  
    The goal is to assign the gifts to the children such that the least happy child is as happy as possible.  
    Even though this is a well-studied problem in the sequential setting, we provide the first results in the distributed setting. 
    In particular, we show that the complexity of computing an~$\Oo(\log n/\log \log n)$-approximation is $\widehat \Theta(\sqrt n+D)$ rounds, where our $\widetilde\Omega(\sqrt n+D)$-round lower bound even holds for \emph{any} approximation.  
    
    This lower bound is in stark contrast to related allocation tasks like maximum matching and load balancing, where there exist polylogarithmic-round approximation algorithms. 
    Since Santa Claus is a max-min allocation problem, there are many more \emph{global} dependencies. 
    Our lower bound construction shows that the non-existence of short augmenting paths does not guarantee a good approximation, thereby ruling out many techniques used for matching and related problems. Moreover, it shows that unlike packing \emph{or} covering LPs, there cannot be an exact LP solver for fractional mixed packing and covering LPs with a runtime independent of the diameter.
    We provide the first \CONGEST solver that computes a fractional solution to any mixed packing and covering LP that takes $\Oo(D \cdot \poly (\log n, \varepsilon^{-1}))$ rounds.

    To solve the Santa Claus problem, we model it as a mixed packing and covering LP.
     Based on a fractional solution obtained by our solver, we compute an integral solution via an involved global rounding approach. This rounding scheme might be of independent interest to the distributed community as its core routine is applicable to any bipartite graph.
    The aim of our paper is to extend the scope of efficiently solvable distributed LPs and to establish a novel fundamental separation in distributed optimization, thereby advancing our knowledge on the complexity of approximation in distributed optimization. 
\end{abstract}

\newpage
\tableofcontents

\newpage

\clearpage
\section{Introduction}
A wide variety of optimization problems have attracted considerable attention in the distributed setting~\cite{KMW06,SarmaHKKNPPW12,HarrisSS16,KMW16,AhmadiKO18,BachrachCDELP19,Balliu0HORS19,Ghaffari19,ABL20,BO20,EfronGK20,BeckerFKL21,RozhonGHZL22,ZuzicGYHS22,CL21,ChangL23,FGGKR_SODA23,HuangS23,IzumiKY24,MitrovicS25,ChangS26}. Their relevant complexities may be roughly $\Theta(\poly\log n)$, $\Theta(D), \Theta(\sqrt{n}+D)$, $\Theta(n)$, or even $\Theta(n^2)$ rounds  in the bandwidth-restricted \CONGEST model.\footnote{Throughout, $n$ denotes the number of nodes and $D$ denotes the diameter of the graph.} Understanding which optimization problems fall into which regime remains a central challenge. 
Linear programming provides a useful lens on this landscape. Many distributed optimization problems can be formulated as linear programs, and for some important classes we already have a fairly complete algorithmic picture. Fractional packing and covering LPs can be solved in polylogarithmic \CONGEST rounds \cite{KMW06}, and the corresponding integer programs admit polylogarithmic algorithms in the stronger \LOCAL model \cite{GKM17,ChangL23}. In contrast, the best general \CONGEST solver for linear programs is substantially slower, taking $\Ohat(n+D\sqrt n)$ rounds~\cite{Vos23}.\footnote{Throughout, we write $\Otilde(f):=\Oo(f\poly\log f)$ and $\Ohat(f) := \Oo(f^{1+o(1)})$ for any function $f$.}

In this paper, we investigate a natural class lying between packing \emph{or} covering LPs and general LPs: linear programs containing both packing and covering constraints.
As one ingredient of our approach, we provide a distributed solver to obtain a fractional solution. Our main focus, however, is on the additional difficulties that arise when seeking integral solutions, which we study through the \emph{Santa Claus problem}, a classical max-min allocation problem that can be formulated as a linear program that uses both packing and covering constraints. We obtain an almost-tight distributed complexity characterization for its approximation.  Our main technical contribution is a global rounding procedure that converts a fractional solution into an integral assignment. 

More broadly, we hope that this work helps bring mixed packing-covering problems into the scope of distributed optimization. Recent follow-up work on a preliminary version of this paper indicates continued interest in this direction. In \cite{BernsteinGW26}, the authors further develop distributed mixed packing-covering solvers and apply them to related load-balancing problems. See the discussion after \Cref{thm:simplified-mpc-solver} for more details. 

\medskip 

The Santa Claus problem, first coined in the seminal paper of Bansal and Sviridenko~\cite{BansalS06}, is defined as follows.
Informally,
we help Santa Claus in his annual quest of trying to fulfill the children's gift wishlists by distributing gifts to the children such that the least happy child is as happy as possible.
More formally, we are given a set of gifts $\cG$ and a set of children~$\cC$ where each gift $g \in \cG$ has a fixed value $v_g \ge 0$.
Encoding the children's wishlists as a bipartite graph $G = (\cC\cup \cG,E)$, our goal is to find disjoint sets of gifts $\cG_c \subseteq N(c) \subseteq \cG$ for all children $c \in \cC$ according to the objective function
\begin{equation*}
    \max \min_{c \in \cC} \sum_{g \in \cG_c} v_g.
\end{equation*}
Santa Claus, also referred to as the restricted assignment case or max-min fair allocation, has been extensively studied in the sequential setting \cite{BezakovaD05,BansalS06, Feige08, AsadpourFS12, AnnamalaiKS15, PolacekS16, ChengM18, JansenR20, DaviesRZ20, ChengM22, HaxellS25}. 
Note that some papers consider the scheduling formulation of Santa Claus, where the goal is to find an assignment of jobs (gifts) to machines (children) that maximizes the minimum machine load. 
Other variants of Santa Claus like the matroid, submodular, budgeted, unrestricted, and online versions have also been studied \cite{ChakrabartyCK09,DaviesRZ20, BamasGR21,SpringerHPK22, BamasLMRS24, BamasMR25, RohwedderRW25}.

\subsection{Our Contribution}
Notably, Santa Claus has not yet been studied in a distributed setting.
In this paper, we consider the \CONGEST model~\cite{peleg00}, where a communication network is abstracted as an $n$-node graph, and the vertices communicate with each other through the edges of the graph in synchronous rounds by sending a message of $O(\log n)$ bits to each neighbor. The complexity measure is the number of communication rounds  until every node has produced its output, e.g., each child knows which gift it receives and each gift knows to which child it is assigned. 

\paragraph{Distributed Hardness.}
The max-min property of the Santa Claus problem creates global dependencies that require a considerable amount of communication. 
This is reflected by the fact that computing any multiplicative approximation---an integral solution that is only a multiplicative factor worse than the optimum---is already hard.
More formally, we show the following lower bound.

\begin{restatable}{theorem}{ThmSantaClausLowerBound}\label{thm:SantaClausLowerBound}
    Any \CONGEST algorithm that computes any multiplicative approximation of the integral Santa Claus problem requires $\widetilde \Omega(\sqrt n+ D)$ rounds in the worst case.
\end{restatable}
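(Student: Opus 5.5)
The $\Omega(D)$ part is easy and I will handle it separately. The $\widetilde\Omega(\sqrt n)$ part will come from a reduction from a verification problem known to be hard in \CONGEST, using the Das Sarma et al.\ framework. I would reduce from spanning connected subgraph verification, or equivalently from $s$-$t$ connectivity of a marked subgraph $H \subseteq G$. This problem needs $\widetilde\Omega(\sqrt n + D)$ rounds on the standard lower-bound graphs, which have diameter $O(\log n)$.

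The key observation is that any multiplicative approximation of Santa Claus must, at the very least, distinguish optimum $0$ from optimum $>0$. If $\mathrm{OPT}>0$, then an $\alpha$-approximation has positive value, so every child receives a gift of positive value. If $\mathrm{OPT}=0$, nothing of the kind is possible. So it suffices to build, from $(G,H)$, a Santa Claus instance simulable in $G$ with $O(1)$ overhead such that some child can be satisfied in all solutions iff $H$ has the property.

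For the construction, I would subdivide each edge of $H$ into a gift node and use a child node for each vertex. This is the standard reduction of "every vertex gets a distinct incident edge", i.e.\ an orientation where every vertex has in-degree at least $1$. Every vertex gets an edge iff each connected component of $H$ contains at least as many edges as vertices, i.e.\ has a cycle.

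To encode $s$-$t$ connectivity, I would take the path/tree-structured lower-bound graph where $H$ is a union of paths (each component a path or, if connected, joined through $s$ and $t$). I would then add one extra gift adjacent to $s$ and one to $t$, or add an edge $st$ realized by a gift. The effect is that a component gains a cycle exactly when $s$ and $t$ lie in the same path component. Concretely, for each path component, the child count exceeds the gift count by one. Adding a single gift "bridging" $s$ and $t$ fixes the deficit only if they are in one component, which makes it fixable precisely when $s,t$ are connected. Since the lower-bound family for $s$-$t$ connectivity can be arranged so that $H$ is a disjoint union of paths, the instance has $\mathrm{OPT}>0$ iff $s,t$ are connected in $H$.

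The remaining issue is the other paths not containing $s$ or $t$, which would themselves be deficient. I would fix this by attaching one private gift to one endpoint child of each such path, since each path endpoint knows locally whether it is $s$ or $t$. A better option is to give every path a private gift at one endpoint, with $s$ and $t$ each getting half of the fix. That second scheme needs care, and I expect it to be the main obstacle: getting exact parity so that the feasibility of a positive-value allocation equals connectivity.

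For the simulation, every gift and child corresponds to a vertex or edge of $G$. Gifts on edges are simulated by an endpoint, and non-$H$ edges of $G$ are kept as communication-only links by attaching zero-value gifts, which do not change positivity. The simulation therefore costs $O(1)$ rounds per round, and the network diameter stays $O(\log n)$.

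After the algorithm, each child knows whether it received positive value. An $O(D)$ aggregation then decides connectivity, giving the $\widetilde\Omega(\sqrt n)$ bound.

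For $\Omega(D)$, I would take a long path whose one end has a child with a gift whose existence (value $0$ vs $1$) is chosen at the far end. Concretely, I would use a chain child–gift–child–… in which a single extra gift at one end shifts the whole matching. The child at the opposite end gets positive value iff that gift exists, which requires $\Omega(D)$ rounds of information flow.
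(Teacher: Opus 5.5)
\textbf{There is a genuine gap in the $\widetilde\Omega(\sqrt n)$ encoding.} Your framework matches the paper: any multiplicative approximation separates $\mathrm{OPT}=0$ from $\mathrm{OPT}>0$, and one final aggregation turns the output into a decision. But the encoding that carries the bound is missing, and the version you sketch does not work.

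\begin{enumerate}
\item A gift adjacent to both $s$ and $t$ needs an $s$--$t$ link in the network. The standard lower-bound graphs do not have one, and adding it means you must re-prove the bound rather than cite it.
\item More seriously, your repair of the other deficient paths is not local. The far endpoint of the path $P_s$ containing $s$ cannot know that its path contains $s$. If you instead give every path a private gift, then a private gift on $P_s$ or $P_t$ makes $P_s\cup\{g_{st}\}\cup P_t$ a component with $k_s+k_t$ children and $k_s+k_t$ positive gifts. That component is satisfiable even though $s$ and $t$ are disconnected.
\item You assume the hard $s$-$t$ connectivity instances can be taken to be disjoint unions of paths with $s,t$ as endpoints. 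This is not a known black-box fact, and proving it amounts to redoing a two-party reduction.
\end{enumerate}

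Within your framework, one local rule does work. Use no bridging gift, and give a private gift to every $v\notin\{s,t\}$ with $\deg_H(v)\le 1$. If $H$ consists of paths and cycles and $\deg_H(s)=\deg_H(t)=1$, then $\mathrm{OPT}>0$ iff $s$ and $t$ are disconnected in $H$. You would still need a hard family of exactly that shape.

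The paper avoids all this by reducing directly from set disjointness. It uses $\sqrt n$ paths, each with one more child than gifts. A gift of value $a_i$ is shared by Alice and the left end of path $i$, and a gift of value $b_i$ by Bob and the right end. Alice, Bob and the tree children get private gifts. Hence $\mathrm{OPT}\ge 1$ iff $a_i\vee b_i$ for all $i$. ``Every child is happy'' is an AND over the paths, which matches disjointness exactly. A moving-cut argument on the tree-augmented graph then bounds the communication.

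\textbf{Your $\Omega(D)$ argument is also wrong as stated.} Take the chain $c_1,g_1,c_2,\dots,g_{n-1},c_n$ with an optional extra gift at $c_1$. The $O(1)$-round rule ``$c_1$ takes the extra gift and each $g_i$ goes to $c_{i+1}$'' is optimal when that gift has value $1$. When it has value $0$, $\mathrm{OPT}=0$ and \emph{every} output is a valid approximation. So nothing forces the far child's output to depend on the far gift. ``This child gets positive value iff the gift exists'' is not a constraint of the problem.

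Nor can you pass through the decision version, because the final aggregation itself costs $O(D)$ rounds. You need two instances that both have $\mathrm{OPT}>0$ but force different outputs at a node whose $o(D)$-neighborhoods coincide. The paper puts the extra gift at the left end in one instance and at the right end in the other. The middle gift must then go left in one and right in the other.
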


This lower bound implies a novel separation in distributed complexity. It shows that any integer mixed packing and covering solver requires $\widetilde\Omega(\sqrt n+D)$ rounds. Moreover, with similar techniques, we show that the Santa Claus problem, and hence any integer mixed packing and covering solver, requires $\widetilde \Omega(D)$, even in the stronger \LOCAL model (see \Cref{lem:integral_local_diameter_LB} for the formal statement). This is in contrast to integer packing \emph{or} covering LPs, which can be solved in polylogarithmic rounds in the \LOCAL model~\cite{ChangL23}. 

In particular, \Cref{thm:SantaClausLowerBound} implies that even computing a \emph{polynomial} approximation is hard. 
This lower bound is in stark contrast to other allocation problems like matching and load balancing. 
In particular, bipartite perfect matching is a special case of Santa Claus: Consider the case where the number of children is equal to the number of gifts, then each child must receive exactly one gift to maximize the minimum happiness. 
Consequently, such an assignment exists if and only if there is a perfect matching. 
For maximum matching, there is a structural theorem by~\cite{HopcroftKarp} with the following property: If there are no short augmenting paths in a given feasible matching, then it is a good approximation of the optimum. 
The proof of \Cref{thm:SantaClausLowerBound} shows there is \emph{no} such property for Santa Claus: there exist instances with arbitrarily bad approximate solutions that do not contain short augmenting paths (see \Cref{rem:no_augmenting_path,rem:no_augmenting_path2}).
Therefore, many of the approaches and techniques used to solve matching simply do not transfer. 
The same argument also juxtaposes Santa Claus and its dual---the min-max allocation problem called \emph{load balancing}. 
As for matching, there are (polylogarithmic) algorithms for load balancing in both \LOCAL and \CONGEST~\cite{FHS15, HKPR18, OBL18, BO20, ABL20,BernsteinGW26}, which often rely on augmenting  paths. Hence, the techniques used in solving the load balancing problem do not transfer to Santa Claus either.

\paragraph{\boldmath\CONGEST Algorithm for Santa Claus.}

As our main contribution, we present the first multiplicative approximation for Santa Claus that can be computed in an almost-optimal number of \CONGEST rounds.

\begin{restatable}{theorem}{SantaApproximation}
    \label{thm:santa-approximation}
    There exists a randomized \CONGEST algorithm that, given a weighted bipartite graph $G=(\cC\cup \cG,E)$ on $n$ vertices and vertex values $v_g\in \R$ with $1/\poly n\le v_g\le \poly n$ for all~$g\in \cG$, computes \whp an \SantaApproxFactor-approximation for the Santa Claus problem in $\Ohat(\sqrt n+D)$ rounds.
\end{restatable}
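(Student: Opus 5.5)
The plan has three stages: guess the optimum, solve a fractional relaxation, and round it globally. First, guess the optimum value $T$. Since $1/\poly n\le v_g\le\poly n$, the optimum is either $0$ or lies in $[1/\poly n,\poly n^2]$, so only $O(\log n)$ powers of two are candidates for $T$. The zero case is detected by checking whether every child has a neighbor. For each guess $T$ we work with the configuration-free assignment LP restricted to a threshold. Gifts with $v_g\ge T/\alpha$ are \emph{big} and all others are \emph{small}, where $\alpha=\Theta(\log n/\log\log n)$. Each small value is truncated to $\min(v_g,T)$. The constraints are:
\[ \sum_{c\in N(g)} x_{cg}\le 1 \ \ \forall g,\qquad \sum_{g\in N(c)} \min(v_g,T)\,x_{cg}\ge T\ \ \forall c,\qquad x_{cg}\ge 0 . \]
This is a mixed packing and covering LP. For each guess we run the \CONGEST mixed packing--covering solver (Theorem~\ref{thm:simplified-mpc-solver}) with constant $\varepsilon$. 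The solver takes $O(D\poly\log n)$ rounds, and the $O(\log n)$ guesses can be run in parallel or sequentially. Along a BFS tree we then determine the largest $T$ for which the solver reports feasibility up to $(1\pm\varepsilon)$. This $T$ satisfies $T\ge \mathrm{OPT}/(1+\varepsilon)$.

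Second, round the fractional solution. Because the LP may contain long fractional cycles and paths, local rounding cannot give a multiplicative guarantee, which is exactly what Theorem~\ref{thm:SantaClausLowerBound} says. So the rounding must be global. The plan has two steps.

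The first step handles big gifts. A child whose fractional mass on big gifts is at least $1/2$ should receive exactly one big gift. This is a bipartite $b$-matching-type rounding on the support graph. We perform it by repeatedly cancelling cycles and then paths in the fractional support, in the spirit of pipage rounding. Doing this in \CONGEST needs a low-diameter decomposition or a cycle-cancellation routine. We would implement the routine by handling "short" components locally and "long" ones by routing through a global BFS/shortcut structure, which is where the $\Ohat(\sqrt n+D)$ comes from. The idea follows the MST/connectivity paradigm: decompose into $O(\sqrt n)$ fragments of diameter $O(\sqrt n)$ and use the BFS tree for inter-fragment coordination.

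The second step handles small gifts, which we round randomly. Each small gift picks a child with probability $x_{cg}$. A child whose small-gift fractional value is at least $T/2$ then receives, in expectation, at least $T/2$ from independent contributions each bounded by $T/\alpha$. A Chernoff bound gives value $\Omega(T/\alpha)$ with probability $1-1/\poly n$ per child, provided we rescale. Here the factor $\log n/\log\log n$ is what a Chernoff tail for sums of $[0,T/\alpha]$-bounded variables with mean $\Theta(T)$ yields when we demand failure probability $n^{-c}$: we need $\alpha\approx \log n/\log\log n$ so that the lower tail at $1/\alpha$ of the mean is $n^{-c}$. A union bound over children finishes the argument.

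The main obstacle I expect is the deterministic/global rounding of the big-gift part. Deciding which children are served by big gifts must be done consistently across the whole graph, and independent randomized rounding fails for big gifts: a child may have many fractional big gifts, each with small mass. My first attempt would be to reduce this step to computing an integral flow/matching in the bipartite support subgraph. We would use the fractional solution as a certificate that a matching saturating the "big-heavy" children exists, with each gift's capacity halved. Then we would compute it by the rounding core routine described in the abstract: iteratively halve the granularity of fractional values over $O(\log n)$ phases, each phase decomposing the support into paths and cycles and two-coloring their edges along a global Euler-tour-like structure using $\Ohat(\sqrt n+D)$-round primitives. Each phase loses only a constant factor in the covering constraints for the big part and a $(1+1/\log n)$ factor overall. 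Combining both parts yields the $\SantaApproxFactor$ guarantee within $\Ohat(\sqrt n+D)$ rounds.
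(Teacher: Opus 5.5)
\textbf{Gap 1: the relaxation.} Your plan fails at the choice of LP, and no rounding can repair it. The assignment LP with truncated values $\min(v_g,T)$ has integrality gap $\Omega(n)$, so its feasibility at $T$ says nothing about integral solutions of value $T/\alpha$.

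Concretely, take children $c_1,\dots,c_k$, big gifts $b_1,\dots,b_{k-1}$ of value $T$ each desired by every child, and a private gift $s_i$ of value $T/k$ for each $c_i$. Setting $x_{c_ib_j}=1/k$ and $x_{c_is_i}=1$ satisfies all your constraints. Yet every integral assignment leaves some child with value at most $T/k\ll T/\alpha$.

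The same instance breaks your big-gift step. All $k$ children have big mass $(k-1)/k\ge 1/2$, but only $k-1$ big gifts exist, so no matching saturates them. Scaling the big mass by $2$ only certifies an assignment in which a big gift may go to two children.

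The paper instead solves the Davies--Rothvoss--Zhang LP, which has the extra constraint $y_{cs}\le 1-\sum_b x_{cb}$. This constraint forces every child here to have big mass $1$, so the LP is infeasible on this instance. The structural idea you are missing is what that constraint buys:
\begin{itemize}
\item Cycle rounding turns the big-gift support into a forest, and pruning leaves every big gift with degree at most two. Each tree can then serve all but one \emph{arbitrary} child with big gifts.
\item Every tree that cannot serve all its children has total small-gift value at least $T/2$ (Lemma~\ref{lem:santa-big-gift-clusters}).
\item The left-out child is chosen independently per tree, with probability proportional to its small-gift value.
\end{itemize}

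\textbf{Gap 2: the small-gift Chernoff step.} This step does not give the claimed factor either. Consider a child whose small value $T/2$ comes from gifts of value just below $T/\alpha$. Its fractional mass is about $\alpha/2$, so under independent rounding it receives nothing with probability about $e^{-\alpha/2}=n^{-\Theta(1/\log\log n)}$. That is not inverse-polynomial, so the union bound over children fails. Any lower-tail argument of this kind forces $\alpha=\Omega(\log n)$.

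In the paper, the $\log n/\log\log n$ comes from an \emph{upper} tail with constant mean. The load of a small gift over the randomly chosen children has expectation at most $2$, hence is at most $\beta$ w.h.p. After scaling down by $\beta$, the small gifts are rounded \emph{deterministically}: first cycle rounding with weights $z_{cs}v_s$, then rounding the remaining forest so that each child loses at most one gift of value below $T/\alpha$.

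A minor point: checking that every child has a neighbor does not detect $\mathrm{OPT}=0$, which is exactly the hard problem of Theorem~\ref{thm:SantaClausLowerBound}. An approximation algorithm does not need to detect this case anyway.
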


At a high level, we start by computing a fractional solution to a suitable LP relaxation of the problem, and then employ a combination of global and local rounding steps to compute an integral solution where each child is assigned a subset of gifts such that the total value is not much worse than the optimum. 
In particular, we introduce a global rounding scheme which might be of independent interest. 
Its core part is an efficient subroutine that solves the following problem on any bipartite graph: given fractional edge weights, it finds a new assignment of weights such that (1) each node retains the same total value of incident edge weights, and (2) the edges with non-integer values form a forest. Essentially, it removes all cycles from the fractional support. Such a method has been used for flow rounding~\cite{ForsterGLPSY21,RozhonGHZL22}; we give an efficient algorithm for this more general setting.

\paragraph{LPs with Packing and Covering Constraints.}
As we explain later, there is a suitable relaxation of Santa Claus to a mixed packing and covering  LP. This is advantageous if we can solve the respective LP formulation \emph{efficiently} in the distributed setting -- as we will show is the case.
As mentioned above, there are already fast solvers for either packing or covering LPs \cite{KMW06,GKM17,ChangL23} and a slower solver for more general LPs~\cite{Vos23}. 
Mixed packing and covering LPs, however, have not been studied thus far in the distributed setting. 
Building on prior sequential and parallel work~\cite{MRWZ16}, we complement the world of existing distributed LP solvers by providing the first LP solver for mixed packing and covering LPs (also known as positive LPs) in the \CONGEST model. 
Since we did not optimize the running time beyond the Santa Claus requirements, we consider the following LP solver a minor contribution in our paper.

\begin{theorem}[Simplified version of \Cref{thm:solving_mixed_packing_covering}]
    \label{thm:simplified-mpc-solver}
    There is a deterministic \CONGEST algorithm that provides a $(1+\eps)$-approximate solution for mixed packing and covering LPs in $\Otilde\left(\frac{D\log^3 n}{\eps^3}\right)$ rounds. 
\end{theorem}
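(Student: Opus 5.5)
We will adapt the sequential/parallel multiplicative-weights (MWU) framework for positive LPs of Mahoney, Rao, Wang and Zhang~\cite{MRWZ16} to \CONGEST. The general form is: given nonnegative matrices $P$ and $C$, find $x\ge 0$ with $Px\le (1+\eps)\mathbf{1}$ and $Cx\ge \mathbf{1}$, or certify infeasibility. Maximizing an objective reduces to this by binary search over the target value; this costs an extra $\Oo(\log(n)/\eps)$ factor, or a $\log$ factor if we guess on a geometric grid. The communication network is the bipartite constraint--variable graph, so each variable node knows exactly the constraints it appears in.

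The MRWZ algorithm runs in $\Oo(\log^2 n/\eps^3)$-ish iterations, roughly $\log n/\eps$ phases with $\Oo(\log n/\eps^2)$ steps each. In every step:
\begin{itemize}
\item each packing row keeps an exponential weight $y_i=\exp(\mu (Px)_i)$ and each covering row keeps $z_j=\exp(-\mu(Cx)_j)$, with $\mu=\Theta(\log n/\eps)$;
\item each variable $k$ compares its local ratio $\frac{(P^\top y)_k/\|y\|_1}{(C^\top z)_k/\|z\|_1}$ against a threshold $1+\Theta(\eps)$;
\item if the ratio is below the threshold, the variable increases multiplicatively, $x_k\leftarrow x_k(1+\alpha)$;
\item covering rows that are already satisfied are dropped, which bounds the number of iterations.
\end{itemize}

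Both inner products $(P^\top y)_k$ and $(C^\top z)_k$ are purely local: one round suffices, since each constraint node broadcasts its weight and each variable sums. The only global quantities are the normalizers $\|y\|_1$ and $\|z\|_1$, plus the termination checks (all covering constraints satisfied, or the potential exceeding its bound). We compute these by aggregation over a BFS tree in $\Oo(D)$ rounds per iteration. This is exactly where the $D$ factor enters, and the lower bound discussion indicates it cannot be avoided.

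Combining these gives $\Oo(D)$ rounds per iteration times $\Otilde(\log^2 n/\eps^3)$ iterations. The extra $\log n$ factor comes from the binary search over the objective, or equivalently from the number of phases, giving $\Otilde(D\log^3 n/\eps^3)$. The algorithm is deterministic because MWU is.

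The main obstacle is bandwidth. The exponentials $y_i$ and $z_j$, and the sums of them, can be huge or tiny, while a message carries only $\Oo(\log n)$ bits. We handle this in three ways:
\begin{itemize}
\item we transmit only exponents $\mu(Px)_i$, rounded to additive precision $\eps/\poly n$;
\item we aggregate sums in a log-sum-exp style, keeping the maximum exponent plus a normalized mantissa;
\item we argue that MRWZ's analysis tolerates $(1\pm \eps/10)$ relative errors in all ratios, so the rounding only changes constants.
\end{itemize}
Exponents stay bounded by $\poly(\log n/\eps)$ since $(Px)_i\le \Oo(1)$ until termination. A second, milder issue is initialization: $x$ must start at a small common value, e.g. $x_k=1/(n\cdot\max_i P_{ik})$ scaled by a global maximum computed in $\Oo(D)$ rounds, so that the potential starts within the analyzed range. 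We will also need polynomially bounded coefficients, as assumed in the full statement of \Cref{thm:solving_mixed_packing_covering}.
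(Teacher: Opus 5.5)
Your plan follows essentially the paper's proof. It runs the parallel MRWZ algorithm on the constraint--variable graph, computes $(P^\top y)_k$ and $(C^\top z)_k$ locally, and aggregates only the normalizers $\sum_j y_j$, $\sum_j z_j$ and the stop/infeasibility flags over a BFS tree in $O(D)$ rounds per iteration. This is wrapped in a binary search for the optimum and a finite-precision argument; the paper rounds every transmitted number to relative precision $\varepsilon/(200cR)$ instead of invoking per-step robustness, but both fit in $O(\log n)$ bits.

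You should fix your logarithm bookkeeping, though. MRWZ's iteration bound is $O(\log^2 n\,\log(m/\varepsilon)/\varepsilon^3)$, so the third logarithm already comes from the $\log(m/\varepsilon)$ term (with $\log m=O(\log n)$). The search over the objective (Young's reduction in the paper: $O(\log\log m)$ subproblems at $\varepsilon'=1/2$ plus geometrically cheaper refinements) adds no $\log n$ factor. Charging it a full $\log n$ on top of the correct iteration count would give $\log^4 n$.
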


A preliminary version of this paper already stimulated further research on mixed packing-covering LPs~\cite{BernsteinGW26}. 
 Their framework achieves a mixed packing-covering LP solver without any diameter dependency that yields improvements for load balancing.
However, our lower bound shows that this improvement \emph{cannot} carry over to Santa Claus.

\subsection{Technique in a Nutshell}\label{sec:technique}
In this section, we explain how to obtain our main result, an \SantaApproxFactor-approximation for distributed Santa Claus in the \CONGEST model.
On a high level, we follow the sequential approach of Bansal and Sviridenko~\cite{BansalS06} that consists of computing a fractional solution using an LP and then diligently rounding it to an integral solution. 
However, significant changes are necessary to make the sequential algorithm suitable for the distributed setting, which we point out in the following explanation.

Throughout, we consider the Santa Claus problem as the bipartite graph~$G = (\cC \cup \cG, E)$ where an edge~$\set{c,g} \in E$ is present if a child~$c \in \cC$ desires the gift~$g \in \cG$.
Further, let~$n = \abs{\cC \cup \cG}$, $m = \abs{E}$ and~$\alpha = \SantaApproxFactor$.

\subsubsection{From Exact Solution to Approximation}
By a standard binary search framework, the decision variant of Santa Claus is equivalent to an $\alpha$-approximation for the Santa Claus problem for some~$\alpha > 1$:
Given a threshold~$T \ge 0$, either compute a solution of value~$T/\alpha$ or determine that the optimal value of the Santa Claus problem is less than~$T$.\footnote{In particular, for any~$T$ larger than the optimal value the LP is no longer feasible, \ie there does not exist a solution that satisfies all constraints.}
Intuitively, the threshold~$T$ can be thought of as the optimal integral value found by binary search.
For the rest of this exposition, consider the parameter~$T$ as the globally known optimal (integral) value of the Santa Claus problem. In the actual algorithm, we determine $T$ via a global binary search without affecting the total runtime.

\subsubsection{Computing an Approximate LP Solution}
\label{subsubsec:compute-approximate-solution}

The first important step is to choose the right LP formulation for the decision variant of the Santa Claus problem.
The challenge is that sequential LPs do not correspond to communication graphs, and can often be solved exactly in polynomial time\footnote{In (sequential) approximation algorithms, the polynomial running time is often not even explicitly stated.} using a standard LP solver.
When using distributed LP techniques, we are restricted to LP formulations that correspond to the communication graph \emph{and} we can only compute approximate fractional solutions.
Here, we are heavily deviating from~\cite{BansalS06}, because they use the \emph{configuration LP}\footnote{In the configuration LP, there exist exponentially many variables for all possible subsets of gifts that have a total value of at least~$T$. In the sequential setting, this can be solved in polynomial time by finding a solution to the dual LP via a polynomial-time separation oracle and the Ellipsoid method.} that does not correspond to the communication graph between children and gifts and has exponentially many variables.

Therefore, we use the highly non-trivial, polynomial-size LP formulation due to Davies, Rothvoss and Zhang \cite{DaviesRZ20} that has a constant integrality gap, i.e., any optimal integer solution is a constant factor away from an optimal fractional solution.
In \cref{subsubsec:related-work-sequential-setting}, we explain why this does not directly lead to a constant factor approximation.
For this formulation, we partition the set of gifts~$\cG$ into \emph{big gifts}~$\cB := \{ g \in \cG : v_g \geq T/\alpha\}$ and \emph{small gifts}~$\cS := \{ g \in \cG : v_g < T/\alpha\}$. 
Importantly, as we are showing an $\alpha$-approximation, any child that receives one big gift is satisfied and does \emph{not} require any more gifts.
Let~$\cC_g \subseteq \cC$ denote the subset of children that are interested in the gift~$g \in \cG$. We refer to the following LP as \LP throughout.
\begin{align*}
    \sum\limits_{s \in \cS: c \in \cC_s} v_s u_{cs} &\geq T \color{gray}{{}\cdot \Big( 1-\sum_{b \in \cB: c \in \cC_b} u_{cb}\Big)}  && \forall c \in \cC \\
    \color{gray}{u_{cs}} &\color{gray}{\leq 1 -\sum_{b \in \cB: c \in \cC_{b}} u_{cb}} &&  \color{gray}{\forall s \in \cS, \forall c \in \cC_s} \\
    \sum\limits_{c \in \cC_g} u_{cg} &\leq 1 &&  \forall g \in \cG \\
    u_{cg} &\ge 0   && \forall c \in \cC, \forall g \in \cG
\end{align*}
Since this LP can be somewhat tricky to understand, it can be beneficial to ignore the gray parts and get some intuition for a simpler formulation\footnote{To be precise, there are two polynomial-size LP formulations without the gift size distinction (besides the configuration LP) whose integrality gaps can be unbounded or as large as~$\Omega(m)$, see~\cite{BansalS06} for more details.} without the distinction between gift sizes. 
Both formulations ensure that each child receives at least a total value of~$T$ and that all gifts are (fractionally) assigned at most once.
Additionally, the non-trivial formulation ensures that each child who is not satisfied with big gifts alone receives the rest of the value from more than~$\alpha$ small gifts---thereby ensuring that an optimal integral solution is only a constant factor away.
For an in-depth explanation of the constraints, see \cref{subsec:santa-lp-formulation}.

Using \Cref{thm:simplified-mpc-solver}, we can compute an approximate fractional solution to \LP\ in $\widetilde{\mathcal{O}}(D \cdot \poly(\log n, \varepsilon^{-1}))$ rounds of \CONGEST.
More precisely, we compute a feasible fractional solution~$u$ to $\mathrm{SantaLP}(T/2)$.
Based on the size distinction between gifts, let~$u = (x,y)$ where~$x$ and~$y$ denote the assignments for the big and small gifts, respectively.
From here on out, let~$G$ denote the bipartite gift graph based on the fractional solution~$u$ where an edge~$\set{c,g}$ of weight~$u_{cg} \in [0,1]$ is present if~$u_{cg} > 0$.
Similarly, let~$\Gx{x} = (\cC \cup \cB, E)$ denote the \emph{big gift graph} between children and big gifts where an edge~$\set{c,b} \in E$ of weight~$x_{cb}$ is present if~$x_{cb} > 0$.

\subsubsection{Efficient Sparsification via Global Rounding}
\label{subsubsec:eliminate-cycles}

A crucial recurring subroutine in the computation of our integral gift assignment in \cref{subsubsec:gift-assignment}, \ie our $\alpha$-approximation, is the efficient elimination of cycles in subgraphs of~$G$ via global rounding.
Since we aim to round the weights to integer values, we restrict the gift graph~$G$ to its fractional edges during the cycle rounding algorithm.
Our goal is to reduce~$G$ to a forest.
In the sequential setting, the algorithm is simple:
For all even cycles~$C$ in the bipartite graph~$G$, partition~$C$ into two disjoint matchings~$C_1$ and~$C_2$ and decrease the edge weight in~$C_1$ at the same rate as the edge weights in~$C_2$ increase.
Eventually, at least one edge weight becomes integral and we remove it from the graph.
We repeat this process until there are no more cycles left in~$G$.

However, a naive implementation of this sequential procedure is prohibitively expensive. Since each cycle might be of length $\mathcal{O}(n)$ and we are only guaranteed to remove one edge in each iteration, this leads to a total runtime of $\mathcal{O}(n m)$ rounds---a time in which one can trivially solve any problem in \CONGEST by gathering the whole instance in a node.
Therefore, our \CONGEST algorithm is substantially more involved.
Based on a cycle cover algorithm~\cite{ParterY19,RozhonGHZL22}, we design a cycle rounding algorithm that finds and eliminates many disjoint short cycles in parallel and takes~$\Ohat(\sqrt{ n } + D)$ rounds.
The high-level idea of our approach to eliminating cycles is to repeat the following steps restricted to the graph with only fractional edges:
\begin{enumerate}[(1)]
    \item Compute a \emph{low diameter decomposition}, obtaining clusters with small diameter and at most $m/10$ inter-cluster edges. 
    \item Within each cluster, carefully extract a bridgeless subgraph by identifying all bridges.
    \item Compute \emph{short cycle covers}: every edge appears in at least one cycle of length $n^{o(1)}$.
    \item  Select a maximal set of disjoint cycles using an MIS algorithm on a virtual graph. 
    \item Round all cycles of this maximal set in parallel such that at least one edge in each cycle becomes integral.
\end{enumerate}
We show that each iteration makes a~$1/n^{o(1)}$-fraction of the remaining edges integral. We then repeat all steps on all remaining fractional edges. 
After $\log m \cdot n^{o(1)}$ iterations, the graph~$G$ becomes a forest.

A priori, we cannot guarantee any progress once the graph becomes very sparse, that is, when there are less than $n-1+n/10$ edges left. 
In this case, the number of inter-cluster edges from the low diameter decomposition can be as large as $n/10$ such that the remainder is already a tree and no further edges are rounded. 
In order to avoid this and make our algorithm efficient, we iteratively interleave the two following procedures:
Firstly, we remove all edges incident to vertices of degree one.
Secondly, we contract all long paths of degree-two vertices into a single edge.
Together, this results in a graph $H$ where the minimum degree is greater than two. 
In particular, this implies that~$(9/10 - \eps)|E(H)| \ge |V(H)|-1$, and we can still make progress.
Since $H$ is not a subgraph of the communication graph $G$ (but also contains contractions), each `round' on $H$ takes several rounds on $G$. 
We show that we can simulate each round on $H$ in $\Otilde(\sqrt n+D)$ rounds on $G$. 
Multiplying this overhead with the number of rounds per iteration and the number of iterations gives $\Ohat(\sqrt n+D)$.

\subsubsection{Computing an Integral Gift Assignment}
\label{subsubsec:gift-assignment}

Based on the feasible fractional solution~$(x,y)$ to $\mathrm{SantaLP}(T/2)$, we explain how to compute an integral gift assignment~$(X,Z)$ such that each child receives a value of~$T/\alpha$.
Our algorithm consists of the following three steps where each step takes~$\Ohat(\sqrt n+D)$ rounds in \CONGEST.

\paragraph{Clustering of Big Gifts.}
With the goal to compute multiple clusters of big gifts, we use our cycle rounding algorithm to compute a fractional solution~$(x',y)$ that corresponds to an acyclic subgraph~$\Gx{x'} \subseteq \Gx{x}$. 
However, the resulting forest $\cT'$ does not yet have all properties we require to compute an integral assignment for all gifts.
Thus, as a next step, we compute a fractional assignment~$(x^*,y)$ where~$G_{x^*} \subseteq \Gx{x'}$ is a sub-forest $\cT^*\subseteq \cT'$ of children and big gifts with the \emph{crucial} property that for each tree $\mathcal T \in \cT^\ast$ we can always find one child in~$\cT$ to be satisfied with small gifts while all remaining children receive one big gift.
Recall that any child is satisfied by one big gift.

\paragraph{Randomized Selection of Small Gifts.}
For each tree~$\cT \in \cT^*$, we randomly select \emph{exactly one} child to receive only small gifts and propagate the choice to all other nodes in $\cT$.
Importantly, the random selection happens independently and in parallel for all~$\cT \in \cT^*$ such that all respective decisions whether a small gift is assigned to a tree are in fact \emph{independent}.
Using a concentration and union bound, each small gift is \whp (fractionally) assigned to at most~$\beta =\mathcal{O}(\log n / \log \log n)$ trees and exactly one child within each tree.
After downscaling the solution by $\beta$, each gift is fractionally assigned to at most one child, and each selected child receives a value of $T/(2\beta)$. 
Notably, the random selection directly implies the integral big gift assignment~$X$, \ie we simply root all~$\cT^*$ at their respective chosen children and assign the big gifts to their respective child node using the crucial property.

Naively, rooting a tree takes $D$ rounds. Since we consider the big gift forest~$G_{x^*}$, each~$\cT^*$ can have diameter~$n$. 
By using broadcast primitives on a compressed virtual graph, we show how to root all~$\cT^*$ in~$G_{x^*}$ in parallel in $\Otilde(\sqrt n+D)$ rounds.

\paragraph{Integral Assignment of Gifts.} Lastly, we consider the subgraph of small gifts and randomly selected children with the goal to compute an integral assignment of small gifts. 
Using our cycle rounding algorithm as a subroutine again, we first eliminate all cycles in our small gift graph---this requires a careful adaptation of the edge weights such that the rounding takes both the fractional solution and the values of the small gifts into account.
In a second step, we round our fractional small gift assignment to an integral assignment~$Z$ such that each gift is assigned to at most one child and each child receives a value of~$T/\alpha$ where~$\alpha = 4\beta$. 
The high-level idea of the rounding is as follows:
For a fixed child~$c \in \cC$ that receives small gifts, let~$k_c$ denote the (fractional) number of small gifts it receives in the fractional solution.
Then we round it such that each child receives~$\floor{k_c}$ small gifts and thus loses at most one small gift.
In the worst case, this is the most valuable gift of value at most~$T/ \alpha$ and each child receives a value of at least~$T / (2\beta) - \max_{s \in \cS} v_s = T/\alpha$.

\subsubsection{Lower Bound}
\label{subsubsec:lower-bound}

We prove \Cref{thm:SantaClausLowerBound} via a reduction from set-disjointness in the two-party model to the Santa Claus problem in \CONGEST. In particular, we construct a family of lower bound graphs where deciding whether Santa Claus admits a solution of size $1$ or $0$ corresponds to the decision whether a certain set-disjointness instance is a ``yes''- or a ``no''-instance. Observe that any multiplicative approximation is also able to distinguish between a $1$- and a $0$-instance. Combining our reduction with the restricted bandwidth of our lower bound graphs and the long-standing lower bound of~$\Omega({n})$ for set-disjointness in the two-party communication model~\cite{DBLP:journals/tcs/Razborov92}, we obtain the claimed lower bound.

\subsection{Related Work and Other Approaches}

\subsubsection{Distributed LP Solvers}
All distributed solvers for linear programs give a $(1\pm\eps)$-approximate optimal solution. 
The most general distributed LP solver by De Vos~\cite{Vos23} takes $\Ohat(n+D\sqrt n)$ \CONGEST rounds and computes a fractional solution to any LP where the constraint matrix only has non-zero entries corresponding to a vertex or edge in the graph. Designed to solve the maximum flow problem, it computes an exact, integral solution for maximum flow in the same runtime. For the class of packing or covering LPs, Kuhn, Moscibroda and Wattenhofer~\cite{KMW06} provide a polylogarithmic-round \CONGEST algorithm that provides a fractional solution. In the stronger \LOCAL model,  Ghaffari, Kuhn, and Maus~\cite{GKM17} gave an \emph{integral} LP solver for the same class of problems.
Using a network decomposition by~\cite{RozhonG20,GhaffariGHIR23}, their algorithm can also be made deterministic. 
Later, Chang and Li~\cite{ChangL23} improved their polylogarithmic runtime for randomized algorithms to~$\Otilde(\log n/\eps)$. Notably, there are often faster algorithms for specific packing and covering problems, see, e.g.~\cite{JRS02,KMW16,BEG18,BHR18}. 
Moreover, Chang and Su~\cite{ChangS26} provide a framework that transforms \LOCAL algorithms for many combinatorial optimization problems---including many packing or covering LPs---into \CONGEST algorithms with small overhead. This framework holds for all graphs excluding a fixed minor.

\subsubsection{Sequential Santa Claus Approximations}
\label{subsubsec:related-work-sequential-setting}

As a result of several papers~\cite{BansalS06, Feige08, AnnamalaiKS15, ChengM18, DaviesRZ20}, the current state-of-the-art for sequential Santa Claus is a $(4+\eps)$-approximation by Cheng and Mao~\cite{ChengM22}.
It remains an interesting open question whether the approximation factor can be improved to match the hardness result by~\cite{LenstraST90, BezakovaD05} who showed that unless~$P=NP$ there cannot exist a polynomial-time~$(2 - \eps)$-approximation for Santa Claus.
This raises the important question of why we follow the first approach of~\cite{BansalS06} in our \CONGEST algorithm and did not obtain a constant approximation based on the works of~\cite{Feige08, AnnamalaiKS15, ChengM18, DaviesRZ20, ChengM22}.
The answer is two-fold.

Bansal and Sviridenko~\cite{BansalS06} also show how to extend the approach followed in this paper to obtain an $\Oo(\log \log n/ \log \log \log n)$ approximation via the sequential \emph{Lov\'asz Local Lemma (LLL)}.
Improving upon these ideas with iterative applications of LLL, Feige~\cite{Feige08} showed the first constant factor approximation.\footnote{At the time, it was a non-constructive result that was later shown to be constructive by a result of~\cite{HaeuplerSS11} that made LLL constructive for the several applications in Feige's result.} 
As the applications of distributed LLL either require bounds on the number of small gifts that are adjacent to a big gift cluster (and vice versa) or subsampling of configurations of the configuration LP, it is highly unclear if this approach can be successfully transferred to the distributed setting.

The sequential algorithms of~\cite{AnnamalaiKS15, ChengM18, DaviesRZ20, ChengM22} are inherently \emph{global} and combinatorial, \ie they do not require to solve an LP for an initial fractional solution.
On a high level, they rely on polynomially many iterations of finding augmenting paths in the gift graph that improve upon the current gift assignment. 
Intuitively, one can hope that all augmenting paths are short as long as we are far away from an optimal solution, because this would allow us to augment many paths in parallel. 
However, the proof of our $\widetilde \Omega(\sqrt n+ D)$-round lower bound (see \Cref{rem:no_augmenting_path}) shows that such an approach must fail: 
Even if we are far away from an optimal solution, the augmenting paths can still be of length~$D$.
This is also the reason why the constant integrality gap of \LP does not automatically imply a constant approximation, as our approximation factor can only be as good as the guarantee that we maintain during our rounding procedure.
It remains an open question whether a different approach that does not rely on the augmenting paths leads to a better approximation factor for our \CONGEST algorithm.

A more general version of Santa Claus, also called \emph{unrestricted} Santa Claus, considers the case where each child~$c \in \cC$ desires a gift~$g \in \cG$ with a different value $v_{cg} \ge 0$.
Known as a notoriously difficult open problem, the state-of-the-art is a~$\max\set{\poly(\log |\cG|), |\cG|^{\eps}}$-approximation by Chakrabarty, Chuzhoy, and Khanna \cite{ChakrabartyCK09}.
In recent years, Bamas and Rohwedder \cite{BamasR23} showed a~$\poly(\log\log |\cG|)$-approximation for what Bateni, Charikar, and Guruswami \cite{BateniCG09} identified as another central special case of Santa Claus---the problem of max-min degree arborescence as it captures much of its difficulty.
However, this still leaves a significant gap to the known hardness by~\cite{LenstraST90, BezakovaD05} that also holds for the unrestricted case.

\subsubsection{Distributed Rounding}
\label{subsubsec:related-work-distributed-setting}

On a high level, our algorithm computes a fractional solution to Santa Claus and then rounds it to an integral solution. 
Although distributed rounding is well-studied, the challenges we face are distinct from existing work. 
Mainly following one recipe, previous work on local rounding in distributed algorithms can be described as follows: Start with a suitable fractional solution---usually a uniform distribution over all possible output labels---of the problem. 
Next, define a linear objective function~$\Phi = \sum_{v \in V} \Phi_v$.
Then pick a set~$S$ of vertices such that changing the fractional labels of any vertex~$v \in S$ does not affect $\Phi_w$ for any $w \in S \setminus \{ v \} $.
Round the fractional labels in $S$ to integral labels and remove $S$ from the graph.
Repeat this procedure until the graph is empty.
This produces a partial solution for a constant fraction of vertices in the graph. 
Hence, recursing on the remaining subgraph for $\mathcal{O}(\log n)$ iterations solves the problem on the entire graph.
This general method is responsible for many recent breakthroughs in the \LOCAL model; see \cite{Fischer20, GK21, GhaffariGR21,FGGKR_SODA23,GG23,GG24}.
This approach relies on a few key assumptions that are violated for Santa Claus: First of all, it assumes that any valid partial solution can always be extended to a proper solution of the same objective value.
Secondly, it assumes that it is always possible to round a fractional solution without loss in objective value.
This is clearly impossible for Santa Claus, as illustrated by the following simple instance: 
There are two children and one gift of value $1$ desired by both children. 
The optimal fractional solution has value $1 / 2$ but any integral assignment has value $0$.

For global problems, rounding is less common. 
The only rounding algorithm in \CONGEST we are aware of is for flow and takes $\Otilde(\sqrt m+D)$ rounds~\cite{ForsterGLPSY21}. 
The important subroutine of finding an Eulerian tour has since been improved to $\Otilde(\sqrt n+D)$ rounds~\cite{RozhonGHZL22} which can be seen as the starting point of the cycle rounding detailed in \cref{subsec:cycle-elimination}. 
However, we cannot use their algorithm, as our input graph does not have the structure required by~\cite{ForsterGLPSY21,RozhonGHZL22}.

\subsection{Future Directions}
\label{subsec:future-directions}

In some sense, the cycle rounding subroutine of \cref{subsec:cycle-elimination} can be seen as a sparsification procedure: it significantly reduces the number of fractional edges while maintaining the fact that it is a feasible fractional solution. 
To some extent, this rounding technique is content-oblivious: it is rounding the values of a fractional solution of a linear program.
Further research is required to find out under which circumstances such a sparsification can be used in other optimization problems. 
For Santa Claus, it already raises the question whether this sparsification can be run directly on the entire fractional solution---thereby considerably simplifying our algorithm. 
It turns out that this is not possible, which requires some care to see. 
Such a sparsification step essentially removes any edges that attain weight zero through rounding. 
However, an optimal integral solution does not need to be a subset of the non-zero fractional edges (see \Cref{example:sparsification} in Appendix~\ref{sec:example}).
Importantly, when we apply this sparsification step as a subroutine in our algorithm, we \emph{diligently} guarantee that there exists a good integral solution that is a subset of the fractional edges. 
We leave as an open question which other optimization problems can be approached in a similar manner. 

\subsection{Outline}
The remainder of this paper is structured as follows. In \Cref{subsec:cycle-elimination}, we provide our cycle rounding algorithm that reduces the number of fractional edges in a bipartite graph to a forest; this result is independent of the Santa Claus problem.  
In \Cref{sec:Santa_Claus}, we provide our \CONGEST approximation algorithm for the Santa Claus problem (see \Cref{thm:santa-approximation}). Next, in \Cref{sec:lower-bounds}, we provide our lower bounds (see \Cref{thm:SantaClausLowerBound,lem:integral_local_diameter_LB}). Finally, in \Cref{sec:LP_solver}, we provide our fractional mixed packing and covering LP solver (see \Cref{thm:simplified-mpc-solver}).

\newpage
\section{Efficient Sparsification via Global Rounding}
\label{subsec:cycle-elimination}
Given an approximate fractional solution to the Santa Claus problem, we want to reduce the number of edges with non-integer weights in the graph. 
For that purpose, we provide a \CONGEST algorithm that iteratively rounds edge weights along cycles in a bipartite\footnote{Since the input graph $G$ to the Santa Claus problem (see \Cref{sec:Santa_Claus}) consists of children and gifts, it is naturally bipartite.} graph such that the weight of at least one edge becomes integral.
Thus, we overcome the challenge of efficiently rounding edge weights along many cycles in parallel.
Repeating this procedure until the graph induced by the set of edges of non-integral weights becomes acyclic---in other words, a forest---yields the following main result in this section.

\begin{restatable}[Cycle Rounding]{lemma}{CycleRounding}
    \label{lem:cycle-elimination}
    Let $G=(V,E)$ be a bipartite \CONGEST communication network with edge weights $w\colon E \to [0,1]$. There exists a deterministic $\widehat{\mathcal{O}}(\sqrt n+D)$-round algorithm that computes edge weights $w'\colon E \to [0,1]$ such that 
   \begin{enumerate}
       \item $\sum_{v\in N(u)} w'(uv)= \sum_{v\in N(u)} w(uv)$ for every $u\in V$; and
       \item There exists a forest $F\subseteq  E$ such that all edges $e\in  E\setminus F$ have integral values $w'(e) \in \{0, 1\}$.
   \end{enumerate}
\end{restatable}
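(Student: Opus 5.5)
The plan is to run the procedure sketched in \cref{subsubsec:eliminate-cycles} as a sequence of phases. Each phase works on the subgraph $G_f$ of currently fractional edges and rounds a $1/n^{o(1)}$-fraction of its \emph{excess} edges. A single rounding step preserves property~1 trivially. For an even cycle $C$ in the bipartite graph, split $C$ into alternating matchings $C_1, C_2$. Shift all weights by $+\delta$ on $C_1$ and by $-\delta$ on $C_2$, and take $\delta$ maximal so that all weights stay in $[0,1]$. Every vertex on $C$ then has exactly one incident edge raised and one lowered by $\delta$, so its weighted degree is unchanged, and at least one edge of $C$ becomes integral. Vertex-disjoint cycles can be rounded simultaneously without interference. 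Each cycle of length $\ell$ can compute its $\delta$ (a min over its edges) and the alternating orientation in $O(\ell)$ rounds, as long as cycles are vertex-disjoint and each vertex knows its two cycle edges. When $G_f$ becomes a forest we stop, and that forest is $F$.

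Before each phase, I would reduce $G_f$ to its \emph{2-core with suppressed degree-two paths}. First, repeatedly strip degree-one vertices; edges removed this way are never on cycles, so they can be left fractional and placed in $F$. Then contract every maximal path of degree-two vertices into one virtual edge, giving a multigraph $H$ with minimum degree at least $3$. A cycle in $H$ corresponds to an even cycle in $G_f$, and rounding a virtual edge means applying the alternating $\pm\delta$ along its whole path. Because minimum degree is at least $3$, we have $|E(H)| \ge \tfrac32 |V(H)|$, so the excess $|E(H)| - |V(H)| + 1$ is a constant fraction of $|E(H)|$. The key implementation step is simulating one round on $H$ in $\Otilde(\sqrt n + D)$ rounds of $G$. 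Long paths, and the peeling of long pendant trees, are handled by the standard trick. Path or tree segments of length at most $\sqrt n$ are processed locally. For longer ones, roughly $\sqrt n$ segment representatives (sampled or chosen by marking every $\sqrt n$-th vertex) aggregate over a BFS tree in $O(\sqrt n + D)$ rounds, much as in the Euler tour and rooting primitives of \cite{RozhonGHZL22}.

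Within a phase on $H$, I would do the following. Compute a low-diameter decomposition with $n^{o(1)}$ cluster diameter that cuts at most $|E(H)|/10$ edges. The $3/2$ density bound ensures the uncut edges still contain $\Omega(|E(H)|)$ non-bridge edges, after removing bridges inside each cluster, which is done by a spanning tree and subtree aggregation within the small-diameter cluster. On the bridgeless part, apply the short cycle cover of \cite{ParterY19,RozhonGHZL22} to get cycles of length $n^{o(1)}$ covering every edge with congestion $n^{o(1)}$. Then build the conflict graph on these cycles, with two cycles adjacent if they share a vertex. Run a deterministic MIS on it, simulated with $n^{o(1)}$ overhead. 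Every edge lies on a cycle that either is chosen or intersects a chosen cycle, and a chosen cycle of length $n^{o(1)}$ blocks at most $n^{o(1)}$ edges' worth of cycles per its length times congestion. So the chosen disjoint cycles number at least $|E(H)|/n^{o(1)}$, and rounding them makes that many edges integral. Hence $\log m \cdot n^{o(1)}$ phases suffice, each costing $n^{o(1)}\cdot\Otilde(\sqrt n + D)$ rounds, for $\Ohat(\sqrt n + D)$ total.

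The main obstacle I expect is the simulation of $H$ on $G$: contracted paths and pendant trees can have length $\Theta(n)$, and the cycle cover and MIS must run on $H$ with only $\Otilde(\sqrt n + D)$ overhead per $H$-round. This includes computing $\delta$ along a contracted path (a minimum) and propagating the alternating parity. I would handle this with the $\sqrt n$-segmentation plus global BFS-tree pipelining mentioned above, arguing that there are at most $O(\sqrt n)$ long segments, so their aggregates pipeline in $O(\sqrt n + D)$ rounds. A secondary difficulty is the counting argument that the MIS of cycles rounds an $n^{-o(1)}$ fraction of the edges, which requires the cycle-cover congestion bound.
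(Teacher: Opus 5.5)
Your pipeline matches the paper's: rake/compress to a minimum-degree-$3$ multigraph, LDD, bridge removal, short cycle cover, MIS over cycles, parallel rounding, and $\log m\cdot n^{o(1)}$ phases. The gap is in the MIS step. You make two cycles adjacent when they share a \emph{vertex}, and then your counting argument fails. The cycle cover bounds only the number of cycles through an \emph{edge} by $c=n^{o(1)}$. The number of cover cycles through a vertex $v$ can be as large as $c\cdot\deg_H(v)/2$, and degrees in $H$ are unbounded. So a chosen cycle $C$ can block about $\sum_{v\in C} c\deg_H(v)$ cycles, not $|C|\cdot c$, and the bound of $|E(H)|/n^{o(1)}$ rounded edges per phase does not follow.

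A concrete counterexample is $H=K_{3,N}$ with all weights fractional. It is bipartite, bridgeless, has minimum degree $3$ and diameter $2$. Every cycle uses at least two of the three vertices on the small side, so any two cycles share a vertex. Your conflict graph is then a clique, the MIS is a single cycle, and a phase makes only $O(1)$ edges integral. Right-side vertices that drop to degree $2$ just become virtual edges between the three hubs, so this structure persists and you need $\Omega(n)$ phases. The same unbounded conflict degree also breaks your claim that the MIS can be simulated with $n^{o(1)}$ overhead.

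The fix, which is what the paper does, is to make cycles adjacent only when they share an \emph{edge}. Vertex-disjointness is not needed for correctness. Signs alternate along an even cycle, so each cycle by itself changes the weighted degree of every vertex it passes through by $+\delta-\delta=0$. If the chosen cycles are edge-disjoint, each edge is modified by at most one cycle, so it stays in $[0,1]$, and the net change at each vertex is a sum of zeros. Communication along edge-disjoint cycles also causes no edge congestion. With edge-sharing adjacency, a chosen cycle $C$ conflicts with at most $|C|\cdot c$ cycles, which together cover at most $|C|\cdot c\cdot d$ edges. This gives the Parter--Yogev fraction $1/(cd)=n^{-o(1)}$ and an $n^{o(1)}$-overhead simulation of the MIS.

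A minor omission: components of the $2$-core that are single cycles do not compress to minimum degree $3$. The paper rounds these directly, outside the LDD and cycle-cover machinery.
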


The idea of the rounding procedure is simple: for any (even) cycle in $G$, we find the edge of minimum or maximum weight~$w$ (depending on which is closer to $0$ or $1$) and alternately increase or decrease all weights around the cycle by $w$.
Although correctness of such an algorithm is straightforward, it is tricky to efficiently implement it in the distributed setting. There exist \CONGEST implementations of this procedure for rounding flow~\cite{ForsterGLPSY21,RozhonGHZL22,ForsterV23}, culminating in an~$\Otilde(\sqrt n + D)$-round algorithm. 
However, these algorithms are tailored to graphs with specific properties. In particular, $(1)$ the sum of all edge weights for any vertex needs to be 1 and $(2)$ the graph should be Eulerian, i.e., each node has even degree. 
The latter property implies that it can be decomposed into a union of cycles. 
Overcoming the challenges of dropping these assumptions requires several new ideas. 
In our approach, we bypass $(1)$ completely. 
We overcome $(2)$ since our new subroutine only requires the graph to be \emph{bridgeless}. 
A \emph{bridge} is an edge whose removal makes the graph disconnected and thus does not lie on a simple cycle. Any bridges that occur can be found easily and will be part of the forest $F$ as defined in \Cref{lem:cycle-elimination}.

In \Cref{subsubsec:cycle-rounding-algorithm}, we give an overview of our algorithm, together with the subroutines we use from the literature. Then, in \Cref{subsubsec:proof_cycle_rounding}, we give the algorithm in detail and the full proof.

\subsection{Algorithm Overview}
\label{subsubsec:cycle-rounding-algorithm}

The starting point of our algorithm is the subgraph $G_0 \subseteq G$ induced by all edges $e$ such that~$w(e) \notin \{ 0, 1 \} $.
Then, in iteration $i$, we perform the following modifications to the graph~$G_i$:
First, we iteratively apply two operations inspired by Miller and Reif \cite{MillerR89} called \textit{Rake} and \textit{Compress} to $G_i$.
The \textit{Rake} operation helps to prune edges from $G_i$, which are not part of any cycle, while the \textit{Compress} operation makes it feasible to round edge weights on very long cycles.
Another important feature of the \textit{Compress} iteration is that it makes the graph more dense. 
In particular, we will show that the number of edges $m_i$ of the compressed graph $H_i$ is at least $3 / 2$ times the number of vertices $n_i$ of $H_i$.
This is crucial, because in the second step of each iteration, we compute a low-diameter decomposition of $H_i$, which permits up to~$m_i/10$ edges to have endpoints in two different clusters of the decomposition.
In particular, without this step we would not be able to reduce a graph $G$ with $(1 + o(1)) \cdot n$ edges to a forest, since it could happen that all cycles in $G$ get lost after removing the edges between the different clusters.
However, since the number of edges with two endpoints in the same cluster is at least $9/10 \cdot 3/2 \cdot n_i > 5/4 n_i$, we are still guaranteed to find enough cycles to make progress.
In order to find a large set of disjoint short cycles, we first apply a procedure to each component that makes the component bridgeless.
Finally, we compute a cycle cover on the pruned components in parallel and select a maximal independent set of those cycles to round in this iteration.
In the rest of this section, we explain each of these steps in more detail.
For an overview we refer to \cref{alg:cycle-rounding}.

\begin{algorithm}
    \caption{Cycle Rounding}
    \label{alg:cycle-rounding}
    \SetKwInOut{Input}{Input}
    \SetKwInOut{Output}{Output}
    
 \newcommand\mycommfont[1]{\textcolor{gray}{\itshape #1}}
    \SetCommentSty{mycommfont}
    \SetKwComment{tcp}{}{}
    \DontPrintSemicolon

        \Input{A \CONGEST network $G=(V,E)$ with edge weights $w: E \to [0,1]$.}
        \Output{Edge weights $w': E \to [0,1]$ satisfying the properties in \Cref{lem:cycle-elimination}.}
        Set $k \gets \mathcal{O}(\log n \cdot n^{o(1)}), E_0 \gets E, F_0 \gets \emptyset, w_0 \gets w $. \\
        \For{$i = 1,\dots, k$}{ 
        $E_i \leftarrow \{e \in E_{i-1} : w_{i-1}(e)\notin \Z\} \setminus F_{i-1}$, $G_i=(V,E_i)$.\\
        \Repeat{there are no edges incident to a degree-1 node in $G_i$.}
        {
            \textit{Rake$(G_i)$:} Remove all edges incident to degree-1 nodes from $G_i$. \\
            \textit{Compress$(G_i)$:} Contract maximal paths of degree-2 nodes in $G_i$. \tcp*[r]{$\triangleright$ \Cref{lem:compress}}
        }
        Delete all isolated vertices and denote the pruned graph by $H_i$. \\
        Denote the set of edges removed by a \textit{Rake} operation by $R_i$. \\
        Compute a $(1/10,n^{o(1)})$-LDD $S_1, \dots, S_\ell$ on $H_i$. \tcp*[r]{$\triangleright$ \Cref{lm:LDD}}
        \For{all $j \in [\ell]$ in parallel}{
        Compute the set of bridges $B_{i,j}$ of $H_i[S_j]$. \tcp*[r]{ $\triangleright$ \Cref{lm:Bridges}}
        Compute a $(n^{o(1)},n^{o(1)})$-cycle cover $\mathcal{U}_{i,j}$ on $H_i[S_j]\setminus B_{i,j}$. \tcp*[r]{$\triangleright$ \Cref{lm:cycle_cover}}
        Compute a maximal set $\mathcal U_{i,j}'$ of non-overlapping cycles \tcp*[r]{$\triangleright$ \Cref{lem:MIS}}
        \For{all $C\in \mathcal U_{i,j}'$ in parallel}
        {
            $w_\mathrm{min}(C) \gets \min_{e \in C} \min \{w_{i-1}(e), 1 - w_{i-1}(e) \}$. \\
            Obtain $w_i$ by increasing or decreasing the edge weights on $C$ by $w_\mathrm{min}(C)$. 
        }
        }
        $F_{i} \gets F_{i-1} \cup R_i$.
    }
    \Return $w_k$
\end{algorithm}

\subparagraph*{Step 1: Compressing the Graph.}

As the first step in each iteration $i$, we prune the graph~$G_i$ by removing parts of it that cannot be covered by a cycle. To be precise, we simulate the following \textit{Rake} and \textit{Compress} procedures inspired by Miller and Reif \cite{MillerR89}:
\begin{itemize}
    \item \textit{Rake:} Remove all vertices of degree at most one and all incident edges.
    \item \textit{Compress:} Replace each maximal path $P = v_0,v_1,\dots,v_k$ such that $\mathrm{deg}(v_i) = 2$ for all $i = 1,\dots, k - 1$ with a single edge $(v_0,v_k)$.
\end{itemize}

Note that applying one iteration of \textit{Rake} and \textit{Compress} to an input graph $G$ always yields a \emph{topological minor} $H$ of $G$ as a result. 

\begin{definition}[Topological minor]
    We call a graph $H$ a \emph{topological minor} of $G$ if a subdivision of $H$ is isomorphic to a subgraph of $G$.
\end{definition}

We repeat the above operations to the current graph $G_i$ until we obtain a topological minor $H_i$ that does not contain any vertices of degree one or less.

\begin{restatable}{lemma}{RakeCompress}
    \label{lem:rake_and_compress}
    Let $G=(V,E)$ denote an $n$-node communication network of diameter $D$.
    There is a deterministic $\widetilde{\mathcal{O}}(\sqrt{ n } + D)$-round \CONGEST algorithm that iteratively simulates \textit{Rake} and \textit{Compress} on a subgraph $G' \subseteq G$ until the procedure terminates with a topological minor $H$ of $G$, where each connected component in $H$ is
    \begin{itemize}
        \item either a single cycle or 
        \item it has minimum degree at least three.
    \end{itemize}
    Further, we can simulate one round on $H$ in $\widetilde{\mathcal{O}}(\sqrt{ n } + D)$ rounds on $G$.
\end{restatable}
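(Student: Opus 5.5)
We first characterize the fixed point of the procedure combinatorially, which avoids simulating \textit{Rake} and \textit{Compress} round by round (that could take $\Omega(n)$ iterations, e.g.\ on a long path hanging off a cycle). The edges that survive repeated raking are exactly those in the \emph{2-core} of $G'$. Concretely, the 2-core is the union of all edges lying on some cycle, together with the edges on paths connecting two such cyclic parts. Compressing then replaces each maximal path of degree-$2$ vertices of the 2-core by a single edge. At that point every remaining vertex has degree at least $3$, except in components that are a single cycle; those collapse to a cycle on which we stop, keeping one representative vertex. Topological-minor-ness is immediate, since we only delete edges and suppress degree-$2$ vertices. So the plan is to compute the 2-core, then the maximal degree-$2$ paths, and to show both steps and the subsequent simulation fit in $\widetilde{\mathcal{O}}(\sqrt n+D)$ rounds.

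\textbf{Step 1: computing the 2-core.} Take a spanning forest of $G'$; a connected-components/MST algorithm gives this in $\widetilde{\mathcal{O}}(\sqrt n+D)$ rounds. The rake-removed edges are the forest edges $e$ such that one side of $T\setminus e$ contains no endpoint of a non-tree edge, and we also remove recursively the pendant parts. This reduces to subtree aggregation: root each tree and compute, for every vertex, the number of non-tree-edge endpoints in its subtree. The edge to the parent is raked if this count is $0$, or if the count equals the total count of the tree, with the tree hanging elsewhere handled by a symmetric test. Subtree sums on possibly deep trees can be computed in $\widetilde{\mathcal{O}}(\sqrt n+D)$ rounds with the standard fragment decomposition used for MST and bridge finding: cut the trees into $\mathcal{O}(\sqrt n)$ fragments of diameter $\mathcal{O}(\sqrt n)$, aggregate inside fragments, and broadcast the fragment-level tree over a BFS tree of $G$ in $\mathcal{O}(\sqrt n+D)$ rounds. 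Rooting the trees uses the same technique.

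\textbf{Step 2: compressing and simulating $H$.} In the 2-core, every vertex knows its degree locally. Vertices of degree $\ge 3$ become nodes of $H$. The degree-$2$ vertices form disjoint paths, plus whole-cycle components, which are detected by absence of degree-$\ge 3$ vertices. Each path must learn its two endpoints, the IDs of its terminal branch vertices, and any aggregate, such as the minimum weight along it needed later for rounding. We handle this like Step 1. Paths of length $\le \sqrt n$ are resolved by $\sqrt n$ rounds of local flooding. Each longer path is cut into segments of length $\sqrt n$ by sampled or ID-based marker vertices. There are at most $\sqrt n$ markers, and the segment summaries travel along a BFS tree via pipelined broadcast in $\mathcal{O}(\sqrt n+D)$ rounds.

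Simulating one round of $H$ then consists of two kinds of communication. Messages along an original edge are direct. A message along a compressed edge is sent along its path: short paths are traversed in $\le\sqrt n$ rounds, and long-path messages travel through the same global broadcast of $\le\sqrt n$ segment messages. The total is $\widetilde{\mathcal{O}}(\sqrt n+D)$.

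\textbf{Main obstacle.} The hardest part is congestion in the global broadcasts when many long paths and deep tree fragments carry messages simultaneously. The count must charge every globally broadcast item to a segment of length $\ge\sqrt n$, and there are at most $\sqrt n$ such segments. I would make that counting argument carefully, and otherwise invoke the existing fragment-based primitives from the flow-rounding and Euler-tour literature.
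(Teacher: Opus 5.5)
Your route is correct, but it differs from the paper's. You compute the fixed point directly, while the paper actually iterates.

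\textbf{How the paper argues.} It implements each Rake+Compress round via \Cref{lem:compress}: a deterministic ruling set on the long degree-$2$ paths, followed by broadcasting the $\widetilde{\mathcal{O}}(\sqrt n)$-size skeleton of long paths. It bounds the number of rounds by $\lceil \log n\rceil + \mathcal{O}(1)$ with a counting argument. After the first Compress there are no degree-$2$ vertices before a Rake. So each leaf hangs off another leaf or off a vertex of degree $d\ge 3$. Such a vertex becomes a new leaf only after losing $d-1\ge 2$ leaves, so the number of leaves at least halves per iteration.

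\textbf{Your motivating remark is wrong.} A long path hanging off a cycle is contracted by one Compress and removed by the next Rake. The alternating procedure never needs $\Omega(n)$ iterations; only Rake alone would.

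\textbf{Comparison.} You characterize the raked edges exactly: the spanning-forest tree edges $e$ such that one side of $T\setminus e$ contains no endpoint of a non-tree edge, i.e., the complement of the $2$-core. You then perform a single Compress on the $2$-core, which is a genuine subgraph of $G$. This buys two things:
\begin{itemize}
\item an explicit description of the raked set $R_i$ used in \Cref{alg:cycle-rounding};
\item no nested compressed edges, since every edge of $H$ is a path of $G$ rather than a path of paths.
\end{itemize}
The price is heavier machinery: a spanning forest of $G'$ via MST-type algorithms, plus rooting and subtree aggregation over arbitrarily deep trees via fragment decomposition. The paper needs only the path primitive of \Cref{lem:compress} and an elementary counting argument.

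\textbf{Two points to tighten.}
\begin{itemize}
\item The statement is deterministic, so the markers on long paths must come from a deterministic ruling set with spacing $\sqrt n$ and domination radius $\widetilde{\mathcal{O}}(\sqrt n)$, as in the paper. Sampling would make the algorithm randomized, and ``ID-based'' has to be made precise in this way. The same segmentation is also what you need to detect and aggregate along long cycle components.
\item You should state explicitly why the terminal graph of the interleaved procedure equals ``$2$-core, then suppress degree-$2$ vertices''. Compress preserves the degrees of surviving vertices, and raking a compressed pendant edge deletes the whole pendant path of $G$. This makes your output coincide with the $H$, and with the raked set $R_i$, that the lemma and \Cref{alg:cycle-rounding} refer to.
\end{itemize}
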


We defer the proof of the above lemma to Appendix~\ref{subsec:preliminaries}. 

\subparagraph*{Step 2: Computing a low diameter decomposition.} 

The goal of our next step is to round cycles in~$H_i$. Since our subroutines for making the graph bridgeless (see \Cref{lm:Bridges}) and for finding a cycle cover in a bridgeless graph (see \Cref{lm:cycle_cover}) take time proportional to the diameter, we start by computing a \emph{low diameter decomposition} (LDD) of~$H_i$. 

\begin{definition}
    \label{def:ldd}
    Given an undirected, unweighted graph $G = (V, E)$, a $(\beta, d)$-low diameter decomposition is a partition of $V$ into subsets $S_1, \dots, S_\ell$ such that
\begin{itemize}
    \item The diameter of $G[S_i]$ is at most $d$ for each $i\in [\ell]$;
    \item The number of edges with endpoints belonging to different pieces is at most $\beta \cdot |E|$.
\end{itemize}
\end{definition}

Low diameter decompositions are well studied in the \CONGEST model~\cite{MillerPX13,RozhonG20,ChangG21,ForsterGV21,GhaffariGR21,RozhonEGH22,GhaffariGHIR23,RozhonHG23}.
We use a deterministic version stated in \cref{lm:LDD}, which has slightly worse bounds than its randomized counterparts, but simplifies our analysis and does not incur a higher asymptotic runtime. 

\begin{lemma}[\cite{ChangS20}]\label{lm:LDD}
    There exists a deterministic algorithm that computes a $(1/10,n^{o(1)})$-low diameter decomposition in $n^{o(1)}$ rounds of \CONGEST.   
\end{lemma}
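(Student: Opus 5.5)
The plan is to obtain the low diameter decomposition as a by-product of a deterministic \emph{expander decomposition}, which is the main result of~\cite{ChangS20}. Recall that an $(\varepsilon,\phi)$-expander decomposition of $G=(V,E)$ partitions $V$ into $S_1,\dots,S_\ell$ such that
\begin{itemize}
    \item at most $\varepsilon|E|$ edges have endpoints in different parts, and
    \item each induced subgraph $G[S_j]$ has conductance at least $\phi$.
\end{itemize}
The first property is exactly the second condition of \Cref{def:ldd} with $\beta=\varepsilon$. So we fix $\varepsilon=1/10$ and only need to argue about the diameter bound and the round complexity.

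\textbf{Step 1 (invoke the expander decomposition).} The deterministic algorithm of~\cite{ChangS20} computes an $(\varepsilon,\phi)$-expander decomposition with
\[
    \phi=\poly(\varepsilon)\cdot 2^{-O(\sqrt{\log n\log\log n})}
\]
in $\poly(1/\varepsilon)\cdot 2^{O(\sqrt{\log n\log\log n})}$ rounds of \CONGEST. For the constant $\varepsilon=1/10$, this gives $\phi=n^{-o(1)}$ and a round complexity of $n^{o(1)}$.

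\textbf{Step 2 (conductance implies small diameter).} We use the standard ball-growing argument on each cluster $H=G[S_j]$. Let $B_r$ be the ball of radius $r$ around any vertex $v$ in $H$, and suppose $\mathrm{vol}(B_r)\le \mathrm{vol}(H)/2$. Conductance at least $\phi$ says that at least $\phi\cdot\mathrm{vol}(B_r)$ edges leave $B_r$. Each leaving edge adds to the volume of $B_{r+1}$, so
\[
    \mathrm{vol}(B_{r+1})\ge(1+\phi)\,\mathrm{vol}(B_r).
\]
Starting from $\mathrm{vol}(B_0)\ge 1$, after $O(\phi^{-1}\log n)$ steps the ball of any vertex contains more than half of the total volume. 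Two such balls must intersect, so the diameter of $G[S_j]$ is $O(\phi^{-1}\log n)=n^{o(1)}$.

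\textbf{Step 3 (the degenerate clusters).} Singletons and isolated vertices trivially have diameter $0$, so they need no separate treatment.

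The main obstacle is the model detail rather than the combinatorics. We must check that the decomposition of~\cite{ChangS20} is indeed computed in \CONGEST on the network $G$ itself: every node must learn its cluster, and the cut bound must be with respect to $|E|$ and not the volume (these coincide up to a factor $2$, which is absorbed by running the decomposition with $\varepsilon=1/20$). We also need that the conductance guarantee refers to the induced subgraph $G[S_j]$, and not merely to some embedded or virtual expander. If the reference only guarantees expansion of $S_j$ inside $G$, the fallback is to use a deterministic network decomposition~\cite{RozhonG20,GhaffariGHIR23} in its strong-diameter version together with sequential ball carving within each color class. Ball carving with radius growth factor $(1+\beta)$ cuts at most a $\beta$-fraction of edges and yields strong diameter $O(\beta^{-1}\log n)$. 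The cost is polylogarithmic in the radius times the number of colors, which is still $n^{o(1)}$.
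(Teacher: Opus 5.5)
Your proposal is correct and is essentially the argument behind the paper's citation. The paper gives no proof and simply invokes \cite{ChangS20}, whose deterministic $(\varepsilon,\phi)$-expander decomposition, with $\varepsilon$ constant and $\phi=n^{-o(1)}$ in $n^{o(1)}$ rounds, guarantees conductance on the induced clusters $G[S_j]$. Your ball-growing bound of $O(\phi^{-1}\log n)=n^{o(1)}$ on the strong diameter is exactly what turns that into the stated low diameter decomposition. Because the guarantee is on the induced clusters, the network-decomposition fallback is unnecessary; as sketched, its cost claim would need more justification, but that does not affect your main route.
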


With~\cref{lm:LDD}, we partition the vertex set of $H_i$ into disjoint sets $S_1, \dots, S_\ell$ that induce clusters of small diameter with only few edges with endpoints in two different clusters.
For each cluster~$H_i[S_j]$, we carve out the set of bridges $B_j$ such that the remainder~$H_i[S_j] \setminus B_j$ is bridgeless. Using subroutines from existing work on edge-biconnectivity, we can compute the set of bridges efficiently enough for our use case.

\begin{lemma}[\cite{pritchard2005robust}]
    \label{lm:Bridges} Let $G$ be a graph of diameter $D$. There exists a deterministic algorithm that computes the set of bridges $B$ in $\Oo(D)$ rounds in \CONGEST.
\end{lemma}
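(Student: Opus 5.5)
The statement to prove is \Cref{lm:Bridges}: all bridges of a diameter-$D$ graph can be found deterministically in $\Oo(D)$ \CONGEST rounds. The approach is the classical tree-based characterization of bridges, as in Pritchard's work. Every bridge lies on every spanning tree, so it suffices to fix one rooted spanning tree $T$ and decide, for each tree edge $(v,\mathrm{parent}(v))$, whether some non-tree edge leaves the subtree $T_v$.

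\textbf{Step 1: BFS tree.} Build a BFS tree $T$ from a leader in $\Oo(D)$ rounds. The leader can be elected, or we take the minimum-ID node found by flooding in $\Oo(D)$ rounds. The depth of $T$ is at most $D$.

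\textbf{Step 2: DFS-free numbering.} We cannot afford a DFS preorder, since a DFS tree may have depth $n$. Instead we assign preorder-like interval labels on the BFS tree itself. A convergecast computes the subtree sizes $|T_v|$ in $\Oo(D)$ rounds. A downcast then hands each child a contiguous sub-interval of its parent's interval: the root takes $[1,n]$, and each node $v$ keeps label $\ell(v)$ for itself and passes consecutive blocks of sizes $|T_u|$ to its children $u$ in ID order. Each node can compute these offsets locally from the children's sizes it already received. This takes $\Oo(D)$ rounds. Afterwards each node $v$ knows the interval $I_v=[\ell(v),\ell(v)+|T_v|-1]$, and $x\in T_v$ if and only if $\ell(x)\in I_v$.

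\textbf{Step 3: Reachable labels and the bridge test.} In one round every node exchanges its label with all its neighbours. Each node $x$ computes $\mathrm{lo}(x)$ and $\mathrm{hi}(x)$, the minimum and maximum label over $\{x\}$ together with its neighbours via non-tree edges. A convergecast (min/max aggregation) in $\Oo(D)$ rounds yields $\mathrm{Lo}(v)=\min_{x\in T_v}\mathrm{lo}(x)$ and $\mathrm{Hi}(v)=\max_{x\in T_v}\mathrm{hi}(x)$. The tree edge $(v,\mathrm{parent}(v))$ is declared a bridge if and only if $[\mathrm{Lo}(v),\mathrm{Hi}(v)]\subseteq I_v$.

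\textbf{Correctness of the test.} Suppose some non-tree edge $xy$ has $x\in T_v$ and $y\notin T_v$. Then $\ell(y)\notin I_v$, so $\ell(y)$ lies below $\ell(v)$ or above the end of $I_v$, and the interval $[\mathrm{Lo}(v),\mathrm{Hi}(v)]$ sticks out of $I_v$. Conversely, every label that contributes to $\mathrm{Lo}(v)$ or $\mathrm{Hi}(v)$ is the label of a node in $T_v$ or of a non-tree neighbour of one. So if the interval sticks out of $I_v$, some non-tree edge leaves $T_v$. Non-tree edges are never bridges, and both endpoints of each edge know the outcome.

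\textbf{Main obstacle.} The main difficulty is avoiding a DFS, which Tarjan's low-link method relies on. The key check is that the interval/subtree-membership argument works for an arbitrary rooted spanning tree, not only for a DFS tree. In a BFS tree, non-tree edges may be cross edges, but the test above only asks whether a label lies outside $I_v$, so cross edges cause no trouble. All steps use $\Oo(\log n)$-bit messages and $\Oo(D)$ rounds, which gives the lemma.
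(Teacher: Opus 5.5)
Your proof is correct, but it is more self-contained than the paper's. The paper does not argue the statement directly. It invokes the edge-biconnectivity algorithm from Chapter~3 of Pritchard's thesis as a black box, and observes only that this algorithm needs a spanning tree, which a BFS computation supplies in $\Oo(D)$ rounds.

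You share that first step (a BFS tree of depth at most $D$), but then you open the black box. You use Tarjan's spanning-tree characterization: a tree edge $(v,\mathrm{parent}(v))$ is a bridge if and only if no non-tree edge leaves $T_v$. You detect this with preorder intervals computed from subtree sizes, followed by a single min/max convergecast over the labels of non-tree neighbours. Every step is a convergecast or downcast along a tree of depth at most $D$ carrying $\Oo(\log n)$-bit values, so the $\Oo(D)$ bound is immediate.

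The citation buys brevity and gives more than is needed here, namely the full edge-biconnected components. Your version buys an argument that can be checked directly in the paper's setting. For instance, it clearly remains valid for multigraphs: a parallel non-tree copy of a tree edge correctly stops that edge from being declared a bridge. This matters because the paper applies the lemma to the compressed graphs $H_i[S_j]$, which may contain parallel edges.
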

\begin{proof}
    In Chapter 3 of~\cite{pritchard2005robust} a \CONGEST algorithm for edge-biconnectivity is presented. The objective of the edge-biconnectivity problem is to identify all bridges and biconnected components of a given graph. Thus, we can extract all bridges from this algorithm. Notably, the algorithm requires a spanning tree as input, but we can compute a BFS tree beforehand in $\Oo(D)$ rounds in \CONGEST.
\end{proof}

\subparagraph*{Step 3: Making each component bridgeless.} 

Using \cref{lm:Bridges}, we can compute the set of all bridges $B_{i,j}$ of $H_i[S_j]$, and the remainder~$H_i[S_j] \setminus B_{i,j}$ is a bridgeless graph, which we can cover by a union of (not necessarily disjoint) short cycles. Moreover, the diameter of (the connected components of) a graph does not increase after removing all its bridges.

\begin{lemma}
    \label{lm:bridgeremoval}
    Let $G=(V,E)$ be a connected graph of diameter $D$. Let $B$ be the set of bridges of $G$. Then the diameter of each connected component of $G'=(V,E\setminus B)$ is upper bounded by $D$.
\end{lemma}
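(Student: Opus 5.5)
The plan is to show that any shortest path in $G$ between two vertices $u,v$ lying in the same connected component of $G'$ uses no bridge. Then that path survives in $G'$, so $\mathrm{dist}_{G'}(u,v)=\mathrm{dist}_G(u,v)\le D$. This gives the diameter bound directly, without tracking how distances change when edges are removed.

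The key step is the following claim: if $u$ and $v$ lie in the same component of $G'$, then no $u$–$v$ path in $G$ (simple or not) can cross a bridge $e=\{a,b\}$ exactly once. To see this, recall that removing $e$ splits $G$ into two components $A\ni a$ and $B\ni b$. Every bridge-free path stays inside one side of every bridge. Since $u$ and $v$ are joined in $G'$ by a path avoiding all bridges, they lie on the same side of $e$.

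Now consider a walk from $u$ to $v$. Each traversal of $e$ switches sides between $A$ and $B$. Because the walk starts and ends on the same side, it traverses $e$ an even number of times. A \emph{simple} path traverses each edge at most once, so a shortest (hence simple) $u$–$v$ path in $G$ traverses $e$ zero times. Applying this to every $e\in B$ shows that a shortest $u$–$v$ path in $G$ lies entirely in $E\setminus B$. Its length is at most $D$, and the claim follows.

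The only subtle point is making the side argument precise, namely that $G-e$ has exactly two components when $e$ is a bridge of a connected graph. This follows immediately from the definition of a bridge together with the fact that removing one edge increases the number of components by at most one. I do not expect a real obstacle; the argument is short.
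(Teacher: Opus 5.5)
Your proof is correct and follows essentially the paper's argument: a shortest $u$--$v$ path in $G$ either avoids all bridges, and then survives in $G'$, or crosses a bridge, and then $u$ and $v$ lie in different components of $G'$. You argue the contrapositive and spell out the ``same side of every bridge'' parity step, which the paper leaves implicit in the phrase ``by the definition of a bridge.''
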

\begin{proof}
    Let $v,w\in V$. We will show that $\mathrm{dist}_{G'}(v,w) = \mathrm{dist}_G(v,w)$ or $\mathrm{dist}_{G'}(v,w)= \infty$, i.e., vertices~$v,w$ end up in different connected components of $G'$. 
    Consider the shortest path between $v$ and $w$. Either it uses no bridge---which is unaffected by the removal of $B$, implying $\mathrm{dist}_{G'}(v,w) = \mathrm{dist}_G(v,w)$---or it uses a bridge $b\in B$ which is removed. Then there exists no other $v,w$ path by the definition of a bridge, which implies $\mathrm{dist}_{G'}(v,w)=\infty$.
\end{proof}

\subparagraph*{Step 4: Computing a large set of disjoint cycles.} 

Since each component is now bridgeless, all the remaining edges can be covered by cycles. We will round the edges on such a cycle. However, in the worst case, only one edge disappears in each iteration. Hence, we would like the cycles to be short. This is captured in the following \emph{cycle cover} definition.

\begin{definition}
    \label{def:cycle_cover}
    A \emph{$(d,c)$-cycle cover} is a collection of cycles $\cycles$ in a graph $G$ such that 
    \begin{enumerate}
        \item Each edge is contained in at least one cycle; 
        \item Each edge is contained in at most $c$ cycles; and 
        \item Each cycle is of length at most $d$. 
    \end{enumerate}
\end{definition}

Parter and Yogev~\cite{ParterY19} provide an algorithm that computes $(d,c)$-cycle covers in a bridgeless graph. Their result was later derandomized by Rozhoň, Grunau, Haeupler, Zuzic, and Li~\cite{RozhonGHZL22}, giving the following lemma. 

\begin{lemma}[\cite{RozhonGHZL22}]\label{lm:cycle_cover}
    Given a bridgeless $n$-vertex graph $G = (V, E)$ with diameter $D$, there exists a deterministic \CONGEST algorithm that
    computes a $(d, c)$-cycle cover $\cycles$ with $d = n^{o(1)} \cdot D$ and $c = n^{o(1)}$ in $n^{o(1)} \cdot D$ rounds.
\end{lemma}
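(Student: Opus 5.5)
The statement is imported from \cite{RozhonGHZL22}, so the plan is to follow the Parter--Yogev construction \cite{ParterY19} and replace its randomized ingredients by deterministic ones, in particular the deterministic decomposition of \cref{lm:LDD}. I would fix a BFS tree $T$ of $G$, of depth at most $D$, computed in $\Oo(D)$ rounds. The edge set then splits into tree edges and non-tree edges, and I would cover the two classes separately. The cycle cover $\cycles$ is the union of the two resulting cycle collections, so the length and congestion bounds simply add.

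\emph{Covering non-tree edges.} Every non-tree edge $e=\{u,v\}$ closes a fundamental cycle $e\cup T[u,v]$ of length at most $2D+1$. Using all of these cycles directly would make the congestion on tree edges unbounded. To avoid this, I would partition the non-tree edges into $\Oo(\log n)$ degree classes and, within each class, process them recursively with a cluster decomposition. Inside every cluster of small diameter $d=n^{o(1)}$, most non-tree edges are covered by short cycles through a local BFS tree of that cluster. The remaining edges, which join distinct clusters, are handled on the contracted cluster graph, which has many fewer edges. Each recursion level multiplies length and congestion by $n^{o(1)}$ factors, and after $\Oo(\sqrt{\log n})$ levels of a suitably chosen parameter schedule the totals are $d=n^{o(1)}\cdot D$ and $c=n^{o(1)}$. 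The only randomized step in \cite{ParterY19} is the sampling used for these clusterings, and I would replace it by the deterministic LDD of \cref{lm:LDD}.

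\emph{Covering tree edges.} Since $G$ is bridgeless, every tree edge $t$ has a swap edge, i.e.\ a non-tree edge whose fundamental cycle contains $t$. I would choose, for each vertex, the swap edge whose endpoint reaches highest in $T$; this is computable by a bottom-up aggregation in $\Oo(D)$ rounds. Taking all these fundamental cycles again overloads tree edges. I would fix this with the standard trick of working on each root-to-leaf path of $T$ separately. Along each such path, I keep only a subset of swap edges whose covered intervals overlap only $\Oo(1)$ times, which is an interval-cover selection that can be done greedily. The $\Oo(\log n)$ heavy-light path layers then give congestion $\Oo(\log n)$ on tree edges, and each resulting cycle has length $\Oo(D)$. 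The swap edges used here are themselves non-tree edges, so their total extra load must also be accounted for; I would bound it by the same interval overlap argument.

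The main obstacle is the first part: the recursive low-congestion covering of non-tree edges, together with its deterministic implementation in $n^{o(1)}\cdot D$ rounds. Both the $n^{o(1)}$ blow-up in cycle length and the round complexity come from balancing recursion depth against the per-level cluster diameter. I would import this balancing directly from the derandomization in \cite{RozhonGHZL22} rather than re-derive it.
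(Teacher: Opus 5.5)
The paper gives no proof of this lemma. It imports the statement as a black box from \cite{RozhonGHZL22}, a derandomization of \cite{ParterY19}. You also defer the core balancing to that paper, so your argument rests on the same citation; the problem is that the parts you reconstruct yourself are not correct.

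\textbf{Tree edges.} This step has a genuine gap. Your congestion count charges each selected fundamental cycle only for its segment on the heavy (root-to-leaf) path it was selected for. It never charges the segment from the LCA down to the far endpoint $y$ of the swap edge, and that segment can be shared by polynomially many selected cycles, whatever the selection rule.

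Here is a counterexample. Take a root $\ell$ with children $c_0,\dots,c_r$. Below each $c_i$ hang a path of length $h$ ending in $x_i$ for $i\ge 1$, and in $y$ for $i=0$. Add the edges $\{x_i,y\}$.
\begin{itemize}
\item The graph is bridgeless, and its BFS tree from $\ell$ consists of all edges except the $\{x_i,y\}$. Moreover $D=\Theta(h)$ and $n=\Theta(rh)$.
\item Every tree edge on the $i$-th path ($i\ge1$) has $\{x_i,y\}$ as its \emph{only} swap edge, and that fundamental cycle traverses the whole $0$-th path.
\item Hence each edge of the $0$-th path lies on $r=\Theta(n/D)$ of your cycles, e.g.\ $\Theta(\sqrt n)$ for $D=\sqrt n$, not $n^{o(1)}$.
\end{itemize}
A good cover does exist, namely the cycles $\ell,c_i,\dots,x_i,y,x_{i+1},\dots,c_{i+1},\ell$. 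But each of those cycles uses two non-tree edges, so no scheme built from fundamental cycles of swap edges can reach the bound.

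\textbf{Non-tree edges.} This step has the same issue in another guise. Covering intra-cluster non-tree edges by fundamental cycles of a local BFS tree only makes the cycles short. Length was never the problem: global fundamental cycles already have length at most $2D+1$. Congestion is the problem, and it is just as bad inside a single cluster. For example, let $a,b$ be siblings under $\ell$ with tree-children $a_1,\dots,a_k$ and $b_1,\dots,b_k$, and add all $k^2$ non-tree edges $\{a_i,b_j\}$. This has diameter $O(1)$, and with the tree rooted at $\ell$ all $k^2$ fundamental cycles use $\{\ell,a\}$ and $\{\ell,b\}$.

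So your claim that length and congestion grow by only $n^{o(1)}$ per recursion level is asserted, not derived. Likewise, the claim that plugging in \Cref{lm:LDD} is all the derandomization needs is unsupported. The sound way to establish the statement is simply to invoke the theorem of \cite{RozhonGHZL22}, as the paper does. You should drop the heavy-light interval argument, which is false as stated.
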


As $H_i[S_j] \setminus B_{i,j}$ is bridgeless by construction, we can compute a cycle cover~$\mathcal U_{i,j}$ with \cref{lm:cycle_cover}.
To find a large set of \emph{disjoint} cycles that we can round in parallel, we compute a maximal independent set (MIS) on a virtual graph $\hat{G}$ of cycles, where two cycles are connected if they share an edge. 
The MIS algorithm from Censor-Hillel, Parter and Schwartzman~\cite{Censor-HillelPS20} combined with the network decomposition from Ghaffari, Grunau, and Rozhon~\cite{GhaffariGR21} gives the following result. 

\begin{lemma}[\cite{Censor-HillelPS20,GhaffariGR21}]\label{lem:MIS}
    Given an $n$-vertex graph $G = (V, E)$, there exists a deterministic \CONGEST algorithm that
    computes a maximal independent set in $\Oo(\log^5 n)$ rounds. 
\end{lemma}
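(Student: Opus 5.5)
The statement is a black-box combination of two cited results. \cite{GhaffariGR21} provides a deterministic network decomposition, and \cite{Censor-HillelPS20} provides the derandomization of MIS in \CONGEST. I would prove it by composing them in the standard way.

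The first step is to compute a network decomposition with $\Oo(\log n)$ color classes. Each cluster must have weak diameter $\Oo(\log^2 n)$, and it must come with a Steiner tree of congestion $\Oo(\log n)$. By \cite{GhaffariGR21} this takes $\Oo(\log^5 n)$ deterministic rounds in \CONGEST; I would invoke their bound as stated.

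The second step processes the color classes one after another. When handling a class, every cluster in it simultaneously computes an MIS among its vertices that are not yet dominated, using the already fixed MIS vertices from earlier classes to discard neighbours. Since clusters of the same color are non-adjacent, the partial solutions combine into an independent set. Maximality holds because every vertex is processed in its own class: at that point it either enters the set or has a neighbour in it.

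Inside a cluster I would not run a randomized algorithm. Instead I would use the method of conditional expectations from \cite{Censor-HillelPS20}. Ghaffari's MIS round uses $\Oo(\log n)$-wise (or pairwise) independent seeds of $\Oo(\log n)$ bits. The seed is fixed bit by bit, and the conditional expected progress is aggregated over the cluster's Steiner tree. Each aggregation costs time equal to the tree depth times the congestion, namely $\Oo(\log^2 n)\cdot\Oo(\log n)$ when pipelined. One phase removes a constant fraction of the edges in expectation, so $\Oo(\log n)$ phases finish the cluster.

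The main obstacle is the bookkeeping needed to keep the total at $\Oo(\log^5 n)$. Multiplying the factors naively gives: color classes, times phases, times seed bits, times aggregation cost. I expect this product to exceed $\log^5 n$. My first attempt would be to pipeline the fixing of seed bits along the Steiner trees, and to rely on the refined per-class cost analysis of \cite{GhaffariGR21}. Rather than rederiving that analysis, I would cite the combined bound as stated in those works, which is how this lemma is used.
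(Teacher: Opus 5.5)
Your proposal matches the paper, which gives no separate argument and obtains this lemma simply by running the derandomized MIS of Censor-Hillel, Parter, and Schwartzman on top of the deterministic network decomposition of Ghaffari, Grunau, and Rozho\v{n}. You defer the final $O(\log^5 n)$ accounting to those works, and the paper does exactly the same; your own factor-by-factor product does not by itself establish that exponent, so the bound rests on the citation in both.
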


As the last step of each iteration, we round all disjoint cycles in the MIS in parallel. 
Note that our rounding procedure guarantees that at least one edge in each cycle becomes integral.

\subsection{Full Algorithm and Proof}\label{subsubsec:proof_cycle_rounding}

Putting all the pieces together, we prove the correctness of our cycle rounding algorithm.

\begin{proof}[Proof of \cref{lem:cycle-elimination}]
Let $G_0$ denote the subgraph of the gift graph $G$ induced by all the edges~$e \in E$ with $w(e) \notin \mathbb{N}$.
Starting from $G_0$, we describe an iterative process that efficiently finds a large disjoint set of cycles in the current graph $G_i$ and rounds the weights of each cycle in parallel such that the weight of at least one edge on each cycle becomes integral.
Concurrently, we grow a set of edges $F_i$, which are not part of any cycle in the current residual graph $G_i$ and therefore can no longer be rounded to integer values by our procedure.
Initially, we set~$F_0$ to be empty.
We maintain the invariant that $F_i$ is a forest throughout our algorithm. Further, we show that after $k = \log m \cdot n^{o(1)}$ iterations, the graph $G_k$ is empty and the only edges with non-integral weights are contained in $F_k$, inducing a forest.
Each iteration can be performed in $\widehat{\mathcal{O}}(\sqrt{ n } + D)$ rounds of \CONGEST.
For a high-level overview of our algorithm, see \Cref{alg:cycle-rounding}.

\proofsubparagraph{Step 1: Compressing the graph.}
The goal of this step is to compress the graph $G_i$ to a denser version $H_i$ without long paths.
For this purpose, we iteratively apply the following two operations, which alternately remove all vertices of degree one and all vertices of degree two from $G_i$.
More precisely, we give an efficient implementation of the following two operations:
\begin{itemize}
    \item \textit{Rake:} Remove all vertices of degree at most one and all incident edges.
    \item \textit{Compress:} Replace each maximal path $P = v_0,v_1,\dots,v_\ell$ such that $\mathrm{deg}(v_i) = 2$ for all $i = 1,\dots, \ell - 1$ with a single edge $(v_0,v_\ell)$.
\end{itemize}
In addition, every time we compress a path $P$ into a single virtual edge $e_P$ we find $e_\mathrm{min} := \mathrm{argmin}_{e \in P} \min \{w(e), 1 - w(e) \}$ and set the weight of $e_P$ to be $w(e_P) := w(e_\mathrm{min})$.
Further, every edge $e \in P$ stores for both endpoints $v_0$ and $v_\ell$ of $P$ whether the number of original edges in $G$ between $e$ and $v_0$ and $v_\ell$ is even or odd.
After $\mathcal{O}(\log n)$ iterations of \textit{Rake} and \textit{Compress}, we arrive at a topological minor $H_i$ of $G_i$ such that each component in $H_i$ is either just a single cycle (of length two) or it has minimum degree at least three according to \Cref{lem:rake_and_compress}.
Each component that consists of just a single cycle immediately gets added to the set of disjoint cycles that we round at the end of the iteration.
Therefore, for the remainder of this proof we may assume that the minimum degree of $H_i$ is at least three.
The total round complexity across all iterations is $\widetilde{\mathcal{O}}(\sqrt{ n } + D)$.
Note that when we round a contracted edge, we can also forward the weight change to all edges on the path in $\Otilde(\sqrt n+D)$ rounds, for all contracted edges in parallel using \Cref{lem:rake_and_compress}.
Let $n_i$ and $m_i$ denote the number of vertices and edges in $H_i$, respectively.
Since $H_i$ has minimum degree at least three, we get that $m_i \geq 3/2 \cdot n_i$.
This step is crucial to ensure that the removal of up to $m_i/10$ edges with endpoints in two different clusters of the low diameter decomposition cannot remove all cycles from $H_i$ and we can still find enough cycles in each component of the LDD.

\proofsubparagraph{Step 2: Computing a low diameter decomposition.}
Next, we compute a low diameter decomposition of $H_i$ using \Cref{lm:LDD}, that decomposes $H_i$ into parts $S_1, \dots, S_\ell$ with $(n_i)^{o(1)}$ diameter and at most $m_i / 10$ edges between the clusters. 
Computing the low diameter decomposition takes $n^{o(1)}$ rounds on $H_i$ and according to \Cref{lem:rake_and_compress} each round on $H_i$ can be simulated in $\widetilde{\mathcal{O}}(\sqrt{ n } + D)$ rounds on $G$.
Thus we get a total runtime of $\widehat{\mathcal{O}}(\sqrt{ n } + D)$ for this step.
Algorithmically, we continue to work on each of the clusters $S_j$ in parallel---in the analysis, we consider all clusters simultaneously. 

\proofsubparagraph{Step 3: Making each component bridgeless.}
Using \Cref{lm:Bridges}, we compute the set of bridges $B_{i,1},\dots,B_{i,\ell}$ of $H_i[S_1],\dots,H_i[S_\ell]$ in parallel in $n_i^{o(1)}$ rounds.
From \Cref{lem:rake_and_compress} follows that this takes $n^{o(1)}\cdot \Otilde(\sqrt n+D)$ rounds on the communication network $G$. As a consequence, the graph $H_i[S_j] \setminus B_{i,j}$ is bridgeless.
Let $B_i := \bigcup_{j = 1}^{\ell} B_{i,j}$, then $B_i \subseteq H_i$ is a forest.

\proofsubparagraph{Step 4: Computing a large set of disjoint cycles.}
For each $H_i[S_j] \setminus B_{i,j}$, we compute a $(d,c)$-cycle cover $\cycles_{i,j}$ with $c=d=n^{o(1)}$ by using \Cref{lm:cycle_cover} on each connected component in $n^{o(1)}$ rounds, since the diameter of each connected component is bounded by $n^{o(1)}$ due to \Cref{lm:bridgeremoval}. Again by \Cref{lem:rake_and_compress} this takes $\Oo(n^{o(1)}\cdot(\sqrt n+D)) = \widehat{\mathcal{O}}(\sqrt n+D)$ rounds on the communication network $G$. 
To obtain a large set of disjoint cycles, we compute a maximal independent set on these cycles. 
For this purpose, we construct a virtual graph~$\widehat{G}$ of the cycles in $\cycles_{i,j}$, where two cycles are connected if they share an edge. 
Now we run a deterministic $O(\poly \log n)$-round \CONGEST algorithm that computes a maximal independent set, see \Cref{lem:MIS} on $\widehat{G}$.
Note that sending messages along the cycles incurs an additional multiplicative simulation overhead of $c\cdot d = n^{o(1)}$.
Combined with the $\widetilde{\mathcal{O}}(\sqrt{ n } + D)$ multiplicative overhead of simulating one round on $H_i$, this leads to an overall runtime of $\widehat{\mathcal{O}}(\sqrt{ n } + D)$ for this step.

Parter and Yogev~\cite{ParterY19} show that a maximal set of cycles covers an $\Omega(\tfrac{1}{cd})$-fraction of the edges. 
We repeat the argument here for completeness: 
Each cycle $C \in \cycles_{i,j}$ intersects at most~$|C| \cdot c$ cycles in $\cycles_{i,j}$. 
Thus, when we add the cycle $C$ into the collection $\mathcal D$ we cover $|C|$ many edges. 
When we remove the $|C| \cdot c$ cycles that intersect with $C$, we also remove the cover of at most $|C| \cdot c \cdot d$ many edges in $H_i$ that appear on these cycles. 
That is, at each iteration, the ratio between the number of edges we cover and the number of edges that the covering cycle removed is at least $1/dc$.

In order to round a single cycle $C$, we first traverse the cycle to identify the edge $e_\mathrm{min} \in C$ minimizing the minimum of $w_{i-1}(e)$ and $1 - w_{i-1}(e)$.
Let~$w_\mathrm{min}(C)$ denote this minimum value.
Next, starting from $e_\mathrm{min}$, we alternately increase and decrease the weights along the cycle by $w_\mathrm{min}(C)$ depending on whether the weight at $e_\mathrm{min}$ is closer to zero or one.
For each compressed edge $e_P$, every edge $e'$ on the corresponding path $P$ knows the parity of the number of edges between $e'$ and any endpoint of $P$ and can thus update its weight accordingly.
Whenever the weight of one edge in the path $P$ reaches either zero or one, we can safely remove the entire path $P$ as it can no longer be a part of any cycle in the remaining graph.
Note that even though the compressed graph $H_i$ might contain odd cycles, the corresponding cycle in $G$ obtained by unpacking all edges on compressed paths will always be even.
Further, all compressed edges can broadcast a message to all edges of their respective paths in parallel in $\widetilde{\mathcal{O}}(\sqrt{ n } + D)$ rounds on $G$.
Note that by definition, the updated weights $w_i$ remain in the interval $[0,1]$.
Whenever the weight of an edge reaches either zero or one, we remove it from~$H_i$.
Since the cycles are edge-disjoint, we can round all cycles in parallel. 
Furthermore, each cycle has a length of at most $n^{o(1)}$ in $H_i$ and thus this procedure takes $n^{o(1)}$ rounds on~$H_i$, which can be simulated in $\widehat{\mathcal{O}}(\sqrt{ n } + D)$ rounds on the communication network $G$, again using \Cref{lem:rake_and_compress}.

Further, we update $F_{i} := F_{i-1} \cup R_i$, where $R_i$ denotes the set of all edges in $G$ that got removed by any \textit{Rake} operation in this iteration.
Now we show that the set $F_{i}$ still induces a forest in $G$.
Note that for each edge $e \in R_i$, we know that $e$ does not lie on a cycle in $G_i$.
Now suppose there was an edge $e \in R_i$ that closed a cycle $C$ in $F_{i}$.
Clearly then this cycle must include at least one edge that is not included in $R_i$, since the set $R_i$ itself is acyclic.
From all those edges, we pick the edge $e'$ that was added to $F_j$ in the earliest iteration~$j < i$.
However, this implies that $C \subseteq G_{j-1}$.
Therefore, $e'$ cannot have been added to $F_j$ in iteration~$j$.
This completes the proof that $F_{i}$ remains indeed a forest.

Finally, we let $G_{i+1}$ be the graph induced by the set of edges $e$ such that $w_i(e) \notin \{0, 1\}$ without the edges in $F_i$.
Note that $G_{i+1}$ contains again the edges with endpoints in two different clusters in the low diameter decomposition, since their weights did not change in this iteration, as well as the set of bridges $B_i$.
This is because while each bridge $e \in B_{i,j}$ does not lie on a cycle in its cluster $H_i[S_j]$, it might still lie on a cycle in the graph $H_i$.

\proofsubparagraph{Quantifying the progress made in each iteration.}
We measure the progress that we make in each iteration by the number of edges $m_i$ remaining in the compressed residual graph~$H_i$.
Initially we have that $m_1 \leq m$.
The number of edges removed in each iteration is at least the number of disjoint cycles we construct in the previous step.
Observe that
\begin{equation*}
    \sum_{j = 1}^{\ell} \lvert E[ H_i[S_j] \setminus B_{i,j}] \rvert \geq \tfrac{9}{10} \cdot m_i - \lvert B_i \rvert.
\end{equation*} 
Since at least a $1/n^{o(1)}$-fraction of all edges in each cluster is covered by a cycle, we get that at least $(\tfrac{9}{10} \cdot m_i - |B_i|)/n_i^{o(1)}$ edges are covered by a cycle across all clusters $H_i[S_j] \setminus B_{i,j}$.
Since each cycle has a length of at most $n_i^{o(1)}$, this implies that at least $(\tfrac{9}{10} \cdot m_i - |B_i|)/n_i^{o(1)}$ edge weights are rounded to integral values in this iteration.
Since $B_i \subseteq H_i$ is a forest, we have that $\lvert B_i \rvert \leq n_i = \lvert V(H_i) \rvert $.
Our compression step using \textit{Rake} and \textit{Compress} guarantees that $m_i \ge \tfrac{3}{2}n_i$, or equivalently $n_i \le \tfrac{2}{3}m_i$ and hence 
\[
    \frac{9}{10}m_i - |B_i| \geq \frac{9}{10} m_i - n_i > \frac{1}{5} m_i.
\] 
As $G_{i+1}$ is a subgraph of $G_i$, it can be shown inductively that the compressed graph $H_{i+1}$ obtained by iteratively applying \textit{Rake} and \textit{Compress} is always a topological minor of $H_i$. Therefore, any edge in $H_{i+1}$ corresponds to at least one edge in $H_i$.
Furthermore, each edge from $H_i$ whose weight got rounded to zero or one in this iteration cannot appear in $H_{i+1}$ again.
Therefore, we have that 
\[
    m_{i+1} := |E(H_{i+1})| \leq m_i - \frac{1}{5 n_i^{o(1)}} \cdot m_i \leq m_i \cdot \left(1 - \frac{1}{n^{o(1)}}\right).
\]
Consequently, the number of remaining edges in $H_i$ always decreases by a factor of at least~$1/(1 - 1/n^{o(1)})$.
Thus, after $k = \log m \cdot n^{o(1)}$ iterations this procedure terminates with an empty graph $G_k$ and the only edges with non-integral weights remaining are the edges in the forest $F_k$.
\end{proof}

\newpage
\section{Distributed Santa Claus}
\label{sec:Santa_Claus}

Throughout this section, we show the following approximation for the Santa Claus problem.

\SantaApproximation*

We start by presenting the non-trivial LP formulation of the decision variant for the Santa Claus problem in \Cref{subsec:santa-lp-formulation}.
Then we compute an initial fractional assignment of gifts to children by applying our mixed packing and covering LP solver from \Cref{thm:simplified-mpc-solver}.
In \Cref{subsec:santa-integral-gift-assignment}, we show how to round this initial fractional solution to an integral solution using randomized selection and a procedure to round a forest efficiently---crucially, this requires two applications of the cycle rounding algorithm of \Cref{lem:cycle-elimination} from \Cref{subsec:cycle-elimination}.
Finally, we put all the pieces together in \Cref{subsec:santa-approx-factor} and show that the \CONGEST algorithm described throughout this section has \whp an approximation factor of $\alpha=\SantaApproxFactor$.

\subsection{Santa Claus LP}
\label{subsec:santa-lp-formulation}

Let~$\OPT$ denote the optimal value of the Santa Claus problem.
By a standard binary search framework, the following decision variant of the Santa Claus problem is equivalent to an $\alpha$-approximation algorithm for some $\alpha > 1$:
Given a threshold~$T \ge 0$, either compute a solution of value~$T/\alpha$ or determine that~$\OPT < T$.
Intuitively, the threshold~$T$ can be thought of as the optimal value that was found by binary search.
Throughout, we assume that~$T$ is known, and partition the set of gifts into \emph{big gifts}~$\cB := \{ g \in \cG : v_g \geq T/\alpha\}$ and \emph{small gifts}~$\cS := \{ g \in \cG : v_g < T/\alpha\}$. 
Originally introduced by \cite{LenstraST90}, this distinction has become the standard in the literature on Santa Claus.
Importantly, as we are showing an $\alpha$-approximation, any child that receives one big gift is satisfied and does \emph{not} require any more gifts.
Due to that insight, we implicitly assume that all big gifts have a value of~$T$ for the rest of the section.

In order to compute a fractional solution, we use the following non-trivial LP of Davies, Rothvoss and Zhang \cite{DaviesRZ20} (this expanded version was first used in~\cite{RohwedderRW25}) that we refer to as \LP throughout.
\begin{align}
    \sum_{s \in \cS: (c,s) \in E} v_{s}\cdot y_{cs} &\geq T \cdot \left(1 - \sum_{b \in \cB: (c,b) \in E} x_{cb}\right) &&\forall c \in \cC \label{eq:lp-objective-value} \\
    y_{cs} &\leq 1 - \sum_{b \in \cB: (c,b) \in E}^{} x_{cb} &&\forall c \in \cC, s \in \cS \label{eq:lp-gifts-sum-up-to-one} \\
    \sum_{c \in \cC}^{} x_{cb} &\leq 1 &&\forall b \in \cB \label{eq:lp-big-gift-assigned-to-one-child} \\
    \sum_{b \in \cB}^{} x_{cb} &\leq 1 &&\forall c \in \cC \label{eq:lp-child-gets-only-one-big-gift} \\
    \sum_{c \in \cC}^{} y_{cs} &\leq 1 &&\forall s \in \cS \label{eq:lp-small-gift-assigned-to-beta-many-children} \\
    x_{cb}, y_{cs}  &\geq 0 &&\forall c \in \cC, b \in \cB, s \in \cS  \label{eq:lp-nonnegativity}
\end{align}
During the following explanation of the constraints, it is helpful to think of an integral solution:
Constraints~(\ref{eq:lp-big-gift-assigned-to-one-child}) and~(\ref{eq:lp-child-gets-only-one-big-gift}) ensure that each big gift~$b$ is assigned to at most one child~$c$ and vice versa.
If child~$c$ is assigned a big gift~$b$, then $y_{cs} = 0$ and $c$ does not receive any small gifts as enforced by Constraint~(\ref{eq:lp-gifts-sum-up-to-one}).
Otherwise, we have~$y_{cs} \le 1$ and~$c$ is compensated in small gifts only.
The same argument applies to Constraint~(\ref{eq:lp-objective-value}), where the right-hand side is zero if~$c$ is assigned a big gift or~$T$ if it is not.
In the latter case, child~$c$ receives at least a value of~$T$ in small gifts.
Constraint~(\ref{eq:lp-small-gift-assigned-to-beta-many-children}) guarantees that each small gift is assigned at most once.

Additionally, Constraints~(\ref{eq:lp-objective-value}) and~(\ref{eq:lp-gifts-sum-up-to-one}) enforce the fundamental property that any child~$c \in \cC$ that receives small gifts with~$y_c \neq 0^{\abs{\cS}}$ is always assigned more than~$\alpha$ many gifts.
More formally, let $\cS_c := \{ s \in \cS: y_{cs} > 0 \}  \subseteq \cS$ denote the set of small gifts that~$c$ receives.
From the definition of the small gifts and Constraints~(\ref{eq:lp-gifts-sum-up-to-one}) and~(\ref{eq:lp-objective-value}) it follows that 
\begin{equation*}
    \sum_{s \in \cS_c} v_{s}\cdot y_{cs} 
    < \sum_{s \in \cS_c} T/\alpha\cdot \left(1 - \sum_{b \in \cB} x_{cb}\right) 
    = \frac{\abs{\mathcal{S}_c}}{\alpha} \cdot T \cdot \left(1 - \sum_{b \in \cB} x_{cb}\right)
    \leq \frac{\abs{\mathcal{S}_c}}{\alpha} \cdot \sum_{s \in \cS_c} v_{s}\cdot y_{cs}.
\end{equation*}
Consequently, any child~$c$ with~$y_c \neq 0^{\abs{\cS}}$ is assigned strictly more than~$\alpha$ many small gifts.
Thus, it is always possible to scale the small gift assignment of~$c$ to be of value~$T$, \ie we set
\begin{equation*}
    y'_c = \frac{1}{1 - \sum_{b \in \cB} x_{cb}} \cdot y_c \le 1^{\abs{\cS}}
\end{equation*}
such that~$\sum_{s \in \cS_c} v_{s}\cdot y'_{cs} \ge T$.
While~$y'$ is not necessarily a feasible solution to \LP anymore, we use this property to compute our approximation.

Given a fractional solution $(x,y) \in [0,1]^{\abs{\cC}\cdot\abs{\cG}}$ to \LP, the vectors $x \in [0,1]^{\abs{\cC}\cdot\abs{\cB}}$ and~$y \in [0,1]^{\abs{\cC}\cdot\abs{\cS}}$ denote the assignments for the big and small gifts, respectively.
In addition, let~$\Gx{x} = (\cC \cup \cB, E_x)$ denote the big gift graph between children~$c \in \cC$ and big gifts~$b \in \cB$ where an edge~$\set{c,b} \in E_x$ of weight~$x_{cb}$ is present if~$x_{cb} > 0$.
Similarly, we define the small gift graph~$\Gx{y} = (\cC \cup \cS, E_y)$ for~$c \in \cC$ and~$s \in \cS$ where each edge~$\set{c,s} \in E_y$ of weight~$y_{cs}$ is present if~$y_{cs} > 0$.

\subsection{Computing an Integral Gift Assignment}
\label{subsec:santa-integral-gift-assignment}

Starting from an (approximately) optimal fractional solution~$(x,y)$ to \LP, we show how to compute an integral gift assignment~$(X,Z)$ such that each child receives a value of~$T/\alpha$ throughout this section. As seen in \Cref{rem:finite_precision_congest}, we can assume the fractional input solution to be represented with precision $1/\poly n$. Throughout this section, rounding any intermediate value to precision $1/\poly n$ does not affect our $\Oo(\log n/ \log \log n)$ approximation guarantee. Therefore, we can represent any value with $\Oo(\log n)$ bits and can implement the algorithm in the \CONGEST model.

\subsubsection{Big Gift Clustering}
\label{subsubsec:santa-big-gift-clustering}

As a first step, we transform an (approximately) optimal fractional solution~$(x,y)$ to \LP into another feasible fractional solution~$(x',y)$ that corresponds to an acyclic subgraph~$\Gx{x'} \subseteq \Gx{x}$.
Similar to Bansal and Sviridenko~\cite{BansalS06}, our goal is to cluster children and big gifts in~$\Gx{x'}$ such that we can always find one child in the cluster to be satisfied in small gifts while all remaining children receive exactly one big gift.
Recall that any child is satisfied with receiving one big gift of value~$T$ and any non-zero small gift assignment can be scaled to a value of~$T$, as explained in \cref{subsec:santa-lp-formulation}.
We start by showing that removing all cycles from~$\Gx{x}$ still yields a feasible solution to \LP.

\begin{corollary}[Acyclic Big Gift Graph]
    \label{cor:santa-acyclic-big-gift-graph}
    Let $(x,y)$ be a feasible fractional solution to \LP.
    In {$\Ohat(\sqrt{ n } + D)$} rounds of \CONGEST, we can compute another feasible solution~$(x',y)$ such that the corresponding big gift graph $G_{x'} \subseteq G_x$ is a forest.
\end{corollary}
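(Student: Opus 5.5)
We apply the Cycle Rounding Lemma (\Cref{lem:cycle-elimination}) to the big gift graph $\Gx{x}$ with edge weights $w(cb) := x_{cb} \in [0,1]$. This graph is bipartite with sides $\cC$ and $\cB$, and it is a subgraph of the communication network. Some care is needed because the lemma is stated for a communication network $G$, whereas $\Gx{x}$ may be disconnected or have larger diameter. We run the algorithm on $\Gx{x}$ but use $G$ itself for all global primitives: BFS trees, broadcasts, and the $\Otilde(\sqrt n + D)$ simulations of \Cref{lem:rake_and_compress}. Equivalently, we apply the lemma to all of $G$ and assign weight $0$ to every edge outside $E_x$, which is already integral and so is never touched. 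In either form the running time is $\Ohat(\sqrt n + D)$, where $D$ is the diameter of the original network. The lemma returns weights $x'$ with two properties. First, $\sum_{b} x'_{cb} = \sum_b x_{cb}$ for every child $c$ and $\sum_c x'_{cb} = \sum_c x_{cb}$ for every big gift $b$. Second, the edges with fractional $x'$-values form a forest $F$.

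Feasibility of $(x',y)$ for \LP then follows directly. Constraints (\ref{eq:lp-big-gift-assigned-to-one-child}) and (\ref{eq:lp-child-gets-only-one-big-gift}) hold because the per-vertex sums are unchanged. Constraints (\ref{eq:lp-objective-value}) and (\ref{eq:lp-gifts-sum-up-to-one}) involve $x$ only through $\sum_{b} x_{cb}$, which is also unchanged, and $y$ is left as is. Nonnegativity holds because $x' \in [0,1]$. Finally, every edge that gets a new weight already lies in $E_x$, since edges of weight $0$ are excluded from $G_0$ in the algorithm. Hence $G_{x'} \subseteq G_x$.

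The main point is acyclicity. The support of $x'$ consists of the fractional edges, which lie in the forest $F$, together with the edges where $x'_{cb} = 1$. A cycle in $G_{x'}$ could therefore only arise through an edge of weight $1$. Such an edge $cb$ is harmless. The constraint $\sum_{b'} x'_{cb'} \le 1$ forces every other edge at $c$ to have weight $0$, and the constraint $\sum_{c'} x'_{c'b} \le 1$ forces every other edge at $b$ to have weight $0$. So $\{c,b\}$ is an isolated edge component of $G_{x'}$ and cannot lie on a cycle. It follows that $G_{x'}$ is the forest $F$ restricted to its positive edges, together with a disjoint collection of isolated edges, and is therefore a forest. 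The only delicate step is the simulation issue from the first paragraph, namely making sure the round complexity is measured in the diameter of $G$ rather than that of $\Gx{x}$. Because \Cref{lem:rake_and_compress} and the other subroutines already route their communication through $G$, this should go through directly.
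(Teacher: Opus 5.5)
Your proposal is correct and follows the same route as the paper: apply \Cref{lem:cycle-elimination} to $G_x$ with weights $x_{cb}$, then use the preserved per-vertex sums to re-verify the LP constraints. You are more careful than the paper's short proof in two places: you note that weight-$1$ edges form isolated components of $G_{x'}$ (the paper omits this, although it is needed because the lemma only makes the \emph{fractional} edges a forest), and you justify Constraints~\eqref{eq:lp-objective-value} and~\eqref{eq:lp-gifts-sum-up-to-one} via the unchanged sums $\sum_b x'_{cb}$ rather than merely by ``$y$ does not change''.
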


\begin{proof}
    Using \Cref{lem:cycle-elimination}, we obtain a sparse subgraph~$G_{x'} \subseteq G_x$ in $\Ohat(\sqrt{ n } + D)$ rounds of \CONGEST.
    From Property $(1)$ of \Cref{lem:cycle-elimination} follows that $\sum_{c \in \cC} x'_{cb} \leq 1$ and $\sum_{b \in \cB} x'_{cb} \leq 1$. 
    Thus, the Constraints~\eqref{eq:lp-big-gift-assigned-to-one-child} and~\eqref{eq:lp-child-gets-only-one-big-gift} are still satisfied. Since~$y$ does not change, the Constraints~\eqref{eq:lp-objective-value}, \eqref{eq:lp-small-gift-assigned-to-beta-many-children} and \eqref{eq:lp-gifts-sum-up-to-one} are also still satisfied.
    Hence, $(x',y)$ is a feasible solution to \LP.
\end{proof}

The following lemma uses similar arguments as Bansal and Sviridenko~\cite[Lemma 6]{BansalS06}.

\begin{lemma}[Big Gift Clusters]
    \label{lem:santa-big-gift-clusters}
    Let $(x',y)$ be the fractional assignment computed in \Cref{cor:santa-acyclic-big-gift-graph}.
    In $\Otilde(\sqrt{ n } + D)$ rounds of \CONGEST, we can compute a fractional assignment $(x^*,y)$ such that~$G_{x^\ast}$ is a forest where for any tree~$\mathcal{T}^\ast = (C^\ast \cup B^\ast, E^\ast)$ we have that
    \begin{enumerate}
        \item Each big gift $b \in B^\ast$ has degree at most two in $\mathcal{T}^\ast$;
        \label{prop:big_gift_degree}
        \item For any~$c \in C^*$, there always exists an integral assignment of big gifts $b \in B^\ast$ to the remaining $C^*\setminus\set{c}$ children; and
        \item If we cannot assign a big gift to every child $c \in C^\ast$, then the total value in small gift assignments in each~$\cT^*$ must be at least~$T/2$, \ie we have \label{prop:enough_small_gifts}
        \begin{equation}
        \label{eq:santa-big-gift-clusters-total-small-gift-value}
            \sum_{c \in C^\ast}\sum_{s \in \cS} v_s y_{cs} \ge T/2.
        \end{equation}
    \end{enumerate}
\end{lemma}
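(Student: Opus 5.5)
We follow Bansal and Sviridenko~\cite[Lemma 6]{BansalS06} and transform the forest $G_{x'}$ from \Cref{cor:santa-acyclic-big-gift-graph} in three local-to-global steps.

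\textbf{Leaf gifts and gift degree.} Every big gift that is a leaf of $G_{x'}$ is matched to its unique neighbour and removed. That neighbour is satisfied, so it can drop its small-gift edges. This only changes the clustering data structure and keeps the LP feasible in the sense we need. Next we enforce Property~\ref{prop:big_gift_degree}. Root each tree at an arbitrary vertex. Rooting uses the virtual-tree broadcast primitive mentioned in \Cref{subsubsec:gift-assignment}, which runs in $\Otilde(\sqrt n+D)$ rounds. For a big gift $b$ with parent child $p$ and children $c_1,\dots,c_k$, we keep only the edge to $p$ and the edge to the child $c_j$ of largest weight $x_{c_j b}$. The remaining edges $\{b,c_i\}$ are cut, which splits the tree into subtrees hanging below $b$.

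\textbf{Invariant on subtrees.} For each resulting tree $\mathcal T^\ast$ we maintain the following: every child $c\in C^\ast$ satisfies $\sum_{b\in B^\ast} x^\ast_{cb}\le 1$, and the number of children exceeds the number of big gifts by exactly one, $|C^\ast|=|B^\ast|+1$. Since $\mathcal T^\ast$ is a tree in which every big gift has degree exactly two, contracting each gift to an edge between its two child neighbours gives a tree on $C^\ast$. Removing any child $c$ and rooting at $c$, each gift is then matched to its endpoint farther from $c$. This gives Property~2 directly.

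To arrange $|C^\ast|=|B^\ast|+1$, we observe that a subtree cut off below a gift $b$ is rooted at a child $c_i$. That subtree again has every gift of degree two, so it also has one more child than gifts. The cut at $b$ costs $b$ nothing, because $b$ keeps degree two, and a leaf gift would already have been matched in the first step. If some tree instead has $|B^\ast|\ge|C^\ast|$, then every child gets a big gift, and Property~3 is vacuous.

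\textbf{Small-gift mass (main obstacle).} Property~\ref{prop:enough_small_gifts} is where I expect the difficulty. In each tree with $|C^\ast|=|B^\ast|+1$, we have $\sum_{c}\sum_b x_{cb}\le \sum_b 1 = |C^\ast|-1$, measuring with the original $x'$ weights on the kept edges plus the cut edges. By Constraint~\eqref{eq:lp-objective-value}, the small-gift value in the tree is at least $T\sum_{c\in C^\ast}(1-\sum_b x'_{cb})$. So it suffices to show that the total $x'$ weight lost to cut edges, charged to this tree, is at most $1/2$.

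This is why we keep the heaviest downward edge of each gift. A cut edge $\{b,c_i\}$ has weight at most half of $b$'s downward weight $\sum_j x'_{c_jb}\le 1$. Each subtree hanging below $b$ receives exactly one cut edge, namely its root's edge to $b$. Because only one such edge attaches each tree, the weight charged to it is at most $1/2$. This needs care. We have to check that the cut edge's weight, not $b$'s full weight, is what enters the deficit sum for the root child, and we have to handle the top tree, which contains the global root; there we choose the root to be a child.

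Finally, all steps are local decisions after rooting plus one aggregation, so the round complexity is $\Otilde(\sqrt n+D)$.
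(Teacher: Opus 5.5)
Your pruning rule and your charging argument for Property~3 are the paper's. You root each tree of $G_{x'}$ at a child, let each big gift keep its parent edge and its heaviest downward edge, note that every cut edge has $x'$-weight at most $1/2$, and observe that only the top child of each pruned tree has lost an edge.

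The gap is the leaf-gift preprocessing, which you rely on to get $|C^\ast|=|B^\ast|+1$ in every tree. It fails under either reading of what happens to the matched neighbour $c$.
\begin{itemize}
\item If $c$ stays in the forest, your accounting over ``kept edges plus cut edges'' omits $c$'s $x'$-weight to the deleted leaf gift, so $\sum_{c\in C^\ast}\sum_b x'_{cb}\le |C^\ast|-1/2$ is false. Concretely, the path $b_0,c_1,b_1,c_2,b_2$ with all four $x'$-weights equal to $1/2$ and $y=0$ is feasible. Your first step deletes $b_0$ and $b_2$, leaving the tree $c_1b_1c_2$ with $|C^\ast|=|B^\ast|+1$ but zero small-gift value, which contradicts your Property~3. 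Dropping $c$'s small-gift edges only makes this worse, and it also changes $y$, which the lemma keeps fixed.
\item If instead you delete $c$, new leaf gifts appear. Along an alternating path the peeling cascades through $\Theta(n)$ sequential steps, so a single local pass does not justify ``a leaf gift would already have been matched''.
\end{itemize}

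The remedy is to drop this step, as the paper does, and split into cases after pruning. Every gift then has degree at most two.
\begin{itemize}
\item A pruned tree containing a leaf gift $b$ with neighbour $c$ can serve all its children: give $b$ to $c$, root at $c$, and give every other gift to its unique tree-child. So Property~3 is vacuous there.
\item In every other tree all gifts have degree exactly two, hence $|C^\ast|=|B^\ast|+1$. The only $x'$-edge from a child in $C^\ast$ to a gift outside $B^\ast$ is the top child's single cut edge, of weight at most $1/2$. Your charging then yields $\sum_{c\in C^\ast}\sum_b x'_{cb}\le |B^\ast|+1/2=|C^\ast|-1/2$, and Constraint~\eqref{eq:lp-objective-value} gives Property~3.
\end{itemize}
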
 

\begin{proof}
    Let $\mathcal{T} = (C \cup B, E)$ be a tree in $G_{x'}$, then distinguish two cases in the pruning process.
    If $\mathcal{T}$ contains just one child $c \in C$, then Properties~1 and~2 are trivially satisfied.
    Since~$(x',y)$ is a feasible solution to \LP and~$c$ is isolated in $G_{x'}$, it receives a total value of~$T$ in small gifts alone and Property~3 is satisfied. 
    
    If $\mathcal{T}$ contains at least one big gift, then we prune~$\cT$ as follows:
    Using \Cref{lm:root_a_tree}, we root~$\mathcal{T}$ at an arbitrary child in $\widetilde{\mathcal{O}}(\sqrt{ n } + D)$ rounds and orient all edges in $\mathcal{T}$ towards the root.
    For each big gift $b \in B$ with $\mathrm{deg}(b) =: d > 2$, there is at most one child $c \in N(b)$ with~$x'_{cb} > 1 / 2$. We select $d - 2$ children~$c_1,\dots,c_{d-2}$ that are not the parent of $b$ in the tree such that~$x'_{c_i b} \leq 1 / 2$ for all $i = 1,\dots,d - 2$.
    For each selected child $c_i$, we remove the edge~$(c_i,b)$ from the graph which disconnects the components of its two endpoints.
    
    Let~$\mathcal{T}^\ast = (C^\ast \cup B^\ast) \subseteq \mathcal{T}$ be a tree after the pruning process, then there are two different cases.
    If~$\mathcal{T}^*$ contains a big gift~$b$ as a leaf, all children in~$C^*$ can receive a big gift. 
    Let~$c \in C^*$ be the only child neighboring~$b$ in~$\mathcal{T}^*$, then we can assign $b$ to $c$, root~$\mathcal{T}^*$ at $c$ and assign each big gift to its only \textit{child}.

    If all leaves are children, we show that there exists an integral assignment of the big gifts in $B^\ast$ to the children in $C^\ast \setminus \{ c \} $. 
    As each big gift has degree at most two, we can root $\mathcal{T}^\ast$ at $c$ and assign each big gift $b \in B^\ast$ to its only \textit{child}.
    Therefore, every child in $C^\ast \setminus \{ c \} $ can receive a big gift in this assignment and we have~$\lvert B^\ast \rvert = \lvert C^\ast \rvert - 1$.
    It remains to show Property~3.
    First, we show that at most one child in $C^\ast$ loses an edge during the pruning process. By construction, each child can only lose the edge to its parent gift. Let $c_1, c_2$ be two children that lost an edge during the pruning process, then the path between $c_1$ and $c_2$ traverses one of their parents. Thus, $c_1$ and $c_2$ cannot be in the same connected component after the pruning.
    Let $c^* \in C^*$ be the child that lost an edge to a big gift $b \in \mathcal{B} \setminus B^\ast$ if one such child exists.
    Since~$(x',y)$ is a feasible solution to \LP, it follows from Constraint~\eqref{eq:lp-objective-value} that
    \[
        \sum_{c \in C^\ast} \sum_{s \in \mathcal{S}} v_s \cdot y_{cs} \geq T \cdot \left( \lvert C^\ast \rvert - \sum_{c \in C^\ast} \sum_{b \in \mathcal{B}} x'_{cb} \right).    
    \]
    Using~$x'_{c^*b} \le 1/2$ and~$\sum_{c \in C^\ast} x'_{cb} \le 1$ for all $b \in B^\ast$, it holds that
    \[
        \sum_{c \in C^\ast} \sum_{b \in \mathcal{B}} x'_{cb} = x'_{c^*b} + \sum_{c \in C^\ast} \sum_{b \in B^\ast}  x'_{cb} \leq 1/2 + |B^\ast| = |C^\ast| - 1/2.
    \]
    Together, both equations yield \cref{eq:santa-big-gift-clusters-total-small-gift-value}.
\end{proof}

Importantly, even though $\mathcal{T}^\ast$ is a tree in the big gift graph $G_{x^*}$, any child $c \in C^*$ with a non-zero small gift assignment is still connected to more than~$\alpha$ many small gifts in~$\Gx{y}$.
Together with \cref{eq:santa-big-gift-clusters-total-small-gift-value}, this is crucial when randomly picking one child~$c \in C^\ast$ for each~$\mathcal{T}^\ast$ to be compensated in small gifts of value at least~$T/ \alpha$ in the next subsection.

\subsubsection{Small Gift Assignment}
\label{subsubsec:santa-small-gift-assignment}

In the following, we compute an integral small gift assignment~$Z$ of value $T/\alpha$ for exactly one child~$c$ in each tree~$\cT^*$ computed in \cref{lem:santa-big-gift-clusters}.
This automatically implies an integral big gift assignment~$X$ by rooting all trees~$\cT^*$ in their respective roots~$c$ and assigning all big gifts to their child nodes (see \cref{lem:santa-actual-big-gift-assignment}). 

First, we show that we can randomly pick exactly one child~$c$ in each tree~$\cT^*$ to be satisfied in a fractional small gift assignment.
We use the following Chernoff bound from~\cite[Corollary 13]{Kuszmaul25} to bound the number of times that one small gift is fractionally assigned to a child.

\begin{lemma}[Chernoff Bound]
    \label{lem:chernoff}
    Let $X_1,\dots,X_n \in [0,1]$ be independent random variables with means $p_1,\dots,p_n$. Let $\mu = \sum_{i = 1}^{n} p_i$ and $X = \sum_{i = 1}^{n} X_i$, then for any $r \geq 2$ with $1 \leq r \mu$
    \[
        \mathbb{P}[X \geq r\mu] \leq \mathcal{O}(1 / r)^{\Omega(r \mu)}.
    \]
\end{lemma}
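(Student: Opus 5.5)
We prove this Chernoff-type bound with the standard moment generating function (Chernoff--Hoeffding) argument, specialized to the large-deviation regime $r \ge 2$. Throughout, $\lambda>0$ is a free parameter.

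\textbf{Step 1: bound the moment generating function.} For each $i$, convexity of $t \mapsto e^{\lambda t}$ on $[0,1]$ gives $e^{\lambda X_i} \le 1 - X_i + X_i e^{\lambda}$. Taking expectations,
\[
\mathbb{E}[e^{\lambda X_i}] \le 1 + p_i(e^\lambda - 1) \le \exp\bigl(p_i(e^\lambda-1)\bigr).
\]
Independence lets the product factor, so $\mathbb{E}[e^{\lambda X}] \le \exp(\mu(e^\lambda - 1))$.

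\textbf{Step 2: apply Markov's inequality.} Markov's inequality applied to $e^{\lambda X}$ gives
\[
\mathbb{P}[X \ge r\mu] \le \exp\bigl(\mu(e^\lambda-1) - \lambda r\mu\bigr).
\]
We choose $\lambda = \ln r$, which is positive because $r \ge 2$. This yields the classical bound
\[
\mathbb{P}[X\ge r\mu] \le \left(\frac{e^{r-1}}{r^r}\right)^{\mu} = \left(\frac{e^{1-1/r}}{r}\right)^{r\mu} \le \left(\frac{e}{r}\right)^{r\mu}.
\]

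\textbf{Step 3: convert to the stated form.} We must show $(e/r)^{r\mu} \le \mathcal{O}(1/r)^{\Omega(r\mu)}$, with the implied constants independent of $r$ and $\mu$.

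For $r \ge e^2$ we have $e/r \le r^{-1/2}$. Hence $(e/r)^{r\mu} \le (1/r)^{r\mu/2}$, which is already of the required form.

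For $2 \le r < e^2$ the ratio $e/r$ may exceed $1$, so we cannot use the Step 2 bound directly. Instead we return to the exact expression and note that
\[
f(r) := \ln r - 1 + 1/r > 0 \quad \text{for } r > 1,
\]
so $f(r) \ge f(2) > 0$ on $[2, e^2]$. The exact bound reads $\exp(-r\mu f(r)) \le \exp(-c\, r\mu)$ for an absolute constant $c = f(2)>0$. Since $r$ is bounded on this range, $1/r \ge e^{-2}$. Therefore $\exp(-c\,r\mu)$ can be written as $(C/r)^{c' r\mu}$ for suitable absolute constants $C \ge e^2$ and $c'>0$. This exploits the freedom in the $\mathcal{O}$ and $\Omega$; the hypothesis $r\mu \ge 1$ keeps the exponent meaningful.

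\textbf{Main obstacle.} Step 3 is the only delicate point: we need a single pair of constants that works uniformly for all $r \ge 2$. The plan is to take the constant inside $\mathcal{O}$ to be $C = e^2$, so that $C/r$ covers the small-$r$ range, and the exponent constant to be $\min\{1/2,\, c'\}$. One then checks both ranges against this single choice.

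Alternatively, since the statement is quoted from \cite[Corollary 13]{Kuszmaul25}, we could simply invoke that reference.
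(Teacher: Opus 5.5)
Your proof is correct. The paper does not prove this lemma; it imports it verbatim from \cite[Corollary 13]{Kuszmaul25}, where bounds of this shape are derived by a combinatorial argument rather than via moment generating functions. Your MGF derivation is the classical self-contained route, and it is shorter than you make it.

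The Step~2 bound $\mathbb{P}[X\ge r\mu]\le (e/r)^{r\mu}$ is already of the form $(C/r)^{c\,r\mu}$ with $C=e$ and $c=1$. When $r<e$ its right-hand side exceeds $1$, so the inequality holds trivially, and nothing in the statement requires the base to be below $1$. Step~3 can therefore be dropped. For the same reason your small-$r$ case with $C\ge e^2$ is vacuous, and ``can be written as'' should read ``is at most''.

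If you want the case split to buy something, take $C=1$ instead. Since $\ln r\le 2$ on $[2,e^2]$, you get $\exp(-f(2)\,r\mu)\le r^{-(f(2)/2)\,r\mu}$ there. Together with your $r\ge e^2$ case, this gives the genuinely informative bound $\mathbb{P}[X\ge r\mu]\le r^{-\Omega(r\mu)}$ uniformly for $r\ge 2$. The hypothesis $r\mu\ge 1$ is never used in your argument.

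The MGF route has one further advantage over the citation. Because $e^{\lambda}-1>0$, Step~1 stays valid with $\mu$ replaced by any $\mu'\ge\mu$, so $\mathbb{P}[X\ge r\mu']\le (e/r)^{r\mu'}$ for every upper bound $\mu'$ on the mean. That is exactly the form used in \Cref{lem:santa-randomized-selection-of-small-gifts}, where only $\mathbb{E}[\sum_i Z_{is}]\le 2$ is known: with $\mu'=2$ and $r=\beta/2$ it gives $(2e/\beta)^{\beta}$ directly. The citation, in turn, keeps the paper short.
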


The most important insight to the following lemma is that the random selection happens independently and in parallel for all~$\cT^*$ such that all respective decisions whether a small gift is assigned to a tree are in fact \emph{independent}.
More formally, we show the following lemma.

\begin{lemma}[Randomized Selection]
    \label{lem:santa-randomized-selection-of-small-gifts}
    Let~$(x^*,y)$ be the fractional assignment computed in \cref{lem:santa-big-gift-clusters}.
    In $\widetilde{\mathcal{O}}(\sqrt{ n } + D)$ rounds of \CONGEST, we can transform~$y$ into a fractional small gift assignment~$z$ such that for some~$\beta =$ \SantaApproxFactor\, it holds \whp that
    \begin{enumerate}
        \item For all $\mathcal{T}^\ast$, either all children in $\mathcal{T}^\ast$ each receive one big gift or there is one child $c \in \mathcal{T}^\ast$ such that $\sum_{s \in \cS}v_{s}\cdot z_{cs} \ge T/\beta$,
        \item $\sum_{c \in \cC}^{} z_{cs} \leq 1$ for all~$s \in \cS$.
    \end{enumerate}
\end{lemma}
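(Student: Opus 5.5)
We follow Bansal--Sviridenko's randomized selection, adapted to the forest $G_{x^*}$ from \Cref{lem:santa-big-gift-clusters}. Trees $\mathcal{T}^\ast$ in which every child can receive a big gift are left alone: their children get big gifts and $z_{c\cdot}=0$ for all their children. Every other tree $\mathcal{T}^\ast=(C^\ast\cup B^\ast,E^\ast)$ satisfies \eqref{eq:santa-big-gift-clusters-total-small-gift-value}. Write $Y_c:=\sum_{s}v_s y_{cs}$ and $Y_{\mathcal{T}^\ast}:=\sum_{c\in C^\ast}Y_c\ge T/2$.

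\textbf{Selection rule.} The tree picks exactly one child $c$, with probability $p_c:=Y_c/Y_{\mathcal{T}^\ast}$. This is done by aggregating $Y_{\mathcal{T}^\ast}$ over the tree (and the root or leader) via the rooting and broadcast primitive \Cref{lm:root_a_tree} in $\Otilde(\sqrt n+D)$ rounds. A single node then samples the choice and broadcasts it back to all nodes of the tree. For the selected child $c$ we set $\tilde z_{cs}:=y_{cs}\cdot T/(2Y_c)$, and all other children in the tree get $0$.

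\textbf{Property~1 before scaling.} The selected child receives value $T/2$ from $\tilde z$. Per gift we have $\tilde z_{cs}\le y_{cs}/y\ldots$; more precisely, since $Y_c\ge T(1-\sum_b x_{cb})$ and \eqref{eq:lp-gifts-sum-up-to-one} gives $y_{cs}\le 1-\sum_b x_{cb}$, if $Y_c<T/2$ we can instead use the rescaling $y'_c$ from \cref{subsec:santa-lp-formulation}. To keep each entry at most $1$, set $\tilde z_{cs}:=\min\{1,\dots\}$. The cleaner choice is $\tilde z_{cs}=y_{cs}\cdot T/(2Y_{\mathcal{T}^\ast})\cdot(1/p_c)$, but this is the same quantity, so we rely on $y_{cs}\le 1$ and $T/(2Y_c)$ possibly exceeding $1$. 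This is a subtle point. The safer variant is to select with probability $p_c$ and give $\tilde z_{cs}=y'_{cs}\le1$, value $\ge T$, where $p_c\propto$ the small-gift mass.

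The key expectation computation is then the following. For each small gift $s$, the contribution of tree $\mathcal{T}^\ast$ is a random variable $W_{s,\mathcal{T}^\ast}\in[0,1]$. It is nonzero only if some $c\in C^\ast$ with $y_{cs}>0$ is chosen, and it has expectation $\sum_{c\in C^\ast} p_c\tilde z_{cs}$. Choosing $p_c$ so that $p_c\tilde z_{cs}\le 2y_{cs}$ (for instance $p_c=Y_c/Y_{\mathcal{T}^\ast}$ with $\tilde z=y\,T/(2Y_c)$, giving exactly $y_{cs}T/(2Y_{\mathcal{T}^\ast})\le y_{cs}$), we get $\mathbb{E}[\sum_{\mathcal{T}^\ast}W_{s,\mathcal{T}^\ast}]\le\sum_c y_{cs}\le1$ by \eqref{eq:lp-small-gift-assigned-to-beta-many-children}.

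\textbf{Concentration and scaling.} Different trees choose independently, so the variables $W_{s,\mathcal{T}^\ast}$ are independent across trees. Apply \Cref{lem:chernoff} with $\mu\le1$, padding with dummy variables so that $\mu=1$, and with $r=\Theta(\log n/\log\log n)$. Then $\mathcal{O}(1/r)^{\Omega(r)}=n^{-\Omega(1)}$, with the constant tunable, and a union bound over all $s$ shows that w.h.p.\ every $s$ has total load at most $\beta':=r$. Finally set $z:=\tilde z/\beta'$. This gives Property~2, and each selected child gets value at least $T/(2\beta')=:T/\beta$ with $\beta=\mathcal{O}(\log n/\log\log n)$, which is Property~1.

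\textbf{Main obstacle.} The delicate step is ensuring each per-tree variable lies in $[0,1]$. This requires $\tilde z_{cs}\le1$, i.e.\ $y_{cs}T\le 2Y_c$, which need not hold when $Y_c$ is tiny. We handle it by using the rescaled $y'_c$ (entries at most $1$, value at least $T$) and selection probabilities $p_c:=(1-\sum_b x^*_{cb})\cdot$ normalized. Note that $\sum_{c\in C^\ast}(1-\sum_b x^*_{cb})$ is at least $1/2$ by the computation in \Cref{lem:santa-big-gift-clusters} and at most the small-gift mass over $T$. This yields expected load at most $2\sum_c y_{cs}\le2$ with entries at most $1$, and the Chernoff step goes through with $\mu\le 2$.
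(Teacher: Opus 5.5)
Your outline matches the paper's. You make one independent draw per deficient tree, with probability proportional to small-gift value, and rescale the winner. You then bound each gift's expected load by $\mathcal{O}(1)$ using $\sum_{c\in C^\ast}Y_c\ge T/2$, apply Chernoff with $r=\Theta(\log n/\log\log n)$ and a union bound, and divide by $\beta$. However, the obstacle you identify is not real, and the construction you retreat to is incorrect as written.

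\textbf{The first construction already works.} Since $(x',y)$ is feasible, constraints \eqref{eq:lp-gifts-sum-up-to-one} and \eqref{eq:lp-objective-value} give $y_{cs}\le 1-\sum_b x'_{cb}\le Y_c/T$. This is exactly the paper's \eqref{eq:feasibility_of_scaled_values}. So a tiny $Y_c$ forces every $y_{cs}$ to be tiny. Your first choice, $\tilde z_{cs}=y_{cs}T/(2Y_c)$ with $p_c=Y_c/Y_{\mathcal{T}^\ast}$, therefore already satisfies $\tilde z_{cs}\le 1/2$. No truncation $\min\{1,\cdot\}$ is needed, and such a truncation would destroy the value guarantee. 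The paper uses $z_{cs}=y_{cs}T/V_\mathcal{S}(c)$, which gives value exactly $T$ and expected load at most $2$.

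\textbf{The fallback fails with $x^\ast$.} It is correct only if the selection weight and the rescaling both use the same normalization $1-\sum_b x'_{cb}$. This equals $1-\sum_b x_{cb}$, because cycle rounding preserves each child's sum. Using $x^\ast$ breaks it at the child $c^\ast$ that lost its parent edge during pruning. Take the tree $c^\ast b_1 c_2$ with $x'_{c^\ast b_{\mathrm{lost}}}=1/2$, $x'_{c^\ast b_1}=1/2-\delta$ and $x'_{c_2 b_1}=1/2+\delta$. Then $1-\sum_b x^\ast_{c^\ast b}=1/2+\delta$, but $1-\sum_b x'_{c^\ast b}=\delta$. The two readings of your rule fail as follows.
\begin{itemize}
\item If $y'$ is normalized by $x'$ (as in \cref{subsec:santa-lp-formulation}), this tree contributes about $y_{c^\ast s}/(2\delta)$ to gift $s$ in expectation. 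Then $1/\delta$ such trees with $y_{c^\ast s}=\delta$ give expected load $\Theta(1/\delta)$, not $\le 2\sum_c y_{cs}$.
\item If $y'$ is normalized by $x^\ast$, entries stay at most $1$, but the winner $c^\ast$, chosen with probability about $1/2$, may get only about $2\delta T$.
\end{itemize}
For the same reason, your claim $\sum_{c}(1-\sum_b x^\ast_{cb})\le\sum_c Y_c/T$ is false.

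\textbf{The sampling step is underspecified.} Aggregating $Y_{\mathcal{T}^\ast}$ at one node does not let that node sample a child proportionally to $Y_c$, and \Cref{lm:root_a_tree} only roots the trees. You need either the paper's hierarchical subsampling along the rake-and-compress decomposition, or, for example, an exponential race: child $c$ draws key $E_c/Y_c$ with $E_c$ exponential, and each tree computes the minimum via the $\widetilde{\mathcal{O}}(\sqrt n+D)$ simulation of \Cref{lem:compress}.
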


\begin{proof}
    We write $(C_1, B_1), (C_2, B_2), \dotsc, (C_p, B_p)$ to denote the collection of disjoint big gift clusters that~$x^*$ induces.
    If~$\lvert B_i \rvert \geq \lvert C_i \rvert $, then all children in $C_i$ receive exactly one big gift and no small gifts.
    Otherwise, if $\lvert B_i \rvert  = \lvert C_i \rvert - 1$, then we pick one child to be satisfied with small gifts.
    For all~$c \in C_i$, let~$V_\mathcal{S}(c) := \sum_{s \in \mathcal{S}} v_s y_{cs}$ denote the total value that~$c$ receives in small gifts.
    By Property~(\ref{prop:enough_small_gifts}) of \cref{lem:santa-big-gift-clusters}, we have that~$V_\mathcal{S}(C_i) := \sum_{c \in C_i} V_\mathcal{S}(c) \geq T / 2$.

    For each~$(C_i, B_i)$, we first define a probability distribution and then randomly pick exactly one child in $C_i$.
    The weight of each child corresponds to its relative value of small gifts that it contributes to the cluster.
    More precisely, for each $c \in C_i$, we set $p_c := V_\mathcal{S}(c) / V_\mathcal{S}(C_i)$.
    Clearly, this definition satisfies $\sum_{c \in C_i} p_c = 1 $ for all $i \in [p]$.

    \proofsubparagraph{\boldmath\CONGEST implementation.}

    Next, we show how we can implement this random process for each cluster in parallel.
    We implement a \textit{Rake} and \textit{Compress} style procedure that decomposes each cluster $(C_i,B_i)$ in parallel.
    Since $\lvert B_i \rvert = \lvert C_i \rvert - 1$, every leaf in the cluster has to be a kid. Thus, due to property \ref{prop:big_gift_degree} in \Cref{lem:santa-big-gift-clusters} we get that every gift $b \in B_i$ has degree two.
    Hence, we may contract all paths $P = (c, b, c')$, where $b \in B_i$ and $c, c' \in C_i$ to the single edge $\{ c,c' \}$. Thus, we create a virtual tree $\cT$ that contains only the children in $C_i$.
    Now we iteratively apply the following two operations to $\cT$.
    Define $\cT_0 := \cT$ and $w_0: V(\cT) \to [0,1]$ by $w_0(c) = p_c$ and obtain $(\cT_{j+1},w_{j+1})$ from $(\cT_j,w_j)$ as follows:
    \begin{itemize}
        \item \textit{Rake:} Remove each child $c \in C_i$ of degree one from $\cT_j$ and send $w_j(c)$ to its only neighbor in $\cT_j$.
        \item \textit{Compress:} Contract each path $P$ of at least two degree-2 vertices into a single vertex $c$ and gather the weights of all vertices in $P$ at $c$.
        \item For each surviving vertex $c$, set $w_{j+1}(c)$ as $w_j(c)$ plus the sum of all weights sent to $c$.
    \end{itemize}
    First, we observe that after $\mathcal{O}(\log n)$ iterations of \textit{Rake} and \textit{Compress} we have decomposed the entire tree \cite{CP19}. Let $V_{j,1} := \{ v \in V(\cT_j): \mathrm{deg}(v) = 1 \}, V_{j,2} := \{ v \in V(\cT_j): \mathrm{deg}(v) = 2 \}  $ and $V_{j,3} := \{ v \in V(\cT_j): \mathrm{deg}(v) \geq 3 \} $.
    If $\lvert V_{j,2} \rvert < 3 / 5 \cdot \lvert V(\cT_j) \rvert $, then due to $\lvert V_{j,1} \rvert > \lvert V_{j,3} \rvert$ we have that $\lvert V_{j,1} \rvert > 1 /5 \cdot \lvert V(\cT_j) \rvert $ and thus a constant fraction of vertices are removed in a single iteration.
    Otherwise, if $\lvert V_{j,2} \rvert  \geq 3 / 5 \cdot \lvert V(\cT_j) \rvert $, then there are at most $\lvert V_{j,1} \rvert + \lvert V_{j,3} \rvert - 1$ connected components in the subgraph of $\cT$ induced by $V_{j,2}$.
    Since at most one vertex survives in every connected component, this implies that at least $1 / 5 \cdot \lvert V(\cT_j) \rvert $ vertices are removed from $\cT_j$.
    Further, we observe that due to \Cref{lem:compress} each iteration can be implemented in $\widetilde{\mathcal{O}}(\sqrt{ n } + D)$ rounds of \CONGEST. 
    
    Now we show how to use this decomposition to efficiently sample a vertex from every cluster in parallel.
    Let $\cT_k$ denote the last non-empty tree computed by the above procedure. Note that every vertex of $\cT_k$ must have been removed by the \textit{Rake} operation and thus~$\cT_k$ contains at most two vertices.
    We first sample a vertex from $\cT_k$ according to the probability distribution~$w_k$.
    Now we show how to use repeated subsampling to sample a vertex from $\cT$ according to the probabilities $p_c$ for $c \in C_i$.
    Suppose that we have sampled a vertex $v_{j+1}$ from~$\cT_{j+1}$.
    Let $\text{Compress}_j$ be the (possibly empty) set of vertices that were contracted into vertex~$v_{j+1}$ by the \textit{Compress} operation on $\cT_j$.
    If $\text{Compress}_j \neq \emptyset$, then $v_{j+1}$ picks exactly one vertex from $\text{Compress}_j \cup \{ v_{j+1} \} $ according to the probability distribution $p_j(v) := w_j(v) / w_{j+1} (v_{j+1})$. Then it broadcasts the random decision to all vertices in $\text{Compress}_j$ within $\widetilde{\mathcal{O}}(\sqrt{ n } + D)$ rounds using \Cref{lem:compress}.
    Denote the picked vertex as $v_j^c$.
    Next, let $\text{Rake}_j$ be the (possibly empty) set of vertices neighboring $v_j^c$ in $\cT_j$ that were removed from $\cT_j$ due to the \textit{Rake} operation. Again, $v_j^c$ randomly picks a vertex from $\text{Rake}_j \cup \{ v_j^c \}$ with probability proportional to their weights in $w_j$ and broadcasts this decision to all vertices in $\text{Rake}_j$.
    This takes one round on $\cT_j$ and according to \Cref{lem:compress} one round on $\cT_j$ can be simulated in $\widetilde{\mathcal{O}}(\sqrt{ n } + D)$ rounds on $G$.
    After $k$ iterations, this yields a randomly sampled vertex from $\cT$ according to probability distribution $w_0(c) = p_c$.
    Since $k = \mathcal{O}(\log n)$ the repeated subsampling procedure takes $\widetilde{\mathcal{O}}(\sqrt{ n } + D)$ rounds in total.

    \proofsubparagraph{Scaling Small Gifts.}

    Let~$\cC_s = \{ c_i: i = 1,\dots,p\} \subseteq \cC$ denote the set of children chosen to receive small gifts where 
    each~$c_i$ uniquely identifies cluster pair~$i \in [p]$.
    We define a new fractional small gift assignment $z$ as follows:
    \[
        z_{cs} := 
            \begin{cases}
                y_{cs} \cdot T / V_\mathcal{S}(c), & \text{if } c \in \mathcal{C}_s, \\
                0, & \text{otherwise}.
            \end{cases}
    \]
    Using Constraints~$(\ref{eq:lp-objective-value})$ and~$(\ref{eq:lp-gifts-sum-up-to-one})$ of \LP, we get for all $s \in \mathcal{S}$ and $c \in C_s$ that
    \begin{equation}
        \label{eq:feasibility_of_scaled_values}
        z_{cs} \leq y_{cs} \cdot T / V_\mathcal{S}(c) \leq T / V_\mathcal{S}(c) \cdot \left( 1 - \sum_{b \in \mathcal{B}} x'_{cb} \right) \leq \frac{\sum_{s \in \mathcal{S}} v_s \cdot y_{cs}}{V_\mathcal{S}(c)} = 1. 
    \end{equation}
    Further, it holds that~$\sum_{s \in \cS} v_s z_{cs} = T$ for all children~$c \in \cC_s$. 

    As a last step, we show that the random choice of~$z$ (fractionally) assigns each small gift to at most~$\beta$ many children with high probability.
    For each small gift~$s \in \cS$ and cluster pair~$i$, define the random variable~$Z_{is} \in [0,1]$ by $Z_{is} := z_{c_is}$.
    For a fixed gift~$s \in S$, the~$p$ many random variables~$\set{ Z_{is} : i \in [p]}$ are independent, because the random selection of children~$c_i$ happens independently in all~$p$ clusters.
    For each small gift~$s \in \cS$, we have that
    \begin{equation*}
        \E \left[  \sum_{i \in [p]} Z_{is} \right] 
        = \sum_{i \in [p]} \E \left[  Z_{is} \right] 
        = \sum_{i \in [p]} \sum_{c \in C_i} p_{c} \cdot z_{cs}.
    \end{equation*}
    Recall that $p_c = V_\mathcal{S}(c) / V_\mathcal{S}(C_i)$. Then we get that
    \begin{equation}
        \label{eq:santa-expected-value-of-rvs}
        \E \left[  \sum_{i \in [p]} Z_{is} \right] 
        = \sum_{i \in [p]} \sum_{c \in C_i} p_{c} \cdot z_{cs}
        = \sum_{i \in [p]} T \cdot \frac{\sum_{c \in C_i} y_{cs}
        }{V_\mathcal{S}(C_i)}
        \le 2 \sum_{c \in \cC} y_{cs}
        \le 2,
    \end{equation}
    where the last inequality follows from Constraint \eqref{eq:lp-small-gift-assigned-to-beta-many-children} of \LP.
    For a sufficiently large constant $C$, let~$\beta = C \cdot \log n / \log \log n$. Then using \Cref{lem:chernoff} it holds that
    \begin{align*}
        \label{eq:santa-chernoff-bound-on-number-children}
        \P \left[  \sum_{i \in [p]} Z_{is} > \beta \right] 
        &\leq \P \left[  \sum_{i \in [p]} Z_{is} > \beta / 2 \cdot \mathbb{E}\left[\sum_{i \in [p]} Z_{is} \right] \right]
        \leq \mathcal{O}(2 / \beta)^{\Omega(\beta)} \\
        &= \mathcal{O}( \log \log n / \log n)^{\Omega(\log n / \log \log n)}
        = 2^{-\Omega(\log n)}.
    \end{align*}
    Further, we have that
    \begin{equation*}
        \sum_{i \in [p]} Z_{is} = \sum_{i \in [p]} z_{c_i s} = \sum_{c \in C} z_{cs}.
    \end{equation*}
    Using a union bound over all small gifts, we see that each small gift is \whp fractionally assigned at most~$\beta$ times.
    As a last step, each child~$c \in \cC_s$ locally scales down their small gift assignment by~$\beta$ such that~$z_c \in [0,1]^\cS$.
    Consequently, each child receives \whp a total value of~$T/ \beta$ and each small gift is \whp fractionally assigned at most once.
\end{proof}

Intuitively, the parameter~$\beta$ measures the distance to the optimal solution where each small gift is assigned at most once. Therefore, it is closely related to our approximation factor by~$\alpha = 4 \beta$.
As the last step, we round~$z$ to an integral assignment~$Z$ of small gifts, such that no gift is assigned to multiple children. 
The high-level idea of the rounding can be described as follows:
For a fixed child~$c \in \cC_s$, let $k_c = \sum_{s \in \cS} z_{cs}$ denote the (fractional) number of small gifts~$c$ receives in solution~$z$.
Next we show how to round this fractional solution such that each child receives~$\floor{k_c}$ small gifts.
Hence, each child loses the value of at most one small gift.
In the worst case, this is the most valuable gift of value $\max_{s \in \cS} v_s$ and each child receives a value of at least~$T / \beta - \max_{s \in \cS} v_s$.
More formally, we show the following lemma.

\begin{lemma}[Integral Assignment]
\label{lem:santa-rounding-of-small-gifts}
    Let~$z$ be the fractional small gift assignment in \cref{lem:santa-randomized-selection-of-small-gifts}.
    In $\Ohat(\sqrt n+D)$ rounds, we can compute an integral small gift assignment~$Z$ such that \whp
    \begin{equation}
        \label{eq:santa-integral-small-gift-assignment-value}
        \forall c \in \cC_s: \sum_{s \in \cS} v_{s}\cdot Z_{cs} \geq \frac{T}{\beta} - \max_{s \in S} v_s.
    \end{equation}
\end{lemma}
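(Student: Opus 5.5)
We follow the standard "cycle-then-forest" rounding in two stages, reusing \Cref{lem:cycle-elimination}. Restrict to the bipartite small gift graph between the selected children $\cC_s$ and $\cS$, with edge weights $z_{cs}$. For each $c\in\cC_s$ let $k_c=\sum_{s}z_{cs}$; every small gift has load $\sum_c z_{cs}\le 1$ by \cref{lem:santa-randomized-selection-of-small-gifts}. A cycle-rounding step that keeps vertex sums fixed preserves $k_c$, but it need not preserve the value $\sum_s v_s z_{cs}$ of a child. We therefore cannot use the $z$-weights directly. \emph{First choice:} on each child we order its incident small gifts by value, and we only want to lose at most the single most valuable gift. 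Following \cite{BansalS06}, it suffices to produce an integral $Z$ in which each child $c$ gets at least $\lfloor k_c\rfloor$ gifts, together with a value guarantee. To obtain the latter, we round in a way that keeps child-side constraints of the form "total value" rather than "total count".

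\textbf{Step 1 (cycle elimination with adapted weights).} Define weights on child-gift edges so that both gift loads and child values are conserved. Rounding along an even cycle alternately changes $z_{cs}$ by $\pm\delta_e$, and at a child $c$ the two cycle edges $e,e'$ satisfy $v_e\delta_e=v_{e'}\delta_{e'}$. At a gift $s$ we require $\delta_e=\delta_{e'}$. Equivalently, we work with the variable $v_s z_{cs}$ at children and $z_{cs}$ at gifts. A generalized cycle has a multiplicative gain around it, so it is not automatically balanced. To sidestep this we instead split each child $c$ into copies according to the gifts' values, in the style of Shmoys--Tardos. Sort $c$'s gifts by decreasing $v_s$ and fill copies $c^{(1)},c^{(2)},\dots$ of capacity $1$ greedily with the $z_{cs}$ mass, splitting edges across consecutive copies. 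This yields $\lceil k_c\rceil$ copies, where all gifts in copy $j$ are at least as valuable as those in copy $j+1$. The resulting bipartite graph lives on $\cS$ and the copies. The splitting needs only local information at $c$, and each copy is simulated by $c$; each copy has $O(\deg c)$ edges, so the overall overhead stays small. Now apply \Cref{lem:cycle-elimination} to this copy graph. Vertex sums are preserved, so each copy keeps load exactly as before and each gift keeps load at most $1$. The fractional edges then form a forest $F$.

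\textbf{Step 2 (rounding the forest).} On the forest $F$ we find an integral assignment in which every full copy (load $1$) receives one gift, and every gift is used at most once. This works because a forest with these degree constraints has an integral point. Concretely, root each tree using \Cref{lm:root_a_tree}, and match each copy node to one of its child gifts when possible, or else to its parent. The standard argument: in a tree where all internal gift loads are $\le 1$ and copy loads are $1$, matching every copy except possibly one per tree is achievable. Dropping the unique partial copy (the last one, $c^{(\lceil k_c\rceil)}$) of each child, we match each full copy $c^{(j)}$, $j\le\lfloor k_c\rfloor$. In the copy-graph tree, each copy except the root can take a child gift in the rooted order, so we let the root be a partial copy or lose one copy. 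This step, namely guaranteeing that only copies whose loss is affordable are unmatched and carrying it out in $\widetilde{\mathcal O}(\sqrt n+D)$ rounds via Rake/Compress (\Cref{lem:rake_and_compress}), is where we expect the main obstacle.

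\textbf{Step 3 (value bound).} Copy $j$ receives a gift of value at least the minimum in copy $j$, which is at least the maximum in copy $j+1$. Summing, $c$ gets value at least $\sum_{j\ge 2}(\text{value of copy } j) \ge \sum_s v_s z_{cs}-\max_s v_s = T/\beta-\max_s v_s$. This holds whenever all copies but the first are served; if instead a middle copy is dropped, the telescoping still loses at most one gift's value. This gives \eqref{eq:santa-integral-small-gift-assignment-value}. The \whp qualifier comes only from \cref{lem:santa-randomized-selection-of-small-gifts}, and the round count is dominated by \Cref{lem:cycle-elimination}.
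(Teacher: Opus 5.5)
Your Steps 2--3 have a genuine gap. You never show that every full copy $c^{(j)}$ with $j\le\lceil k_c\rceil-1$ receives a gift, and the fallbacks you allow do not give the bound. The telescoping works only if each served full copy $j$ pays for the value of copy $j+1$. Suppose a full copy $c^{(j)}$ goes unserved and the partial copy is unserved too, which is typical since it is often a leaf. Then you lose $\mathrm{val}(c^{(1)})+\mathrm{val}(c^{(j+1)})$, and this can approach $2\max_s v_s$. For instance, take all values equal to $v$ and $k_c=2.9$. Your rule ``let the root be a partial copy or lose one copy'' permits losing $c^{(2)}$ (as the root of a tree with no partial copy) together with the leaf $c^{(3)}$. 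That leaves $c$ with $v$ instead of the required $2.9v-v$. So your claim that dropping a middle copy still costs at most one gift is false. Also, the condition ``all copies but the first are served'' in Step 3 should read ``all copies but the last''. Finally, the rule ``or else to its parent'' can hand the same gift both to a leaf copy and to that gift's own parent copy.

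The missing observation is short. After discarding integral edges, a full copy cannot be a leaf of the fractional forest, because its single fractional edge would then have weight $1$. So after rooting each tree arbitrarily via \Cref{lm:root_a_tree}, every full copy has at least one child gift. Let each copy take one of its child gifts; gifts have unique parents, so all full copies are served with distinct gifts. Only leaf partial copies go unserved, which is exactly what your telescoping tolerates. Running \Cref{lem:cycle-elimination} on the copy graph also needs a simulation argument, since that graph is not a topological minor of the network. This is plausible, because there are at most $|\cC|+|\cS|$ copies (as $\sum_c k_c\le|\cS|$), but it is missing.

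The paper avoids copies altogether. Your worry about a multiplicative gain around cycles is unfounded: the factor $v_s$ sits on the gift endpoint. Hence cycle rounding on $w_{cs}=v_s z_{cs}$ preserves each child's value and each gift's load at the same time. The paper then roots each resulting tree at a child and gives every gift to its parent. Each child thus loses only its parent edge, worth at most $\max_s v_s$.
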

\begin{proof}
    We prove the lemma in two steps: First we remove all cycles in~$G_z$ to obtain the forest~$G_{z'} \subseteq G_z$, then we integrally assign the gifts for each tree of~$G_{z'}$ such that each child loses at most one gift.
    
    \proofsubparagraph{Algorithm Part 1.} 
    \Cref{lem:cycle-elimination} shows how to remove cycles in a weighted graph. 
    We cannot apply this lemma directly to $G_z$, since we have two types of weights: we have the fractional weights on the edges $z_{cs}$, but we also have the weights of the small gifts $v_s$. We apply \Cref{lem:cycle-elimination} to $G_z$ with an auxiliary weight function $w$ and two small adjustments. First, we describe this weight function. 
    In order to ensure that the cycle removal maintains the total value of~$T/\beta$ for each child, we set the edge weights to~$w_{cs}:=z_{cs} \cdot v_s$ for all~$c \in \cC_s$ and~$s \in \cS$. 
    These weights are no longer in $[0,1]$ as in \Cref{lem:cycle-elimination}. We give the two adjustments to the algorithm. First, to round a cycle $C$, we compute~$w := \min_{c \in C}\min\set{w_{cs},v_s-w_{cs}}$ and alternately change the edge weights to $w_{cs}' = w_{cs}\pm w$. At the same time we update $z'_{cs} = w'_{cs} / v_s$ for all $c \in \mathcal{C}_s$ and $s \in \mathcal{S}$.
    By the choice of $w$, at least one edge in each cycle becomes integral with~$z_{cs}' \in \set{0,1}$ while we have~$z_{cs}'\in [0,1]$ for all other edges.    
    This leads us to the second adjustment: we remove an edge $\{ c, s \} $ from $G_z$ if $w_{cs}\in \{0,v_s\}$, which corresponds to~$z_{cs}\in \{0,1\}$ being integral. Note that removing an edge~$z_{cs} = 1$ means that the small gift~$s$ is permanently assigned to child~$c$ and $z_{cs} = 0$ means the child forgoes the gift. 
    
    These two adjustments do not affect correctness or the running time in \Cref{lem:cycle-elimination}, hence we eliminate cycles from~$G_z$ in $\Ohat(\sqrt n +D)$ rounds. Finally, we obtain the edge weights in~$G_{z'}$ by scaling $z_{cs}' = w'_{cs}/v_s$ for all~$c \in \cC_s$ and~$s \in \cS$. We are left with the acyclic graph~$G_{z'}$.

    \proofsubparagraph{Algorithm Part 2.}
    Next, we show how to round the remaining forest $G_{z'}$. 
    For each tree~$\cT$ in~$G_{z'}$, we permanently assign small gifts~$s \in \cS$ to children~$c \in \cC_s$ starting from the leaves.
    We root $\mathcal{T}$ at an arbitrary child $c \in \mathcal{T}$. Then, each gift is assigned to its parent in the tree. Since each child has at most one parent in $\mathcal{T}$, it loses at most~$\max_{s \in S} v_s$ in value and thus \cref{eq:santa-integral-small-gift-assignment-value} holds.

    \proofsubparagraph{Round Complexity.} 
    A naive distributed implementation of this algorithm takes diameter($\mathcal{T}$) rounds, which can be as large as $n-1$. Therefore, we show how to implement this algorithm in~$\Ohat(\sqrt n+D)$ rounds. 

    The main observation here is that on a path, all nodes are determined by what happens on the endpoints. 
    This can be seen as follows: consider a path that ends in a degree 1 vertex. 
    If this is a gift, then all odd edges on the path are rounded up to 1 and all even edges are deleted. 
    If the last vertex is a child, then all odd edges are deleted and all even edges are rounded up to 1. 
    Hence, each edge only needs to know 1) is the endpoint of the path a gift or a child and 2) are there an odd or even number of edges till this endpoint. 
    Hence, all paths can be contracted and unrolled once the endpoints are known.
    We perform this with a version of \textit{Rake} and \textit{Compress}; see \Cref{lm:root_a_tree}: 
    \begin{itemize}
        \item \emph{Rake:} Decide on all edges incident to a vertex of degree one. 
        \item \emph{Compress:} Contract all paths into a single edge, storing on the edge whether the path contains an odd or even number of edges. 
    \end{itemize}

    By the same proof as \Cref{lm:root_a_tree}, the graph is empty after $\Oo(\log n)$ iterations and each iteration can be performed in $\Ohat(\sqrt n+D)$ rounds. 
\end{proof}

We conclude with the following lemma that shows how to efficiently compute an integral big gift assignment~$X$.

\begin{lemma}
    \label{lem:santa-actual-big-gift-assignment}
    Let~$x^*$ be the fractional big gift assignment in \cref{lem:santa-big-gift-clusters} and~$Z$ the integral small gift assignment in \cref{lem:santa-rounding-of-small-gifts}.
    In $\widetilde{\mathcal{O}}(\sqrt{ n } + D)$ rounds of \CONGEST, we can compute a big gift assignment~$X$ for all~$c \in \cC$ with~$Z_c = 0^{\abs{\cS}}$.
\end{lemma}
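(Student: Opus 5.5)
The plan is to root every tree $\cT^*$ of the forest $G_{x^*}$ from \cref{lem:santa-big-gift-clusters} at a designated child and assign each big gift to its unique child below it. First, every tree $\cT^* = (C^*\cup B^*, E^*)$ determines its root locally. If $\abs{B^*} \ge \abs{C^*}$, then by the case analysis in the proof of \cref{lem:santa-big-gift-clusters} some big gift $b$ is a leaf. We pick one such leaf gift (for example the one with minimum ID) and root the tree at its unique neighbor $c$, assigning $b$ to $c$. Otherwise $\abs{B^*} = \abs{C^*}-1$, and the root is the child $c_i\in\cC_s$ selected in \cref{lem:santa-randomized-selection-of-small-gifts}. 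This child is exactly the unique child of the tree with $Z_{c_i}$ possibly nonzero, while all other children have $Z_c = 0^{\abs{\cS}}$. Choosing a minimum-ID leaf gift per tree is an aggregation over the tree, which we do in the same way as the rooting below.

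Next, we root all trees in parallel using \Cref{lm:root_a_tree}, which takes $\Otilde(\sqrt n + D)$ rounds. Each node then learns its parent. By Property~\ref{prop:big_gift_degree} of \cref{lem:santa-big-gift-clusters}, every big gift $b\ne$ (the chosen leaf) has degree at most two. Since all leaves other than possibly the chosen gift are children, $b$ has exactly one child node $c'$ in the rooted tree. We set $X_{c'b}=1$, and this is a local one-round decision once parents are known.

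For correctness, the map from big gifts to children is injective because each child has at most one parent. It also covers every non-root child, since every non-root child's parent is a big gift (the tree is bipartite, and children are only adjacent to gifts). In the case $\abs{B^*}\ge\abs{C^*}$, the root is covered by the leaf gift, so every child gets a big gift. In the other case, every child except $c_i$ gets one, and $c_i$ is exactly the child with small gifts. This matches Property~2 of \cref{lem:santa-big-gift-clusters}.

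The main obstacle is that the trees of $G_{x^*}$ may have diameter $\Theta(n)$, so naive rooting or leaf selection would be too slow. This is handled by the Rake-and-Compress based rooting of \Cref{lm:root_a_tree} and \Cref{lem:compress}, which we invoke as black boxes. The same routine also aggregates the minimum-ID leaf gift, because aggregation along the $\Oo(\log n)$ Rake/Compress levels has the same simulation overhead.
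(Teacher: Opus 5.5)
Your proposal is correct and follows the paper's proof: root each tree of $G_{x^*}$ via \Cref{lm:root_a_tree} at the child $c^*$ with $Z_{c^*}\neq 0^{|\cS|}$ (your selected $c_i$), and give every non-root child its parent big gift, which is well defined and injective because big gifts have degree at most two. You also treat the case $|B^*|\ge|C^*|$ explicitly by rooting next to a leaf gift, which the paper only does inside the proof of \cref{lem:santa-big-gift-clusters}; in that case other leaf gifts have no child below them (not exactly one, as you claim) and simply stay unassigned, which does not affect your argument.
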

\begin{proof}
    In~$\Otilde (\sqrt n+D)$ rounds, we root all trees~$\cT^*$ where there is one child that cannot be satisfied with a big gift in parallel at their respective children~$c^*$ with~$Z_{c^*} \neq 0^{\abs{\cS}}$ using \Cref{lm:root_a_tree}.
    By \Cref{lem:santa-big-gift-clusters}, each big gift in~$\cT^*$ has degree at most~$2$.
    Thus, we integrally assign each child its parent big gift.
    Hence, we computed a big gift assignment~$X$ in $\Otilde (\sqrt n+D)$ rounds. 
\end{proof}

\subsection{Proof of the Approximation Factor}
\label{subsec:santa-approx-factor}

We conclude the previous subsections with the proof of \cref{thm:santa-approximation}.

\begin{proof}[Proof of \cref{thm:santa-approximation}]
    Without loss of generality, we assume that the values of gifts $v_g$ are integral. This can be seen as follows: round down the real values $v_g$ to the precision~$1/f$ for some $f= \Oo(\poly n)$ in order to become fractional. Then multiply all values $v_g$ by the same factor $f$ such that all values become integral. At the end, we obtain the solution by dividing by the factor $f$ again.

    Let $\beta =$ \SantaApproxFactor{} be as in \cref{lem:santa-randomized-selection-of-small-gifts} and~$\alpha = 4\beta$.
    Throughout this proof, we explicitly state which~$\alpha$ is used in the LP formulation of Santa Claus by writing~$\LPext{\alpha}{T}$.
    We compute an $\alpha$-approximation in two steps:
    First, we determine a value~$T \ge \OPT$ via global binary search and repeatedly approximately solving $\LPext{\alpha}{T}$ for different values of~$T$. 
    Recall that~$\OPT$ denotes the optimal (integral) value of the Santa Claus problem.
    Second, we carefully round an approximate solution~$(x,y)$ into an integral solution~$(X,Z)$ that is an \SantaApproxFactor{}-approximation.
    
    \proofsubparagraph{Find Optimal Value and Fractional Solution.} Let~$\eps = 1/2$ and~$V = \sum_{g \in \mathcal{G}} v_g$ denote the total sum of gift values.
    In~$\Oo(D)$ rounds, we elect a leader that coordinates the binary search.
    The goal of our binary search is to find an integer value~$T$ such that~$\LPext{\alpha/2}{T/2}$ is feasible and~$T \ge \OPT$.
    Let~$a,d \in \set{0, \dotsc, V}$ with~$a \le d$, then we maintain the following invariant for our search interval~$[a,d]$: $\LPext{\alpha/2}{a}$ is feasible and~$d \ge \OPT$.
    This is trivially satisfied in the beginning, as $\LPext{\alpha/2}{0}$ is always feasible and~$d = V \ge \OPT$.
    
    In each of the following steps, the leader communicates the next value~$T = \floor{(a + d)/2}$ to all other nodes in $\Oo(D)$ rounds.
    Based on~$T$ and $\alpha$, all gift nodes can locally distinguish whether they belong to the set of big gifts~$\cB := \{ g \in \cG : v_g \geq T/\alpha\}$ or the set of small gifts~$\cS := \{ g \in \cG : v_g < T/\alpha\}$. 
    Observe that the gift size distinction is the same for both $\LPext{\alpha}{T}$ and $\LPext{\alpha/2}{T/2}$, thus they only differ in Constraint~$(\ref{eq:lp-objective-value})$.
    Next, we try to compute a feasible fractional solution~$(x,y)$ to~$\LPext{\alpha/2}{T/2}$ using~\cref{thm:solving_mixed_packing_covering}.
    Depending on the outcome, the leader adjusts the search interval to~$[T, d]$ if $\LPext{\alpha/2}{T/2}$ is feasible and~$[a, T]$ otherwise.
    Once $d - a = 1$, we run one last test with value $T = d$, which finally determines the true value of~$T$ and allows us to terminate the search.
    Note that by \cref{thm:solving_mixed_packing_covering}, our solution~$(x,y)$ only undershoots the covering Constraint~$(\ref{eq:lp-objective-value})$ such that for all~$c \in \cC$, it holds that
    \begin{equation*}
        \sum_{s \in \cS} v_{s}\cdot y_{cs} \geq \frac{T}{2} \cdot \left(1 - \sum_{b \in \cB} x_{cb}\right).
    \end{equation*}
    Since binary search takes at most~$\Oo(\log V)=\Oo(\log n)$ steps\footnote{Here we use that the original values $v_g\in \R$ are bound by $\poly n$. Hence after rounding and scaling they are still bound by $\poly n$, and also $V=\sum_{g\in \cG}v_g=O(\poly n)$. Hence $\log V=O(\log n)$.} to find the integer $T$, where each step takes $\widetilde\Oo(D + D \cdot \log^3 n)$ rounds due to \Cref{thm:solving_mixed_packing_covering}, we can find~$T \ge \OPT$ and a feasible fractional solution~$(x,y)$ to~$\LPext{\alpha/2}{T/2}$ in~$\widetilde\Oo(D \cdot \log^4 n )$ rounds of \CONGEST.

    \proofsubparagraph{Round to Integral Solution.} 
    Based on~$(x,y)$, we compute a series of fractional assignments, where all transformations take~{$\Ohat(\sqrt{ n } + D)$} rounds.
    With \cref{cor:santa-acyclic-big-gift-graph}, we obtain a fractional solution~$(x',y)$ such that~$\Gx{x'}$ is acyclic.
    Using \cref{lem:santa-big-gift-clusters}, we compute a fractional assignment~$(x^*,y)$ where~$G_{x^*}$ is a forest of highly structured trees $\cT^*$.
    Based on~$y$, we use \cref{lem:santa-randomized-selection-of-small-gifts} to compute a fractional small gift assignment~$z$ such that all children in~$\cC_s$ receive \whp a total value of~$T/(2\beta)$ in small gifts.
    As a last step, with \cref{lem:santa-rounding-of-small-gifts}, we obtain an integral small gift assignment~$Z$  where each child~$c \in \cC_s$ receives \whp at least a value of
    \begin{equation*}
        \frac{T}{2\beta} - \max_{s \in \cS} v_s \ge \frac{2T}{\alpha} - \frac{T}{\alpha} = \frac{T}{\alpha},
    \end{equation*}
    where the inequality follows from the definition of small gifts~$\cS$ and~$\alpha$.
    By \cref{lem:santa-actual-big-gift-assignment}, we compute an integral big gift assignment~$X$ where each child receives exactly one big gift of value at least~$T/ \alpha$.
    Hence, we computed \whp an \SantaApproxFactor{}-approximation. Since all lemmas used to transform the fractional assignments into an integral assignment take~$\Ohat (\sqrt{n} + D)$ rounds, we obtain our approximation in~$\Ohat (\sqrt{n} + D)$ rounds of \CONGEST.
\end{proof}

\newpage
\section{Lower Bounds}
\label{sec:lower-bounds}

Throughout this section, we prove several hardness results for the Santa Claus problem.  
In \cref{subsec:lower-bounds-congest}, we show a lower bound of $\widetilde{\Omega}(\sqrt{n}+D)$ rounds to decide whether a solution of non-zero value exists in the \CONGEST model. 
This implies that computing any (multiplicative) approximation to the integral Santa Claus problem requires the same lower bound.
In \cref{subsec:lower-bounds-local}, we use a simpler version of this construction to show a diameter lower bound for the same problem in the \LOCAL model.
Finally, we show that exactly solving the fractional version of the Santa Claus problem requires at least diameter rounds in any distributed setting in \cref{subsec:lower-bounds-augementing-paths}.  

\subsection{\boldmath\CONGEST Model}
\label{subsec:lower-bounds-congest}

We start out by showing our following main hardness result.

\ThmSantaClausLowerBound*

Motivated by the construction of Peleg and Rubinovich \cite{doi:10.1137/S0097539700369740}, we construct a family of graphs $SC_n$ with $\Oo(n)$ nodes and $\Oo(\log n)$ diameter.
Then we show that solving the Santa Claus problem on $SC_n$ is as hard as solving the set-disjointness problem on sets of size $\Oo(\sqrt{n})$ in the two-party model. 

\paragraph{Construction.}
Intuitively, we construct a graph that consists of $\sqrt{n}$ disjoint paths of length~$\Oo(\sqrt{n})$, where each path is its own Santa Claus instance of value $0$ or $1$ depending on the input for Alice and Bob. To reduce the diameter of the graph to $\Oo(\log n)$, we add a binary tree of height $\Oo(\log n)$ on top of the paths, allowing for shortcuts between the endpoints of the graph. Then we show that the small bandwidth of the binary tree implies that we still require $\widetilde{\Omega}(\sqrt{n}+D)$ rounds to solve the Santa Claus problem on all paths.

More formally, let $n\in \mathbb{N}$ such that $\sqrt{n} \in \mathbb{N}$. Let $SC_n$ be a graph that consists of $\sqrt{n}$ disjoint paths~$P_i$, each of which consists of~$2\sqrt{n}-1$ vertices, where each path alternates between gifts of value 1 and children. All gifts in $SC_n$ are connected by a full binary tree $T$ with $\sqrt{n} - 1$ many leaves $\ell_1,\dots,\ell_{\sqrt{n} - 1}$ that are children. Each leaf $\ell_j$ connects to the $j$-th gift $g_{j}^{i}$ of every path $P_i$. To ensure a Santa Claus instance, the levels of $T$ are alternating between children and gifts, e.g., the parents of the leaves are gifts. In particular, the binary tree cannot interfere with the solution of the paths or change the overall optimal solution size. Therefore, we add a private gift $p_c$ to each child vertex~$c^T\in V(T)$. 
Let $c_j^i$, and $g_j^i$ denote children and gifts of path $P_i$, respectively.
Formally, we have $\sqrt{n}$ copies $P_1, \dots, P_{\sqrt{n}}$ of the path $\mathcal{P}$:
\begin{align*}
    V(\mathcal{P}) &=\{c_{j} \mid j=1,\dots,\sqrt{n}\} \cup \{g_{j} \mid j=1, \dots, \sqrt{n}-1\}, \\
    E(\mathcal{P}) &= \{\{c_{j},g_{j}\} \mid j=1,\dots,\sqrt{n} - 1\} \cup \{ \{g_{j},c_{(j+1)}\} \mid j=1,\dots,\sqrt{n}-1\}.
\end{align*}

\begin{figure}
    \centering
    \includegraphics[width=1\linewidth]{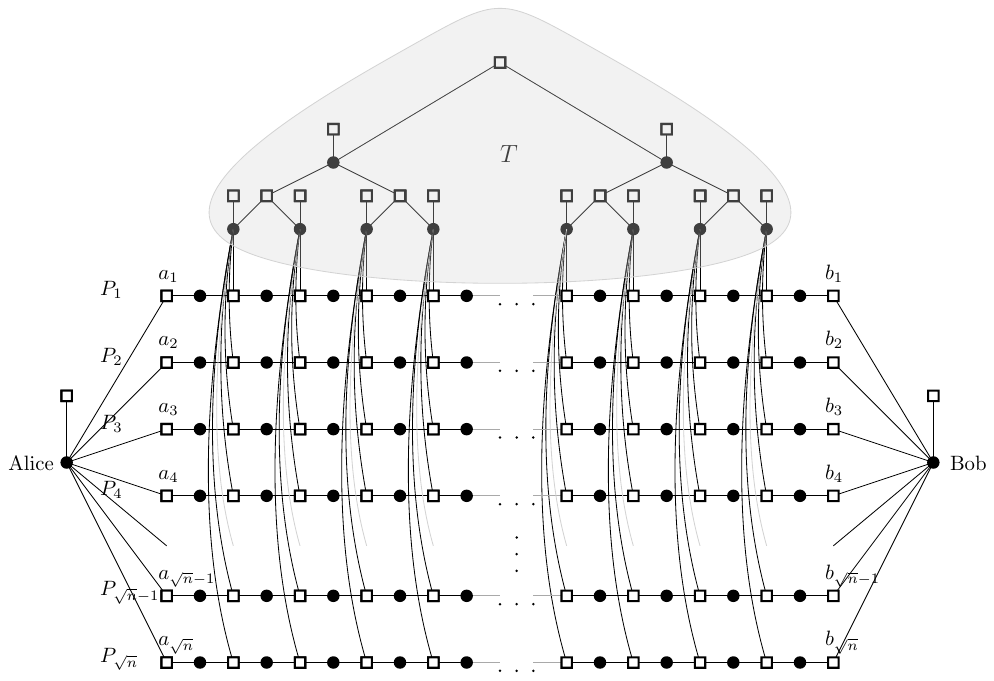}
    \caption{A visualization of the lower bound graph $SC_n$ where the boxes denote gifts and the circles denote children.}
    \label{fig:lb}
\end{figure}

Since we want to reduce the Santa Claus instance to a set-disjointness problem in the two-party model, we explain how Alice and Bob's input is incorporated in the construction. 
Let~$a_1a_2\dots a_{\sqrt{n}}$ and $b_1b_2\dots b_{\sqrt{n}}$ denote the $\sqrt{n}$-bit binary strings that Alice and Bob receive, respectively. 
Both Alice and Bob are child nodes and connected to $\sqrt{n}+1$ many gifts. One of those gifts is a private gift $p^A$ (resp. $p^B)$ that is used to make Alice (resp. Bob) happy, i.e., they have exclusive access to their gift. The remaining $\sqrt{n}$ gifts $g_i^A$ (resp. $g_i^B$) of value~$a_i$ (resp. $b_i$) with $i=1,\dots,\sqrt{n}$ are connected to Alice (resp. Bob) and the leftmost child $c_1^i$ (resp. rightmost child $c_{\sqrt{n}}^i$) of the path~$P_i$, see \cref{fig:lb}.

\paragraph{Hardness.}

In the following, we show how to use the graph $SC_n$ to obtain the hardness result of \cref{thm:SantaClausLowerBound}.

\begin{lemma}
\label{lem:set-disj-red}
    Any distributed algorithm that decides whether there exists a solution of value~$1$ in the graph $SC_n$ solves the set-disjointness problem for
    \begin{align*}
        A=\set{ i \mid a_i =0} \quad \text{and} \quad B=\set{i\mid b_i=0}.
    \end{align*}
\end{lemma}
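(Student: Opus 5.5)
We have to show that $SC_n$ has a solution of value $1$ (every child receives total value at least $1$) if and only if $A\cap B=\emptyset$, that is, iff there is no index $i$ with $a_i=b_i=0$. Then any distributed algorithm deciding the existence of a value-$1$ solution also decides set-disjointness. The proof reduces to a local counting argument on each path $P_i$, after showing that the tree, Alice and Bob never interfere.

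\textbf{Neutralising the auxiliary parts.} Every child of $T$ owns a private gift $p_c$ of value~$1$, and Alice and Bob own $p^A$ and $p^B$. Giving each of these children its private gift makes them happy, so they never need any other gift. Conversely, every tree gift and every gift $g^A_i,g^B_i$ can be freed for the path children. Without loss of generality we may therefore assume that tree children, Alice and Bob take only their private gifts. The only remaining resources for $P_i$ are then:
\begin{itemize}
\item the $\sqrt n-1$ path gifts $g^i_j$, each of value~$1$;
\item the gift $g^A_i$ of value $a_i$, adjacent to $c^i_1$;
\item the gift $g^B_i$ of value $b_i$, adjacent to $c^i_{\sqrt n}$.
\end{itemize}
The path gifts are also adjacent to tree leaves, but those leaves are satisfied by their private gifts, so we do not assign path gifts to them. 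The tree gifts are adjacent only to tree children and leaves, and gift values are~$1$, so they cannot help any path child.

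\textbf{Counting on one path.} The path has $\sqrt n$ children $c^i_1,\dots,c^i_{\sqrt n}$ but only $\sqrt n-1$ path gifts. Hence at least one child needs an external gift of value~$1$, and that child must be $c^i_1$ (via $g^A_i$ with $a_i=1$) or $c^i_{\sqrt n}$ (via $g^B_i$ with $b_i=1$).
\begin{itemize}
\item \emph{If $a_i=b_i=0$:} the available value is $\sqrt n-1<\sqrt n$, so some path child gets value $0$. The instance therefore has no value-$1$ solution.
\item \emph{If $a_i=1$:} give $g^A_i$ to $c^i_1$ and $g^i_j$ to $c^i_{j+1}$ for all $j$.
\item \emph{If $b_i=1$:} give $g^B_i$ to $c^i_{\sqrt n}$ and $g^i_j$ to $c^i_j$ for all $j$.
\end{itemize}
These assignments are disjoint across different paths. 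Each uses only edges of the instance, since $g^i_j$ is adjacent to both $c^i_j$ and $c^i_{j+1}$.

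\textbf{Conclusion and main subtlety.} Combining the cases, a solution of value $1$ exists iff for every $i$ we have $a_i=1$ or $b_i=1$, i.e.\ iff $A\cap B=\emptyset$.

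The main obstacle is the WLOG step: we must argue carefully that no optimal assignment can benefit from giving a path gift to a tree leaf, or a tree gift elsewhere. This is an exchange argument. Take any value-$1$ solution and reassign every child of $T$, Alice and Bob to their private gift. All their other gifts are released without lowering anyone's value below~$1$, so the restricted structure used above is without loss of generality. The counting on each path then applies directly, because the neighbourhood of each $c^i_j$ consists only of path gifts and, at the two ends, $g^A_i$ or $g^B_i$.
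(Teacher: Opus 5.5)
Your proof is correct and follows the paper's argument: an explicit shifted assignment along $P_i$ whenever $a_i=1$ or $b_i=1$ (tree children, Alice and Bob take their private gifts), and a pigeonhole count of $\sqrt n$ children against $\sqrt n-1$ value-$1$ gifts on a path with $a_i=b_i=0$. The exchange argument you flag as the main subtlety is harmless but unnecessary, since the neighbourhood of every $c^i_j$ already lies in $\{g^i_1,\dots,g^i_{\sqrt n-1},g^A_i,g^B_i\}$; the paper relies on exactly this fact for the infeasibility direction.
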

\begin{proof}
    If $A \cap B=\emptyset$, then for every $i \in [\sqrt{n}]$ at least one of $a_i$ or $b_i$ is $1$. Without loss of generality assume that $a_i=1$, then we can assign the gift $g_i^A$ with value $a_i=1$ to $c_{1}^i$ and for the rest of the $i$th path assign $g_{j}^i$ to $c_{(j+1)}^i$, and thus every child $c_{j}^i$ gets assigned one gift of value $1$. The children in the binary tree, as well as Alice and Bob, get assigned their private gifts of value $1$, and thus also get assigned at least one gift each. 

    On the other hand, if $A\cap B \neq \emptyset$, there exists at least one index $i$ such that $a_i=b_i=0$. 
    Thus, the $i$-th path contains $\sqrt{n}$ children, but only $\sqrt{n}-1$ gifts of value $1$, and since each child is only adjacent to the gifts of that path, at least one child cannot get assigned any gift. 
    Thus, the value of this instance is $0$. 
\end{proof}

Due to \cite{DBLP:journals/tcs/Razborov92}, it is well known that the set-disjointness problem defined in \Cref{lem:set-disj-red} requires $\Omega(\sqrt{n})$ bits of communication, since the bit strings are of length $\sqrt{n}$.
Consequently, it follows from \Cref{lem:set-disj-red} that Alice and Bob need to communicate at least $\Omega(\sqrt{n})$ bits to solve the given Santa Claus instance.
To bound the number of bits Alice and Bob can communicate in $T$ rounds on the $SC_n$ graphs, we need the notion of moving cuts inspired by~\cite{doi:10.1137/S0097539700369740}. 

\begin{definition}
    Let $G=(V,E)$ be a graph with two distinguished nodes $a,b \in V$.
    For all nodes~$v,w \in V$, let $d_{1+\ell}(v,w)$ denote the shortest path distance from $v$ to $w$ where each edge has cost~$1+\ell(e)$.
    We define a \emph{moving cut} as an assignment $\ell: E \rightarrow \mathbb{Z}_{\geq 0}$ with capacity~$C(\ell)=\sum_{e\in E} \ell(e)$ and distance~$T(\ell) \le d_{1+\ell}(a,b)$.
\end{definition}

For the rest of this section, think of the two distinguished nodes $a$ and $b$ as Alice and Bob.
The following lemma is motivated by~\cite{peleg00}.

\begin{lemma}
\label{lem:bandwidth_latency}
    Let $G=(V,E)$ be a graph with two distinguished nodes $a,b \in V$ where each edge carries $k$ bits of information per round.
    Suppose $\ell: E \rightarrow \mathbb{Z}_{\geq 0}$ is a moving cut of capacity~$C(\ell)$ and distance $T(\ell)$. Then any distributed protocol that completes in at most~$T(\ell)-1$ rounds can communicate at most $C \cdot k$ bits from $a$ to $b$.
\end{lemma}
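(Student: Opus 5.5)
The plan is the standard Peleg--Rubinovich style simulation argument. For every node $v$ define $d(v) := d_{1+\ell}(a,v)$, the distance from $a$ when each edge $e$ has length $1+\ell(e)$. The idea is that the moving cut ``travels'' away from $a$: at round $t$, the information at $b$ can only depend on Alice's input through edges that the cut lets pass. Concretely, I would prove by induction on $t$ the following invariant. Define $V_t := \{v : d(v) \le t\}$, the ball of radius $t$ around $a$. Then after $t$ rounds, the state of every node outside $V_t$ is a function of Bob's side (all inputs except $a$'s) together with the transcript $M_t$. Here $M_t$ is the set of all messages sent in rounds $1,\dots,t$ over edges $e$ that carry $\ell(e)\ge 1$ and are ``crossed by the cut'' at that time.

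To make this precise, I would have an edge $e=\{u,w\}$ with $d(u) \le d(w)$ contribute to $M$ only in rounds $t$ where $d(u) < t \le d(u)+\ell(e)$. An edge with $\ell(e)=0$ never needs to be charged, since its far endpoint enters the ball exactly one step after the near one. The inductive step goes as follows. A node $w \notin V_{t}$ receives in round $t$ messages from neighbours $u$. Either $u \notin V_{t-1}$, in which case $u$'s previous state is known by induction. Or $u\in V_{t-1}$, in which case $d(w) \le d(u)+1+\ell(e)$ together with $d(w) > t$ forces $t < d(u)+1+\ell(e)$, i.e.\ $t \le d(u)+\ell(e)$; this message is therefore recorded in $M_t$. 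Each edge is charged in at most $\ell(e)$ rounds with $k$ bits each, so $|M_t| \le k\sum_e \ell(e) = C(\ell)\cdot k$.

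To conclude, note that $b\notin V_{T(\ell)-1}$ since $d(b)\ge T(\ell)$. So after $T(\ell)-1$ rounds, $b$'s output is determined by its own side plus at most $C\cdot k$ bits that depend on $a$. Hence the protocol yields a two-party protocol in which Alice sends at most $C\cdot k$ bits. The main obstacle is the bookkeeping of exactly which rounds an edge must be charged, so that each edge is charged at most $\ell(e)$ times while the invariant still closes. I would fix the window $(d(u), d(u)+\ell(e)]$ as above and verify the boundary case $t = d(u)+\ell(e)+1$, where $w$ is already inside the ball and need not be simulated.
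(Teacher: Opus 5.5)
Your proposal is correct. It is the standard moving-cut simulation argument; the paper gives no proof of its own and only cites a reference for it. The argument grows the $(1+\ell)$-ball around $a$ and charges each edge $\{u,w\}$ with $d(u)<d(w)$ for exactly the $\ell(e)$ rounds in the window $(d(u),d(u)+\ell(e)]$, which is where your invariant needs it.

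The only thing to tighten is your closing remark about a two-party protocol. For Alice to actually produce $M$, she would need to know the non-$a$ inputs as well. The lemma does not need this, though: your counting conclusion already proves it. That conclusion is that, for every fixed assignment of the non-$a$ inputs, $b$'s state after $T(\ell)-1$ rounds takes at most $2^{C(\ell)\cdot k}$ values.
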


For a proof of \cref{lem:bandwidth_latency} refer to~\cite{PrinciplesOfDistributedComputing}.
In order to prove that any set-disjointness protocol on the graphs $SC_n$---and hence the Santa Claus problem---must run for at least $\widetilde{\Omega}(\sqrt{ n })$ rounds, we construct a moving cut with small capacity and large distance.

\begin{lemma}
\label{lem:moving_cut}
    For every $\gamma >0$ and $n \in \mathbb{N}$, there exists a moving cut $\ell_n$ in $SC_n$ with
    \begin{align*}
        C(\ell_n)=\mathcal{O}\left(\frac{1}{\gamma} \sqrt{n}\log n\right) \quad \text{and} \quad T(\ell_n)=\Omega\left(\frac{1}{\gamma}\sqrt{n}\right).
    \end{align*}
\end{lemma}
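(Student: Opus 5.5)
We construct the moving cut in the style of Peleg and Rubinovich. The path edges get weight zero, and the binary tree edges get large weights that depend on their level. Set $h := \lceil \log_2 \sqrt n\rceil$ for the height of $T$, and let $\delta := \lceil \sqrt n/\gamma \rceil$ be the target distance scale (for $\gamma$ large we should also have $\delta \ge 1$). Every edge that goes from a tree node at depth $d$ to its parent gets $\ell(e) := \delta$. The edges between the leaves $\ell_j$ and the gifts $g^i_j$, the path edges, and the edges incident to Alice and Bob all get $\ell(e) := 0$. This is a deliberately crude choice. The tree has $\Oo(\sqrt n)$ edges, so it would give capacity $\Oo(\sqrt n \cdot \sqrt n/\gamma)$, which is too much. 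The refinement is therefore to charge only a sparse set of tree edges.

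Concretely, fix the leaf positions $j$ that are multiples of $\delta' := \lceil \gamma\sqrt n\rceil^{-1}$. Mirroring the original construction, I would rather parametrize as follows. A leaf $\ell_j$ sits at horizontal position $j \in [\sqrt n -1]$. For each level of the tree, put weight $\delta$ only on those tree edges whose subtree spans a horizontal interval of length more than about $\gamma\cdot$(its horizontal span divided by $\delta$). The simpler version, which I would actually carry out, assigns to each tree edge $e$ between depth $d$ and $d+1$ the weight $\ell(e) = \lceil \sqrt n/(\gamma\, 2^{d}) \rceil$, roughly the horizontal span of its subtree divided by $\gamma$. Summing over levels gives $2^{d}$ edges of weight about $\sqrt n/(\gamma 2^d)$ at each of $\Oo(\log n)$ levels. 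This yields $C(\ell) = \Oo(\tfrac1\gamma \sqrt n \log n)$ plus an additive $\Oo(\sqrt n)$ from the ceilings, and the additive term is absorbed for $\gamma \le 1$ (for larger $\gamma$, take the nontrivial regime as the statement intends).

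For the distance bound, consider any $a$–$b$ path $Q$. Alice only touches the gifts $g^A_i$ and hence the column $j=1$; Bob only touches column $\sqrt n$. Therefore $Q$ must make horizontal progress $\sqrt n -1$. Path edges advance the column index by at most one per two edges at cost one each, while leaf edges $\{\ell_j,g^i_j\}$ and private gifts do not change the column. Moving horizontally by $\Delta$ through the tree requires ascending to an ancestor whose subtree spans both columns, and I would check that the cost of the up and down edges used is at least about $\Delta/\gamma$ by the choice of weights. Decompose $Q$ into maximal segments that stay on the paths and segments that go through the tree. Each path segment covering horizontal distance $\Delta$ costs at least $\Delta$, and each tree excursion covering $\Delta$ costs at least $\Omega(\Delta/\gamma)$. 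Summing over the segments gives $d_{1+\ell}(a,b) = \Omega(\min(1,1/\gamma)\sqrt n) $, which is $\Omega(\sqrt n/\gamma)$ in the relevant regime $\gamma\ge 1$. In that regime the capacity weights should be scaled accordingly, so that $\gamma$ trades capacity against distance consistently.

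The main obstacle is the tree-excursion cost bound. A subtree at depth $d$ spans $\approx \sqrt n/2^{d}$ columns, but a path may enter it, walk along a path graph for a while, re-enter the tree elsewhere, and so on. The argument I would try first charges each unit of horizontal progress either to a path edge or to the weighted edge into the lowest common ancestor of the entry and exit leaves. The weight of that edge is proportional to its span, which upper-bounds the progress made. Getting the constants and the $\gamma$-scaling consistent between capacity and distance, and reconciling the $\gamma<1$ versus $\gamma>1$ regimes, is where I expect the real care to be needed.
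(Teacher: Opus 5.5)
\textbf{There is a genuine gap: your weights round up.} Your tree-excursion argument is the right one and matches the paper's. A walk entering the tree at $\ell_j$ and leaving at $\ell_{j'}$ must pass through a common ancestor of height $H$ with $2^H>|j-j'|$, and the edges below that ancestor pay for the horizontal progress. The problem is the rounding in $\ell(e)=\lceil \sqrt n/(\gamma 2^d)\rceil$. Every one of the $\Theta(\sqrt n)$ tree edges then carries weight at least $1$, so $C(\ell)=\Theta(\sqrt n\log n/\gamma+\sqrt n)$.

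The additive $\sqrt n$ is absorbed only for $\gamma=O(\log n)$, not just $\gamma\le 1$ as you say. Your distance bound needs $\gamma\ge 1$, so as written you obtain the lemma only for $1\le\gamma\le O(\log n)$. ``Scaling the capacity weights accordingly'' does not fix this, because the excess comes from the ceiling on each individual edge, not from the overall scale. This fails exactly where the lemma is used. \Cref{cor:LB} sets $\gamma=ck\log n$ and needs $C(\ell)\cdot k$ to be a small constant times $\sqrt n$. Rounding up forces $C(\ell)\cdot k\ge k\cdot\Theta(\sqrt n)$. A capacity below the number of tree edges is only possible if most tree edges get weight $0$.

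\textbf{The fix: round down and let the $1+\ell$ metric absorb the loss.} Give the parent edge of a tree node at height $h$ (leaves at height $0$) the weight $\lfloor 2^h/\gamma\rfloor$, and give every other edge weight $0$.
\begin{itemize}
\item \emph{Capacity.} There are at most $\sqrt n/2^h$ nodes at height $h$, so each level contributes at most $\sqrt n/\gamma$. Hence $C(\ell)=O(\sqrt n\log n/\gamma)$ with no additive term; the lowest $\approx\log_2\gamma$ levels are free.
\item \emph{Distance.} Each such edge costs $1+\lfloor 2^h/\gamma\rfloor\ge 2^h/\gamma$. An excursion through an ancestor at height $H$ therefore costs at least $2(2^H-1)/\gamma\ge 2|j-j'|/\gamma$. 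A path segment advances one column per two unit-cost edges. Summing over the $\sqrt n-1$ columns from $c^i_1$ to $c^{i'}_{\sqrt n}$ gives $T(\ell)\ge\min\{2,2/\gamma\}(\sqrt n-1)=\Omega(\sqrt n/\gamma)$ for all $\gamma\ge 1$, simultaneously with the capacity bound.
\end{itemize}

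\textbf{The regime $\gamma<1$ needs no reconciling.} With path edges uncharged, a single path $P_i$ already gives an Alice--Bob route of length $O(\sqrt n)$. So the lemma is only meaningful, and only used, for $\gamma\ge 1$.

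Finally, drop the abandoned constructions with $\delta$ and $\delta'$ from the write-up; they are not coherent as stated.
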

\begin{proof}
    Set $\ell(e)=0$ for all edges that are not part of the binary tree. Also set $\ell(e)=0$ for all edges between children and their private gifts in the binary tree. Let the height of a tree node be the distance to the leaves, i.e., the level of the leaves is $0$, and the level of their parents is $1$, and the level of the root is $\log(\sqrt{n})=\Oo(\log n)$. Assign to the parent edge of each node with height $h$ the value $\ell(e)=\lfloor \frac{2^h}{\gamma} \rfloor$. Since there are at most $\sqrt{n}/2^h$ nodes at height $h$, the total capacity of the moving cut $\ell$ is at most 
    \begin{align*}
        \sum_{h=0}^{\Oo(\log n)}\frac{\sqrt{n}}{2^h} \cdot \frac{2^h}{\gamma} = \sum_{h=0}^{\Oo(\log n)}\frac{\sqrt{n}}{\gamma} = \Oo\left(\frac{1}{\gamma}\sqrt{n}\log n\right).
    \end{align*}

    To check that any path from Alice to Bob is at least of length $\Omega\left(\frac{\sqrt{n}}{\gamma}\right)$ in the $1+\ell$ distance, we observe that if we shortcut $2k$ edges of a path $P_i$ via the tree, we have to use a path in $T$ that uses at least one node of height $\lceil \log k \rceil$. Thus, a path of length $2k$ (in one of the $P_i$'s) can be replaced with a path in the tree of $(1+\ell)$ distance at least
    \begin{equation*}
        \sum_{h=1}^{\lceil \log k \rceil} 2 \cdot\frac{2^{h-1}}{\gamma} = \sum_{h=1}^{\lceil \log k \rceil}\frac{2^h}{\gamma} \geq k/\gamma. \qedhere
    \end{equation*}
\end{proof}

Using the moving cut of \cref{lem:moving_cut}, we show that deciding if there exists a solution to the Santa Claus problem requires $\widetilde{\Omega}(\sqrt{n}/k)$ rounds.

\begin{lemma}
\label{cor:LB}
    Any distributed algorithm that uses $k$-bit messages per edge in each round to decide whether there is a solution to the Santa Claus problem of value at least $1$ in the $SC_n$ graph requires $\widetilde{\Omega}(\sqrt{n}/k)$ rounds.
\end{lemma}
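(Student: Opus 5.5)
We reduce to the two-party lower bound for set-disjointness. Suppose a distributed algorithm $\mathcal A$ with $k$-bit messages decides, on every $SC_n$, whether there is a Santa Claus solution of value at least $1$, and suppose it terminates within $R$ rounds. By \Cref{lem:set-disj-red}, the output of $\mathcal A$ on the instance built from Alice's string $a$ and Bob's string $b$ decides set-disjointness of $A=\set{i\mid a_i=0}$ and $B=\set{i\mid b_i=0}$. Only the gifts $g_i^A$ carry Alice's input and only the gifts $g_i^B$ carry Bob's input. We therefore treat Alice's node together with her gadget gifts as the distinguished node $a$, Bob's side analogously as $b$, and all other nodes as input-independent. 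By \Cref{lem:bandwidth_latency}, the only way information about $a$'s input can influence the decision made at $b$ (or vice versa) is through communication along the network.

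Concretely, we fix $\gamma$ so that $T(\ell_n)\ge R+1$ for the moving cut $\ell_n$ of \Cref{lem:moving_cut}. Since $T(\ell_n)=\Omega(\sqrt n/\gamma)$, we may take $\gamma=\Theta(\sqrt n/R)$ whenever $R=o(\sqrt n)$; otherwise there is nothing to prove. By \Cref{lem:bandwidth_latency}, a protocol running in $R\le T(\ell_n)-1$ rounds transfers at most
\[
C(\ell_n)\cdot k=\Oo\!\left(\tfrac{1}{\gamma}\sqrt n\log n\cdot k\right)=\Oo(R\,k\log n)
\]
bits between $a$ and $b$. The standard simulation argument of~\cite{doi:10.1137/S0097539700369740} (Alice simulates the nodes on her side of the moving cut as the cut moves, Bob the rest, and they exchange only the messages crossing the cut) turns $\mathcal A$ into a two-party protocol with this communication. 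The final output is then broadcast to the other party at $\Oo(1)$ extra bits.

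Razborov's bound~\cite{DBLP:journals/tcs/Razborov92} requires $\Omega(\sqrt n)$ bits for set-disjointness on $\sqrt n$-bit strings, even for randomized protocols. Hence $R\,k\log n=\Omega(\sqrt n)$, that is, $R=\Omega(\sqrt n/(k\log n))=\widetilde\Omega(\sqrt n/k)$.

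The main obstacle is justifying the simulation precisely. The moving cut has to separate Alice's input gadget from Bob's, which holds because every edge of weight $0$ lies on paths whose $(1+\ell)$-length is still long. Both parties must also be able to simulate the shared tree nodes round by round, which is exactly what the definition of $T(\ell)$ as a $(1+\ell)$-distance guarantees. I would cite the standard proof for this step rather than redo it, and only check that the gadget gifts adjacent to $a$ and $b$ are at $(1+\ell)$-distance $\Oo(1)$ from them, so they can be merged into the respective distinguished nodes.
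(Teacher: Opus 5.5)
Your proposal is correct and follows the paper's proof. Both combine Lemma~\ref{lem:set-disj-red} and Razborov's $\Omega(\sqrt n)$ bound with the moving cut of Lemma~\ref{lem:moving_cut} and Lemma~\ref{lem:bandwidth_latency}. The only difference is the parametrization: you set $\gamma=\Theta(\sqrt n/R)$ from the running time, rightly restricting to $R=o(\sqrt n)$ so that $\gamma\ge 1$ and $T(\ell_n)=\Omega(\sqrt n/\gamma)$ is meaningful. The paper instead fixes $\gamma=ck\log n$, so that $C(\ell_n)\cdot k=\Oo(\sqrt n/c)$ is too small, and reads off $R\ge T(\ell_n)=\Omega(\sqrt n/(k\log n))$. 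Your merging of the gadget gifts $g_i^A,g_i^B$ into the distinguished nodes is a detail the paper leaves implicit.
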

\begin{proof}
    By \cref{lem:set-disj-red}, any distributed algorithm that decides whether there exists a solution to the Santa Claus problem of value at least $1$ solves set-disjointness for 
    \begin{align*}
        A=\set{ i \mid a_i =0} \quad \text{and} \quad B=\set{i\mid b_i=0}.
    \end{align*}
    From \cite{DBLP:journals/tcs/Razborov92}, it follows that any distributed protocol that solves this set-disjointness instance of size $\Theta(\sqrt{n})$ requires $\Omega(\sqrt{n})$ bits of communication between Alice and Bob. 
    For a sufficiently large constant $c>0$ and $\gamma = c\cdot k \cdot \log n$, \Cref{lem:moving_cut} states that there exists a moving cut $\ell_n$ with
    \begin{align*}
        C(\ell_n)=\mathcal{O}\left(\frac{\sqrt{n}}{c k}\right) \quad \text{and} \quad T\left(\ell_n\right)=\Omega\left(\frac{\sqrt{n}}{c k \log n}\right).
    \end{align*}
    By \Cref{lem:bandwidth_latency}, at most $C(\ell_n) \cdot k = \mathcal{O}(\sqrt{n}/c)$ bits can be communicated from Alice to Bob in~$T(\ell_n) - 1$ rounds. This is insufficient to solve set-disjointness, and hence decide the Santa Claus problem. Consequently, any algorithm requires at least $T= \widetilde{\Omega}(\sqrt{n}/k)$ rounds.
\end{proof}

Putting all the pieces together, we are now in the position to prove \cref{thm:SantaClausLowerBound}.

\begin{proof}[Proof of \Cref{thm:SantaClausLowerBound}]
    We argue that any \CONGEST multiplicative approximation algorithm that computes a (local) solution to the Santa Claus problem also solves the decision problem described in \Cref{cor:LB}. 
    If the solution size is $1$, then any (multiplicative) approximation algorithm will output a solution of size $>0$, and thus all children get assigned at least one gift of value $1$. 
    On the other hand, if the solution size is $0$, at least one child gets no gift of value $>0$. 
    Let $x_A$ (resp. $x_B$) be a bit that indicates whether all children Alice (resp. Bob) has access to get at least one gift. 
    In particular, $x_A=1$ if and only if all children that Alice has access to get at least one gift. 
    Now Alice and Bob can exchange $x_A$ and $x_B$ and output~$x_A \wedge x_B$, and thus decide whether it is a $0$ or $1$ instance; which implies the $\widetilde{\Omega}(\sqrt{n})$ lower bound by \Cref{cor:LB} with $k=\Oo(\log n)$.
\end{proof} 

\subsection{\boldmath\LOCAL Model}
\label{subsec:lower-bounds-local}

In the stronger \LOCAL model, a simpler instance suffices to show that we require at least diameter rounds to compute any (multiplicative) approximation to the integral Santa Claus problem.
\begin{lemma}
\label{lem:integral_local_diameter_LB}
    Any \LOCAL algorithm that computes any (multiplicative) approximation of the integral Santa Claus problem requires $\Omega(D)$ rounds in the worst case.
\end{lemma}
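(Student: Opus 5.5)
We would prove \Cref{lem:integral_local_diameter_LB} by an indistinguishability argument on a single long path, reusing one path $P_i$ of the $SC_n$ construction without the binary tree (which was only needed to shrink the diameter in \CONGEST). Concretely, take a path alternating children and gifts of value~$1$, $c_1,g_1,c_2,g_2,\dots,g_{k-1},c_k$, and attach an extra gift $g_0$ to $c_1$ and an extra gift $g_k$ to $c_k$. We consider two instances, $I_1$ and $I_2$, on the same graph; they differ only in the values of the endpoint gifts. In $I_1$ we set $v_{g_0}=1$ and $v_{g_k}=0$; in $I_2$ we set $v_{g_0}=0$ and $v_{g_k}=1$. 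Both instances have diameter $D=\Theta(k)$ and optimal value~$1$.

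The key observation is that each instance has a unique assignment of value~$1$. There are exactly $k$ children and exactly $k$ gifts of value~$1$, and every child must receive at least one of them. In $I_1$ this forces $c_j \gets g_{j-1}$ for all $j$, a ``shift left''. In $I_2$ it forces $c_j \gets g_j$, a ``shift right''. Any multiplicative approximation must output a solution of positive value, so with integral values every child has to receive a value-$1$ gift. Hence an approximation algorithm must output exactly the unique optimum in each instance.

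Now look at the middle child $c_{\lceil k/2\rceil}$. In $I_1$ it must output $g_{\lceil k/2\rceil-1}$, and in $I_2$ it must output $g_{\lceil k/2\rceil}$. Its distance to both $g_0$ and $g_k$ is about $k$, which is $\Omega(D)$. So after $r<k-1$ rounds its \LOCAL view, meaning the $r$-hop neighbourhood together with the identifiers, the values and any randomness, is identical in the two instances. We fix the same identifier assignment in both.

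A deterministic algorithm therefore gives the same output in both instances and is wrong in one of them. A randomized algorithm has the same output distribution in both, so it fails with probability at least $1/2$ in one of them. This yields the $\Omega(D)$ bound.

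The main subtlety is making sure the argument genuinely rules out \emph{any} approximation factor, not just exact solutions. This is exactly what uniqueness of the positive-value assignment provides, since any approximation must achieve positive value. Degree bookkeeping is a minor point: we can give $g_0$ and $g_k$ value~$0$ versus~$1$ rather than removing them, so that the graph topology, and hence $D$, is identical in both instances.
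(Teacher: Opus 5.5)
Your proof is correct and follows the same route as the paper: an alternating child--gift path with an extra value-$1$ gift available at one end or the other, where the unique positive-value assignment shifts in opposite directions and the middle node cannot distinguish the two cases within $\Omega(D)$ rounds. The differences are cosmetic. You keep the topology fixed by swapping values $0/1$ on both end gifts, whereas the paper attaches an extra gift at one end. You also spell out the uniqueness of the positive-value assignment and the randomized case, which the paper leaves implicit.
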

\begin{proof}
    Similar to the paths used in the construction of $SC_n$,  
    let $P$ be an alternating child-gift path that starts and ends with a child node and has $2n-1$ vertices in total.
    Thus, we have~$n$ child nodes and $n-1$ gift nodes and consider three instances:
    \begin{enumerate}
        \item $I_1$: $P$ remains unchanged with children on both ends,
        \item $I_2$: Add one gift to the leftmost child of the path,
        \item $I_3$: Add one gift to the rightmost child of the path.
    \end{enumerate}
    Observe that $I_1$ has solution size $0$, since there are more children than gifts, and $I_2$ and $I_3$ have a solution of size $1$, which can be obtained by assigning each gift to its left (resp. right) child neighbor; see \Cref{fig:path_assignment}.
    Thus, any approximation algorithm needs to learn in which direction the gifts have to be assigned, i.e., in particular, the middle gift of the path needs to know if it gets assigned to its left or right child neighbor. This can only be decided if this gift knows whether there is an additional gift at the left or right end of the path. Therefore, any \LOCAL approximation algorithm takes at least $D/2=\Omega(D)$ rounds.
\end{proof}

\begin{figure}
    \centering
    \includegraphics[width=1\linewidth]{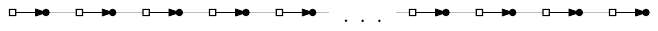}
    \caption{The gift assignments for instances $I_2$ and $I_3$ defined in \cref{lem:integral_local_diameter_LB} can be obtained by assigning each gift to its left or right child, respectively.}
    \label{fig:path_assignment}
\end{figure}

\subsection{Augmenting Paths and Fractional Solution}
\label{subsec:lower-bounds-augementing-paths}

Any alternating gift-child path that start at an unassigned gift and end with a child can be considered an augmenting path.
Both the lower bound family of graphs $SC_n$ and the single path construction of the previous sections show that the non-existence of short augmenting paths does not imply a good approximation.
In other words, the non-existence of short augmenting paths is not a certificate for a good approximation, thereby separating the Santa Claus problem from the Matching problem. 
We show that there exist two families of graphs where non-optimal solutions have only long augmenting paths. Notably, the non-optimal solutions do not provide any approximation guarantee.

\begin{corollary}\label{rem:no_augmenting_path}
    There exists a family of graphs, with diameter $\Oo(\log n)$, and a non-optimal solution such that any augmenting path has length $\Omega(\sqrt{n})$. In particular, the value of the non-optimal solution is $0$.
\end{corollary}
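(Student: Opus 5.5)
We reuse the lower bound family $SC_n$ from \Cref{subsec:lower-bounds-congest}, whose diameter is $\Oo(\log n)$ thanks to the binary tree $T$. We fix the inputs so that the instance has optimal value $1$ but almost all structure is ``misaligned'': take $a_1=\dots=a_{\sqrt n}=1$ and $b_1=\dots=b_{\sqrt n}=0$. By \Cref{lem:set-disj-red}, $A\cap B=\emptyset$, so the optimum is $1$: on every path $P_i$, child $c^i_1$ receives $g^A_i$ and each $g^i_j$ goes to $c^i_{j+1}$. Tree children and Alice and Bob take their private gifts.

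As the non-optimal solution, we assign everything the same way except on the paths. There we assign $g^i_j$ to $c^i_j$ for all $j\le \sqrt n-1$, leaving $c^i_{\sqrt n}$ unassigned for every $i$. The gifts $g^A_i$ remain unassigned, and Bob's gifts $g^B_i$ have value $0$, so they are useless. This solution has value $0$, so it is non-optimal and gives no multiplicative approximation.

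The main step is to show that every augmenting path is long. An augmenting path starts at an unassigned gift of positive value, alternates along non-solution and solution edges, and ends at an unsatisfied child. The only unassigned positive-value gifts are the $g^A_i$. The only unsatisfied children are the $c^i_{\sqrt n}$, together with tree leaves, which are not unsatisfied since they hold their private gifts.

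I would argue that every alternating path from some $g^A_i$ reaches the far end of a path $P_{i'}$ through that path itself, because of the following constraint. In an alternating path, after entering a child via a non-matching edge, we must leave via that child's matched gift. Each tree child's matched gift is its private gift $p_c$, which has degree one. So any alternating path entering a tree child dead-ends. Similarly, a path entering a path child $c^i_j$ must continue to $g^i_j$; from a gift it may go to any non-matched child neighbor, including a leaf $\ell_j$, which again dead-ends.

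Hence alternating paths never use the tree. Starting from $g^A_i\to c^i_1\to g^i_1\to c^i_2\to\cdots$, the path is forced to traverse $P_i$ and reach $c^i_{\sqrt n}$ only after $\Omega(\sqrt n)$ edges. The main obstacle is this case check that tree shortcuts are blocked. It works because every tree child is matched to a private degree-one gift, so I expect it to go through cleanly. The graph has $\Theta(n)$ nodes and diameter $\Oo(\log n)$, which finishes the proof.
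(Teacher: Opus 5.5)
Your proposal is correct and follows the paper's proof. You use the same instance $SC_n$ with $a_i=1$ and $b_i=0$ and the same misaligned assignment $g^i_j\mapsto c^i_j$; you misalign every path where the paper misaligns only one, which changes nothing. Your observation that tree children and Alice hold degree-one private gifts is exactly the unstated reason behind the paper's claim that $P_i$ is the only augmenting path. One small slip: the internal gift nodes of $T$ are also unassigned (and their values are not fixed), so the $g^A_i$ are not the only unassigned gifts that could have positive value, but your dead-end argument through tree children rules out augmenting paths starting there as well.
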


\begin{proof}
    Consider the graph class $SC_n$. 
    Assume that $a_i=1$ and $b_i=0$ for all $i=1,\dots,\sqrt{n}$. Consider the non-optimal solution where, in one path $P_i$, each gift $g_j^i$ is assigned to $c_j^i$. 
    Then~$c_{\sqrt{n}}^i$ has no assigned gift of value $1$, and thus the solution size is $0$. 
    Now the only augmenting path increasing the value of the current solution is $P_i$ itself, and thus has length~$\Omega(\sqrt{n})$.
\end{proof}

\begin{corollary}\label{rem:no_augmenting_path2}
    There exists a family of graphs and a non-optimal solution such that any augmenting path has length $\Omega(n)$. In particular, the value of the non-optimal solution is $0$.
\end{corollary}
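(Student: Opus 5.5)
We need a family of graphs together with a non-optimal solution of value $0$ in which every augmenting path has length $\Omega(n)$. The plan is to reuse the single-path instance from the proof of \Cref{lem:integral_local_diameter_LB}.

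Take the instance $I_2$: an alternating child--gift path $P$ with $n$ children $c_1,\dots,c_n$ and $n-1$ gifts $g_1,\dots,g_{n-1}$, all of value $1$, where $g_j$ is adjacent to $c_j$ and $c_{j+1}$. Add one extra gift $g_0$ of value $1$ adjacent only to the leftmost child $c_1$. The graph has $2n$ vertices. Its optimal value is $1$, achieved by assigning $g_0$ to $c_1$ and $g_j$ to $c_{j+1}$ for every $j$.

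The non-optimal solution assigns each $g_j$ to its left neighbour $c_j$ for $j=1,\dots,n-1$ and leaves $g_0$ unassigned. Every child except $c_n$ receives a gift of value $1$, while $c_n$ receives nothing, so the solution has value $0$.

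It remains to show that every augmenting path is long. An augmenting path has to start at an unassigned gift and end at a child, and it must raise the value, so it has to end at the unhappy child $c_n$. The only unassigned gift is $g_0$, whose only neighbour is $c_1$. The graph is a path, so the unique alternating path from $g_0$ to $c_n$ is
\[
g_0, c_1, g_1, c_2, \dots, g_{n-1}, c_n,
\]
which has $2n-1=\Omega(n)$ edges.

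To be careful about the definition, we should also argue that no alternating path ending anywhere other than $c_n$ can increase the objective. Such a path only shifts gifts between children who already have one, so $c_n$ still receives no gift and the minimum stays $0$. This step is the main point to check, but it is immediate.

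If a larger diameter is not wanted, the same graph still serves: its diameter is $\Theta(n)$, and that matches the $\Omega(n)$ path length.
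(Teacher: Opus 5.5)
Your proposal is correct and is the paper's proof: you take instance $I_2$ from \Cref{lem:integral_local_diameter_LB}, assign every path gift to its left child, and observe that the only augmenting path is the whole path, of length $\Omega(n)$. You also fill in two details the paper leaves implicit: the extra gift $g_0$ stays unassigned, and alternating paths that do not end at $c_n$ cannot raise the value.
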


\begin{proof}
    Consider the family of path graphs defined in \Cref{lem:integral_local_diameter_LB}. 
    Again, if all gifts are assigned to their left child in the instance $I_2$, the only augmenting path is the whole path, and thus has length $\Omega(n)$.
\end{proof}

Finally, we show that the fractional relaxation of the Santa Claus problem is already hard to solve in the distributed setting.
Via a reduction from the fractional perfect matching problem, we show the more general statement that any general fractional Mixed Packing-Covering LP solver requires at least diameter rounds. 

\begin{lemma}
    Any distributed algorithm that computes a feasible fractional solution or determines infeasibility for all Mixed Packing-Covering LPs takes at least $\Omega(D)$ rounds.
\end{lemma}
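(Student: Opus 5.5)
The plan is to reduce from fractional perfect matching on a long path, using an indistinguishability argument in the style of \Cref{lem:integral_local_diameter_LB}. Bipartite fractional perfect matching is a mixed packing-covering LP. Concretely, we require $\sum_{e \ni v} u_e \ge 1$ (covering) and $\sum_{e\ni v} u_e \le 1$ (packing) for every vertex, with $u_e\ge 0$. This is also the \LP with all gifts big and $T=1$, at least once there are as many gifts as children. On a path, feasibility and the feasible solutions are forced by parity, so the entire argument lives on paths. Since the lemma concerns \emph{any} distributed algorithm, we argue in \LOCAL, where the only constraint is information propagation: after $r$ rounds, a node's output is a function of its radius-$r$ neighborhood.

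\textbf{Step 1 (instances).} Take a path $P = v_0 v_1 \dots v_{2k}$ with $2k$ edges, which has a unique perfect matching. Its fractional perfect matching is also unique. Indeed, the constraint at $v_0$ forces $u_{v_0v_1}=1$, then the constraint at $v_1$ forces $u_{v_1v_2}=0$, and so on alternately. Now consider a second instance $P'$ that differs from $P$ only near the right end. For example, append one extra vertex, so the path has odd length; or remove the edge to the last vertex $v_{2k}$ and put a degree-one pendant in a different position. The choice should make $P'$ either infeasible, or feasible only with the \emph{opposite} alternating pattern $u_{v_0v_1}=0$, $u_{v_1v_2}=1,\dots$ when forced from the right end. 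The cleanest version uses three instances exactly as in \Cref{lem:integral_local_diameter_LB}.
\begin{itemize}
\item $I_1$ is a path of odd vertex count. It is infeasible, since summing the constraints over the two color classes gives different totals.
\item $I_2$ adds a pendant vertex at the left end, making it feasible with the pattern forced from the left.
\item $I_3$ adds a pendant vertex at the right end, making it feasible with the pattern forced from the right.
\end{itemize}
In $I_2$ and $I_3$ the middle edge of the path receives different values ($1$ versus $0$). The diameter is $\Theta(n)$.

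\textbf{Step 2 (indistinguishability).} Consider an algorithm that runs in fewer than $D/2 - O(1)$ rounds. In all three instances the middle edge's endpoints see identical radius-$r$ neighborhoods, with identical IDs assigned to the shared part. Therefore they output the same value for the middle edge in $I_2$ and in $I_3$. But the feasible solutions of these two instances are unique and disagree on that edge, so the algorithm errs on one of them. The instance $I_1$ is not needed for the contradiction, but it shows that deciding infeasibility is equally hard. This holds even in \LOCAL, so a fortiori in \CONGEST. It also holds for exact solving, which is the regime the lemma addresses. It does not apply to $(1\pm\eps)$-approximate solving, since approximate solutions may smear the values, and I will note that restriction.

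\textbf{Main obstacle.} The delicate point is uniqueness of the feasible solution. Forcing requires the equality constraints to be exact, so I must argue that the LP has no slack. Summing the covering constraints over one side and the packing constraints over the other shows every vertex constraint is tight. Then propagation from the degree-one endpoint forces every value along the path. I also need to make sure the ID assignments can be chosen consistently across the instances, which is handled by fixing the IDs on the common middle segment.
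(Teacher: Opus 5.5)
Your proposal is correct and uses the same reduction as the paper: fractional perfect matching on a path, written as a mixed packing-covering LP, combined with a locality argument. The one real difference is which pair of instances you show to be indistinguishable. The paper argues that an algorithm must tell odd-length paths (feasible) from even-length paths (infeasible). You instead compare the two \emph{feasible} instances $I_2$ and $I_3$, which have the same number of vertices and unique solutions that disagree on the middle edge, exactly as in \Cref{lem:integral_local_diameter_LB}.

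Your choice is slightly more robust. Suppose nodes know $n$; the paper's solver assumes $n_p$, $n_c$, $m$ are known. Then an odd-length path and an even-length path already differ in $n$ and can be told apart in zero rounds. Your pair $I_2$, $I_3$ does not have this weakness. Your argument also shows that even producing a solution when one exists needs $\Omega(D)$ rounds.

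Two small fixes are needed.
\begin{itemize}
\item The first paragraph of Step~1 is wrong as stated. The path $v_0\dots v_{2k}$ has $2k+1$ vertices, so it has no fractional perfect matching: forcing from $v_0$ gives $u_{v_{2k-1}v_{2k}}=0$, which violates the constraint at $v_{2k}$. This path is precisely your $I_1$, so drop that paragraph and keep the three-instance version.
\item Your ``main obstacle'' is not an obstacle for the LP you wrote. With both $\ge 1$ and $\le 1$ at every vertex, each vertex constraint is already an equality. Uniqueness then follows directly by propagation from a degree-one endpoint. The summing argument is only needed for the Santa Claus specialization, where gifts carry only packing constraints.
\end{itemize}
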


\begin{proof}
    We can model the perfect fractional matching problem for a given graph $G=(V,E)$ as a Mixed Packing-Covering LP:
    \begin{align*}
        & \quad  \sum_{e \in \delta(v)}x_e \leq 1 & \forall v \in V, \\
        & \quad \sum_{e \in \delta(v)}x_e \geq 1 & \forall v \in V, \\
        &\quad 0\leq x_e \leq 1 & \forall e \in E.
    \end{align*}
    To solve this problem, any distributed algorithm needs to distinguish between paths of odd length (LP is feasible) and paths of even length (LP is infeasible).
    This implies a lower bound of $\Omega(D)$ rounds.
\end{proof}

\newpage
\section{Solver for Mixed Packing-Covering LPs}
\label{sec:LP_solver}

In this section, we show how to solve mixed packing-covering LPs in the \CONGEST model. Our result strongly builds upon the parallel algorithm of~\cite{MRWZ16}, and our contribution is the implementation in the \CONGEST model. 
We start by formally defining two equivalent---up to scaling---formulations of mixed packing-covering LPs.

\begin{definition}[Mixed Packing-Covering LPs]
Given $\Pm \in \R_{\ge 0}^{n_p \times m},\Cm \in \R_{\ge 0}^{n_c \times m}, \pv \in \R_{\ge 0}^{n_p}$ and~$\cv \in \R_{\ge 0}^{n_c}$, the \emph{mixed packing-covering LP} $(\Pm, \Cm, \pv, \cv)$ is defined as
    \begin{align}
    \label{def:min-MPC}
        \min\{ \lambda \mid \Pm \x \leq \lambda \pv, \Cm \x \geq \cv, \x\geq 0 \}.
    \end{align}
    For any solution $\x$ to LP \eqref{def:min-MPC} of value $\lambda$, vector $\widetilde{\x} = \x/\lambda$ is a solution of value $\gamma = 1 /\lambda$ to
    \begin{align}
    \label{def:max-MPC}
        \max\{ \gamma \mid \Pm \x\leq  \pv, \Cm \x \geq \gamma \cv, \x\geq 0 \}
    \end{align}
    and vice versa.
\end{definition}

Notably, we have the following relation between the approximate solutions for both definitions of mixed packing-covering LPs.

\begin{restatable}{lemma}{equivalentFormulations}
    \label{rem:equivalent-formulations}
    Let $\lambda^*$ and $\gamma^*$ be the optimal values of \eqref{def:min-MPC} and \eqref{def:max-MPC}, respectively.
    Let $(\x,\lambda)$ be a solution to LP \eqref{def:min-MPC} with $\lambda^* \leq \lambda \leq (1+\eps)\lambda^*$.
    Then $(\widetilde{\x}, \gamma)$  is a solution to LP \eqref{def:max-MPC} with
    \begin{align*}
         \gamma^*/(1+ \eps) \leq \gamma \leq \gamma^* .
    \end{align*}
\end{restatable}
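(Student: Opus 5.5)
The plan is to establish the bijection between feasible points of the two formulations and then transfer the approximation guarantee through the reciprocal map $\lambda \mapsto 1/\lambda$.

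\textbf{Step 1: the map sends feasible points to feasible points.} Take a feasible $(\x,\lambda)$ of \eqref{def:min-MPC} with $\lambda > 0$, and set $\widetilde{\x} = \x/\lambda$ and $\gamma = 1/\lambda$. Dividing $\Pm\x \le \lambda\pv$ by $\lambda$ gives $\Pm\widetilde{\x} \le \pv$. Dividing $\Cm\x \ge \cv$ by $\lambda$ gives $\Cm\widetilde{\x} \ge \gamma\cv$. Nonnegativity is preserved. So $(\widetilde{\x},\gamma)$ is feasible for \eqref{def:max-MPC}.

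The reverse direction works the same way. Given a feasible $(\x',\gamma')$ of \eqref{def:max-MPC} with $\gamma' > 0$, the point $(\x'/\gamma', 1/\gamma')$ is feasible for \eqref{def:min-MPC}.

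\textbf{Step 2: the optima are reciprocal.} Since the map is objective-preserving up to reciprocals in both directions, the two optimal values satisfy $\gamma^* = 1/\lambda^*$. To see this, note that every feasible $\lambda$ yields a feasible $1/\lambda$, so $\gamma^* \ge 1/\lambda^*$. Symmetrically, every feasible $\gamma'$ yields a feasible $1/\gamma'$, so $\lambda^* \le 1/\gamma^*$. Combining the two gives equality. If the optima are only infima or suprema, the same inequalities hold in the limit.

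\textbf{Step 3: transfer the bound.} With $\gamma = 1/\lambda$ and $\lambda^* \le \lambda \le (1+\eps)\lambda^*$, taking reciprocals reverses the inequalities:
\[
\frac{1}{(1+\eps)\lambda^*} \le \gamma \le \frac{1}{\lambda^*},
\]
that is, $\gamma^*/(1+\eps) \le \gamma \le \gamma^*$.

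\textbf{Main obstacle: degenerate cases.} The algebra above is routine; the care goes into the cases where the reciprocal is undefined.

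If $\lambda^* = 0$, then $\lambda = 0$ as well, so the map is undefined. In this case $\gamma^* = \infty$, since $t\x$ is feasible for \eqref{def:max-MPC} for every $t > 0$. I will assume $\lambda^* > 0$ and note this exclusion explicitly.

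If $\cv = 0$, the problem is trivial and can be handled separately.

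Infeasibility of \eqref{def:min-MPC} corresponds to $\gamma^* = 0$ (or both problems infeasible), and the statement is then vacuous.
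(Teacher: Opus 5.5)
Your proof is correct and matches the paper's argument: divide by $\lambda$ to obtain a feasible point of \eqref{def:max-MPC}, then invert $\lambda^*\le\lambda\le(1+\eps)\lambda^*$ using $\gamma^*=1/\lambda^*$. The paper takes $\gamma^*=1/\lambda^*$ directly from the ``vice versa'' remark in the definition and ignores degenerate cases, so your Step~2 and the treatment of $\lambda^*=0$ only add rigor. One small correction: \eqref{def:max-MPC} is always feasible via $\x=0$, $\gamma=0$, so the ``both infeasible'' case you mention never arises.
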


We defer the proof to Appendix \ref{sec:deferred_proofs}.
Throughout the rest of this section, we are interested in the formulation \eqref{def:max-MPC} where the covering constraints are violated instead of the packing constraints. 
We show the following main result.

\begin{restatable}{theorem}{ThmMixedLPSolver}\label{thm:solving_mixed_packing_covering}
    Let $\gamma^*$ be the optimal value of \eqref{def:max-MPC}.
    Given a mixed packing-covering LP~\eqref{def:max-MPC} with $n$ constraints and $m$ variables, we can compute a feasible solution $(\x,\gamma)$ such that 
    \begin{align*}
        (1 - \eps)\gamma^* \leq \gamma \leq \gamma^*
    \end{align*}
    in $\Otilde\left(\frac{D\log^2 n \log m}{\eps^3}\right)$ rounds of \CONGEST.
\end{restatable}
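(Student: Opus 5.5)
We plan to implement the parallel mixed packing-covering algorithm of \cite{MRWZ16} in \CONGEST, using the bipartite ``constraint--variable'' structure of the LP. Each constraint (packing row of $\Pm$ or covering row of $\Cm$) and each variable $x_j$ is assigned to a node of the communication graph, with a nonzero entry only between a constraint and a variable hosted at the same or adjacent nodes, as for \LP. We first reduce \eqref{def:max-MPC} to the decision version: for a guess $\gamma$, scale the covering rows by $1/\gamma$ and either find $\x\ge 0$ with $\Pm\x\le (1+\eps)\pv$ and $\Cm\x \ge (1-\eps)\cv$, or certify infeasibility. A leader computed in $\Oo(D)$ rounds runs a binary search over $\gamma$ with $\Oo(\log(m/\eps))$ guesses, broadcasting each guess along a BFS tree in $\Oo(D)$ rounds. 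By \Cref{rem:equivalent-formulations}, a $(1+\eps)$-approximation in one formulation converts to a $(1-\Oo(\eps))$-approximation in the other, and we rescale so that all packing constraints hold exactly.

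For one decision instance, we follow the multiplicative-weights scheme of \cite{MRWZ16}. Initialize $\x$ to a small uniform value. Maintain packing weights $p_i = \exp(\eta (\Pm\x)_i/\pv_i)$ and covering weights $c_k = \exp(-\eta(\Cm\x)_k/\cv_k)$ with $\eta=\Theta(\log n/\eps)$. In each iteration, every variable $j$ computes the ratio of its weighted packing cost $\sum_i p_i P_{ij}/\pv_i$ to its weighted covering gain $\sum_k c_k C_{kj}/\cv_k$. It is multiplicatively increased by $(1+\eps)$ if the ratio is at most $(1+\eps)$ times the global average ratio $\sum_i p_i / \sum_k c_k$, evaluated only over covering constraints that are not yet satisfied. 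Each of these local quantities is a sum over neighbors, so it costs $\Oo(1)$ rounds. The global sums $\sum_i p_i$ and $\sum_k c_k$ are aggregated by convergecast and broadcast over the BFS tree in $\Oo(D)$ rounds, which is where the $D$ factor enters. We track logarithms of the weights so that everything fits in $\Oo(\log n)$-bit messages, rounding to $1/\poly n$ precision as in \Cref{rem:finite_precision_congest}.

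The analysis follows the potential argument of \cite{MRWZ16}. The potential $\log\sum_i p_i - \log\sum_k c_k$ does not increase by more than a $(1+\Oo(\eps))$ factor relative to the progress made. Whenever no variable qualifies for an increase, LP duality (a weighted combination of the constraints) certifies infeasibility. Their bound of $\Otilde(\log^2 n\log m/\eps^3)$ parallel iterations then yields a total of $\Otilde(D\log^2 n\log m/\eps^3)$ rounds, with the binary search absorbed into the $\Otilde$.

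The main obstacle we expect is faithfully transferring the iteration bound. Specifically, we must confirm that every global quantity in \cite{MRWZ16} reduces to a constant number of sum or max aggregations per phase, so that no step needs more than $\Oo(D)$ rounds. We must also check that approximate, finite-precision values for these aggregates preserve their step-size and termination analysis. If some step of their algorithm needs sequential thresholds, our first attempt would be to replace each exact threshold with a $(1+\eps)$-bucketed one, accepting an extra $\Oo(\log(m)/\eps)$ factor, and absorb that factor into the stated bound.
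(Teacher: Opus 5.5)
Your architecture matches the paper's: run the feasibility algorithm of \cite{MRWZ16} with constraints and variables as nodes, note that each iteration needs only the two global sums $\sum_j y_j$, $\sum_j z_j$ and a few stop/infeasible flags (hence $\Oo(D)$ rounds), and wrap it in a search over the objective value. However, the two places where you commit to details do not give the stated bound.

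\textbf{The update rule.} The rule you describe is not the one of \cite{MRWZ16}. Your weights are $\exp(\eta(\Pm\x)_i/\pv_i)$ with $\eta=\Theta(\log n/\eps)$, and the normalized loads grow up to $1$. So multiplying every qualifying variable by $1+\eps$ can shift an exponent by $\eta\eps=\Theta(\log n)$ in a single iteration, and the first-order potential argument fails. Shrinking to a uniform step $1+\Theta(\eps^2/\log n)$ with a $(1+\eps)$-relative threshold gives Young's parallel scheme. Its known iteration bound is $\eps^{-4}$, which is exactly what \cite{MRWZ16} improve upon. The $\eps^{-3}$ count you quote is proved only for the specific rule the paper implements verbatim:
\begin{itemize}
\item work with the slack instance $\Pm\x\le(1-10\eps)\ones$, $\Cm\x\ge\ones$;
\item update only variables with $a_i\le(1-\eps/50)b_i$;
\item use the \emph{variable-dependent} factor $1+\Delta_i/K$, where $\Delta_i=\tfrac12(1-a_i/b_i)$ and $K=10\ln n/\eps$;
\item stop when one packing row or all covering rows reach $K$, and output $\x/K$.
\end{itemize}
For the same reason, your fallback of paying an extra $\Oo(\log(m)/\eps)$ factor cannot be ``absorbed'': that $1/\eps$ is precisely the gap between $\eps^{-3}$ and $\eps^{-4}$.

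\textbf{The optimization-to-feasibility reduction.} As written it is too expensive and lacks a starting point. The entries of $\Pm,\Cm$ are arbitrary, so a search over $\gamma$ needs an initial bracket, which you do not provide. Moreover, $\Oo(\log(m/\eps))$ feasibility calls, each at accuracy $\eps$, multiply the cost by $\log(m/\eps)$. The $\log m$ part is not hidden by $\Otilde$ here, because the bound tracks $\log^2 n\log m$ explicitly. For constant $D$ and $\eps$ and $m=\poly(n)$ you would get $\Oo(\log^4 n)$ rounds rather than $\Otilde(\log^3 n)$. The paper instead uses the reduction of \cite{Young2001}. It makes $\Oo(\log\log m)$ calls with $\eps'=1/2$ to reach a constant-factor estimate. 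It then makes $\Oo(\log(1/\eps))$ calls whose accuracies shrink geometrically, the $i$-th to last having $\eps'=\Omega(\eps(4/3)^i)$. The $\eps'^{-3}$ costs thus form a geometric series dominated by the single final call at accuracy $\eps$, plus $\Oo(D)$ rounds per step to broadcast the outcome and the next guess. Even plain bisection over the exponent inside a polynomial-factor bracket would suffice. It needs only $\Oo(\log\log m+\log(1/\eps))$ calls, an overhead the $\Otilde$ does hide.
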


We start by formalizing the underlying communication network in \Cref{sec:underlying_comm}. In \Cref{sec:opt_to_feas}, we show how to solve the optimization problem via a reduction to the mixed packing-covering feasibility problem and a binary search framework.
In \Cref{sec:solve_feas}, we explain how to solve the feasibility problem in the \CONGEST model. 
Finally, we describe an implementation of the algorithm in the \CONGEST model with small messages of size~$\Oo(\log n)$ in \Cref{subsec:impl_cong}---therefore, we do not address message sizes in previous sections.

\subsection{Underlying Communication Networks}
\label{sec:underlying_comm}
Formally, we consider a bipartite graph $G=(V,E)$ with $V=C \cup X$ where $|C|=n_c+n_p$ contains a node for each covering and each packing constraint and $|X|=m$ a node for each variable. 
Each variable node $i\in[m]$ is connected to a packing constraint node $j_p \in [n_p]$ or a covering constraint node $j_c \in [n_c]$ if the corresponding entry of $\Cm$ or $\Pm$ is non-zero.
More formally, we have $E=\{\set{i,j_p} \mid \Pm_{ij_p} \neq 0\} \cup \{ \set{i,j_c} \mid\Cm_{ij_c}\neq 0\}$. 
Additionally, each packing constraint node $j_p$ and covering constraint node~$j_c$ has access to the $j_p$-th row $\Pm_{j_p}$ and the~$j_c$-th row~$\Cm_{j_c}$, respectively. 
Slightly abusing notation, we omit the additional index and refer to covering and packing constraints by just~$j$.
Furthermore, we assume that every node implicitly knows~$n_p,n_c$, and~$m$---otherwise this can be computed in $\Oo(D)$ rounds.

In \Cref{sec:Santa_Claus}, our communication network is a bipartite child-gift graph $G=(\mathcal{C} \cup \mathcal{G}, E)$ where there is an edge $\{c,g\} \in E$ present if the child $c$ desires $g$. 
In the following lemma, we show that these two communication models are equivalent up to a constant round overhead, i.e., we can invoke the algorithm of \Cref{thm:solving_mixed_packing_covering}  using the communication network of \Cref{sec:Santa_Claus}.

\begin{restatable}{lemma}{EquivalentModels}
    Given a weighted bipartite graph $G=(\mathcal{C} \cup \mathcal{G}, E)$ and vertex weights $v_g \in \mathbb{R}_{\geq 0}$ for all $g \in \mathcal{G}$, we can simulate the algorithm of \Cref{thm:solving_mixed_packing_covering} with input $($\LP$,\eps)$, on~$G$ such that any round of the algorithm takes $\Oo(1)$ rounds.
\end{restatable}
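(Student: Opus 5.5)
The plan is to realize the constraint–variable communication network of \Cref{sec:underlying_comm} as a virtual network hosted on the child–gift graph $G=(\mathcal{C}\cup\mathcal{G},E)$. Each virtual node is assigned to a host in $G$. The key point is that every constraint–variable adjacency must map either to a single host or to a single edge of $G$.

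\textbf{Host assignment.}
\begin{itemize}
\item The variables $x_{cb}$ and $y_{cs}$ are indexed by edges $\{c,g\}\in E$. I would let child $c$ host the variable node for every incident edge.
\item The covering constraint \eqref{eq:lp-objective-value} for $c$ is hosted by $c$.
\item Constraint \eqref{eq:lp-child-gets-only-one-big-gift} for $c$ is hosted by $c$.
\item The constraints \eqref{eq:lp-gifts-sum-up-to-one} for pairs $(c,s)$ are hosted by $c$.
\item The packing constraints \eqref{eq:lp-big-gift-assigned-to-one-child} and \eqref{eq:lp-small-gift-assigned-to-beta-many-children} for a gift $g$ are hosted by $g$.
\end{itemize}
Each host knows its rows of $\Pm$ and $\Cm$. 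A child knows its incident edges and, after one round, the values $v_g$ of its neighbouring gifts. From these it computes the coefficients $v_s$, $T$ and the indicator of big versus small gift locally, with $T$ broadcast beforehand.

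\textbf{Virtual edges.} Next I check which virtual edges connect variables to constraints.
\begin{itemize}
\item All variables appearing in \eqref{eq:lp-objective-value}, \eqref{eq:lp-gifts-sum-up-to-one} and \eqref{eq:lp-child-gets-only-one-big-gift} for child $c$ are variables $u_{cg}$ hosted at $c$ itself. These virtual edges are internal to $c$ and cost nothing to simulate.
\item The gift constraints for $g$ involve exactly the variables $u_{cg}$ with $\{c,g\}\in E$, each hosted at the other endpoint of a real edge.
\end{itemize}
So every virtual edge is either internal or maps to a distinct real edge. Moreover, each real edge $\{c,g\}$ carries exactly one virtual edge, namely variable $u_{cg}$ to the gift constraint of $g$. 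Hence one virtual round is simulated in one real round, with $O(\log n)$-bit messages per virtual edge.

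\textbf{The obstacle: internal work at a child.} The algorithm of \Cref{thm:solving_mixed_packing_covering} presumably also uses aggregation steps, such as global sums or maxima for the binary search and potential computations. Those run over a BFS tree of $G$ in $O(D)$ rounds, which matches the diameter of the virtual network up to constants. The internal computation at $c$, including the up to $\deg(c)$ constraints of type \eqref{eq:lp-gifts-sum-up-to-one}, is free, since local computation is unbounded in \CONGEST.

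I must also check that diameters correspond. Contracting each child with its hosted nodes shows that the virtual network and $G$ have diameters within a constant factor of each other. Therefore the bound of \Cref{thm:solving_mixed_packing_covering}, stated in terms of $D$, transfers with $O(1)$ overhead.
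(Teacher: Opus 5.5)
Your proposal is correct and takes the same approach as the paper. Children host all edge variables together with their own constraints \eqref{eq:lp-objective-value}, \eqref{eq:lp-gifts-sum-up-to-one} and \eqref{eq:lp-child-gets-only-one-big-gift}, gifts host \eqref{eq:lp-big-gift-assigned-to-one-child} and \eqref{eq:lp-small-gift-assigned-to-beta-many-children}, and because the nonzero pattern of the constraint matrices coincides with $E$, each round costs $O(1)$ rounds on $G$. You are more explicit than the paper, which does not say where \eqref{eq:lp-gifts-sum-up-to-one} is hosted and does not discuss the diameter.

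One small correction. Contracting the child clusters only shows that $G$ has diameter at most that of the virtual network, but you need the reverse bound. That bound follows either from running the global aggregations directly on a BFS tree of $G$, as you also note, or from the fact that constraint \eqref{eq:lp-objective-value} of $c$ is adjacent to every variable hosted at $c$ (since $v_s>0$), so each child cluster has constant diameter.
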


We defer the proof to Appendix~\ref{sec:deferred_proofs}.

\subsection{Reducing Optimization to Feasibility}
\label{sec:opt_to_feas}

Similarly to other sequential and parallel approaches for solving mixed packing-covering LPs, we reduce the optimization problem \ref{def:min-MPC} (or equivalently \ref{def:max-MPC}) to the $(1+\eps)$-feasibility problem via a standard binary search and scaling argument.
For the sake of completeness, we include the reduction that is done explicitly in \cite{Young2001} and stated implicitly in \cite{MRWZ16}.

\paragraph{\boldmath$(1+\eps)$-Feasibility.} Similar to existing literature, we define the $(1+\eps)$-feasibility problem with respect to LP~\eqref{def:min-MPC}.

\begin{definition}[$(1+\eps)$-Feasibility Problem]
\label{def:feasibility_prob}
    Given $\Pm \in \R_{\ge 0}^{n_p \times m},\Cm \in \R_{\ge 0}^{n_c \times m}, \pv = \ones$ and~$\cv = \ones$, the mixed packing-covering $(1+\eps)$-feasibility problem is to either find $\x\geq 0$ such that 
    \begin{align}
    \label{cond:feas}
        0<\max_j(\Pm \x)_j\leq(1+\eps)\min_j(\Cm \x)_j
    \end{align}
    or conclude that the following LP is  infeasible
    \begin{align}
        \Cm \x& \geq \ones \nonumber \\
        \Pm \x& \leq (1-10 \eps) \ones \label{LP:feas_prob}\\
        \x & \geq 0. \nonumber
    \end{align}
\end{definition}

The feasibility problem and the optimization problem are closely related. 
In the sequential setting, it is common to combine the feasibility problem with a binary search framework to obtain an approximation for the optimization problem. 
In the distributed setting, this technique directly implies a running time of $\Omega(D)$, as we have to globally communicate the next value for which we want to test feasibility. 
However, since many global problems---including the Santa Claus problem---have an $\Omega(D)$ lower bound, it is possible to use this technique. 
In \cref{sec:solve_feas}, we show how to solve the $(1+\eps)$-feasibility problem in the \CONGEST model.
More formally, we show the following lemma.

\begin{lemma}
\label{thm:solve_feasibility}
    There exists a deterministic \CONGEST algorithm for the mixed packing-covering $(1+\eps)$-feasibility problem with round complexity 
    \begin{equation}
        \label{eq:solve_feasibility}
        \Oo\left(D \cdot \frac{1000 \ln^2 n \ln\left(\frac{m}{\eps}\right)}{\eps^3}\right).
    \end{equation}
\end{lemma}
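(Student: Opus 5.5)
We implement the parallel width-independent algorithm of Mahoney, Rao, Wang and Zhang~\cite{MRWZ16} for the $(1+\eps)$-feasibility problem of \Cref{def:feasibility_prob} on the bipartite constraint--variable network of \Cref{sec:underlying_comm}. We show that each of its iterations costs $\Oo(D)$ \CONGEST rounds, and that the number of iterations is $\Oo(\ln^2 n\ln(m/\eps)/\eps^3)$.

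The MRWZ scheme works with exponential penalty (softmax/softmin) potentials. It keeps a vector $\x$, initialized to a small uniform value such as $\x_i=\frac{\eps}{m\,\max\text{-entry}}$. It tracks the smoothed maximum $\mathrm{smax}_\mu(\Pm\x)=\mu\ln\sum_j e^{(\Pm\x)_j/\mu}$ and the smoothed minimum $\mathrm{smin}_\mu(\Cm\x)=-\mu\ln\sum_j e^{-(\Cm\x)_j/\mu}$, with $\mu=\eps/(c\ln n)$. In each round, every variable $i$ computes the multiplicative gradient ratio
\[
 r_i=\frac{\sum_j \Pm_{ji}\,p_j}{\sum_j \Cm_{ji}\,q_j},\qquad p_j\propto e^{(\Pm\x)_j/\mu},\; q_j\propto e^{-(\Cm\x)_j/\mu}.
\]
Every variable with $r_i\le 1+\eps$ is then multiplied by $(1+\alpha)$, where $\alpha=\Theta(\eps/\ln n)$. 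The algorithm stops with success once $\mathrm{smin}(\Cm\x)\ge 1$. If instead no variable qualifies, or the phase counter exceeds its budget, it reports that LP~\eqref{LP:feas_prob} is infeasible.

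For correctness we reuse their analysis without change. If no coordinate satisfies $r_i\le 1+\eps$, then the probability vectors $p,q$ form a dual certificate, via weak duality, that \eqref{LP:feas_prob} is infeasible. If some coordinates do satisfy it, then increasing all of them at once raises $\mathrm{smax}$ by at most $(1+\Oo(\eps))$ times the increase in $\mathrm{smin}$. This is where the choice $\alpha\le\mu$ enters: it keeps every constraint row from changing by more than $\mu$ in a step, so the exponentials move by at most a constant factor. Summing these increments shows that condition \eqref{cond:feas} holds at termination. For the iteration bound, each coordinate is multiplied by $(1+\alpha)$ at most $\Oo(\ln(m/\eps)/\alpha)$ times before some packing row would exceed $\Oo(1)$, which would contradict the potential invariant. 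In addition, their phase argument, which tracks how often the threshold $\min$ of the gradient ratio must increase, contributes a factor $\Oo(\ln n/\eps^2)$. Multiplying these gives $\Oo(\ln^2 n\ln(m/\eps)/\eps^3)$ iterations. I would quote this count from \cite{MRWZ16} rather than rederive it; the constant $1000$ comes from tracking their constants.

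For the distributed implementation, each iteration needs only local computation plus two global aggregates. Locally, a constraint node $j$ knows $(\Pm\x)_j$ or $(\Cm\x)_j$ after one round of exchanges with its adjacent variables. It sends $\Pm_{ji}e^{(\Pm\x)_j/\mu}$ to each neighbor, and each variable sums these contributions to obtain the numerator and denominator of $r_i$. The normalizations of $p$ and $q$ cancel in $r_i$ only if both are normalized by global sums, so we compute the sums $\sum_j e^{(\Pm\x)_j/\mu}$ and $\sum_j e^{-(\Cm\x)_j/\mu}$ by convergecast and broadcast over a BFS tree in $\Oo(D)$ rounds. The termination test ($\mathrm{smin}\ge 1$, or whether any variable qualified) is likewise one global aggregation.

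The step I expect to be the main obstacle is message size. Exponentials of the form $e^{(\Pm\x)_j/\mu}$ can be as large as $n^{\Oo(1/\eps)}$, and $\x$ may become tiny. I would handle this by storing only logarithms. All quantities are shifted by the current global maximum, which is itself one more aggregate. Logarithms are rounded to additive precision $\eps/\poly n$, which perturbs $r_i$ by only a $(1\pm\eps/\poly n)$ factor; this slack is absorbed into the $1+\eps$ threshold with a constant-factor adjustment. Sums of exponentials are computed in log-sum-exp form along the tree, which keeps every message at $\Oo(\log n)$ bits.

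Multiplying the $\Oo(D)$ cost per iteration by the iteration count yields \eqref{eq:solve_feasibility}.
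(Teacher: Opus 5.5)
Your distributed layer matches the paper's proof. Constraint nodes compute their exponentials and per-edge products locally. The two global normalizers $y^{(t)},z^{(t)}$ and the stop/infeasible flags are aggregated over a BFS tree in $\Oo(D)$ rounds (\Cref{lem:inner_prod,lem:rounds_per_it}). The iteration count is imported from \cite{MRWZ16}.

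\textbf{The gap.} The algorithm you actually specify is not the one of \cite{MRWZ16}. For your algorithm, the correctness step and the iteration count need incompatible step sizes.
\begin{itemize}
\item You multiply every qualifying coordinate by the same factor $1+\alpha$ with $\alpha=\Theta(\eps/\ln n)=\Theta(\mu)$. As you note, a row can then move by $\Theta(\mu)$ and the exponentials by a constant factor.
\item Then the second-order term is a constant fraction of the first-order term. Take a row of small softmax weight $p_j$ whose exponent $(\Pm\x)_j/\mu$ moves by $s=\Theta(1)$. Its contribution to $\Delta\mathrm{smax}$ is about $\mu p_j(e^{s}-1)$, not the linear $\mu p_j s$. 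Symmetrically, a covering row contributes about $\mu q_j(1-e^{-s})$ to $\Delta\mathrm{smin}$ instead of $\mu q_j s$.
\item So the test $r_i\le 1+\eps$ only gives $\Delta\mathrm{smax}\le \Oo(1)\cdot\Delta\mathrm{smin}$, not $(1+\Oo(\eps))\,\Delta\mathrm{smin}$, and \eqref{cond:feas} does not follow.
\item With a uniform step you need $\alpha=\Oo(\eps\mu)=\Oo(\eps^2/\ln n)$. Your own count, $\Oo(\ln(m/\eps)/\alpha)$ iterations per phase times $\Oo(\ln n/\eps^2)$ phases, then becomes $\Oo(\ln^2 n\ln(m/\eps)/\eps^4)$. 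That is essentially Young's parallel bound, which \cite{MRWZ16} improves on.
\end{itemize}
So you cannot quote their $\eps^{-3}$ count for the scheme you wrote down.

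\textbf{The missing ingredient.} It is the adaptive per-coordinate step of \Cref{alg:mixed_covering/packing}. Only coordinates with $a_i\le(1-\eps/50)b_i$ move, and they are multiplied by $1+\Delta_i/K$ with $\Delta_i=\tfrac12(1-a_i/b_i)$ and $K=10\ln n/\eps$.
\begin{itemize}
\item Use $e^u\le 1+u+u^2$ on $[0,1]$ and Cauchy--Schwarz with $(\Pm\x)_j<K$. This gives $\Delta\lmax(\Pm\x)\le\sum_i a_i\x_i\frac{\Delta_i}{K}(1+\Delta_i)$.
\item Similarly, over covering rows still below $K$, $\Delta\lmin(\Cm\x)\ge\sum_i b_i\x_i\frac{\Delta_i}{K}(1-\Delta_i)$.
\item The choice of $\Delta_i$ gives $a_i(1+\Delta_i)-b_i(1-\Delta_i)=-2\Delta_i^2 b_i\le 0$.
\end{itemize}
Thus the second-order error is paid for by the gap $b_i-a_i$, and the potential $\lmax-\lmin$ never increases. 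Relative steps up to $1/(2K)$, which is your $\alpha\approx\mu$, are safe, because coordinates with a small gap automatically take small steps. This is what makes the $\eps^{-3}$ count valid.

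Replace your update by this one, and use the paper's stopping rule: stop when some packing row or all active covering rows reach $K$, and output $\x/K$. Your distributed implementation then goes through verbatim and coincides with the paper's proof. Two smaller points are fine as they stand. Your log-domain handling of message sizes is a reasonable alternative to \Cref{rem:finite_precision_congest}. Your threshold $1+\eps$, versus the asymmetric gap $(1-10\eps,1+\eps)$ of \Cref{def:feasibility_prob}, needs only a constant adjustment.
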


\paragraph{Reduction.} Any mixed packing-covering LP $(\Pm,\Cm,\cv,\pv)$ can be rescaled to solve the problem 
\begin{align*}
    \min \set{ \lambda \mid \widetilde{\Pm} \x \leq \lambda \cdot 1, \widetilde{\Cm} \x \geq 1, \x\geq 0},
\end{align*}
where each entry of the $j$th row of $\Pm$ and $\Cm$ is divided by $\pv_j$ and $\cv_j$, respectively.
We show that this formulation can be solved approximately, i.e., we can find $(\lambda,\x)$ such that 
\begin{align*}
    \lambda^* \leq \lambda \leq (1+\eps)\lambda^*.
\end{align*}
Using binary search, we choose different values for $\lambda'$ and $\eps'$ and find a sequence of solutions to the feasibility problem invoked with~$\Pm/\lambda',\Cm$ and $\eps'$.
Formally, \cite{Young2001} showed the following.

\begin{lemma}[\cite{Young2001}]
    \label{lem:young-approx}
    The approximate optimization problem reduces to a sequence of approximate feasibility subproblems: 
    \begin{itemize}
        \item $\Oo(\log \log m)$ subproblems with $\eps'=1/2$ and 
        \item $\Oo\left(\log (1/\eps)\right)$ subproblems where the $i$th--to--last subproblem has $\eps'=\Omega\left(\eps\left(\frac{4}{3}\right)^i\right)$.
    \end{itemize}
\end{lemma}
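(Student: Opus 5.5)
We plan to reproduce Young's reduction, using the scale invariance of the feasibility problem. After rescaling so that $\pv=\cv=\ones$, consider the optimum $\lambda^*$ of $\min\{\lambda \mid \widetilde{\Pm}\x\le\lambda\ones,\ \widetilde{\Cm}\x\ge\ones,\ \x\ge 0\}$. For a guess $\lambda'$ and accuracy $\eps'$, we call the feasibility oracle of Definition~\ref{def:feasibility_prob} on $(\widetilde{\Pm}/\lambda',\widetilde{\Cm},\eps')$. The two possible outcomes give the following.

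\begin{itemize}
\item If the oracle returns $\x$, then rescaling $\x$ so that $\min_j(\widetilde{\Cm}\x)_j=1$ gives a solution of value at most $(1+\eps')\lambda'$. Hence $\lambda^*\le(1+\eps')\lambda'$.
\item If it reports infeasibility of \eqref{LP:feas_prob}, then no $\x$ satisfies $\widetilde{\Pm}\x\le(1-10\eps')\lambda'\ones$ with $\widetilde{\Cm}\x\ge\ones$. Hence $\lambda^*>(1-10\eps')\lambda'$.
\end{itemize}

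So each call either certifies an upper bound or a lower bound on $\lambda^*$, and the two bounds differ by a factor of only $1+O(\eps')$. We then maintain an interval $[L,U]$ containing $\lambda^*$, together with a witness achieving $U$, and shrink the ratio $U/L$.

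\textbf{Phase 1: obtaining a constant-ratio interval.} First we need an initial bracket. A polynomial-ratio bracket can be obtained by a simple rule. For each covering row $j$, the best single variable gives $\min_i \max_k \widetilde{\Pm}_{ki}/\widetilde{\Cm}_{ji}$. Distributing $\x$ evenly over the $m$ coordinates used by the covering rows loses at most a factor $m$, so $U/L\le m$ initially. With $\eps'=1/2$, we binary search on the exponent of $\lambda'$ over the range $[L,U]$; that is, we search over $\log(U/L)=O(\log m)$ candidate powers of two. This takes $O(\log\log m)$ oracle calls and ends with $U/L=O(1)$. This matches the first bullet.

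\textbf{Phase 2: geometric refinement.} Next we refine with decreasing accuracy. While $U/L=1+\delta$ with $\delta>\eps$, we query $\lambda'$ at the geometric midpoint and choose $\eps'=c\delta$ for a small constant $c$ (for instance $c=1/40$, which absorbs the factor $10$). Either outcome shrinks the ratio to at most $(1+\delta)^{1/2}(1+O(c\delta))\le 1+\tfrac34\delta$. Hence $\delta$ decreases geometrically by the factor $3/4$. The process stops after $O(\log(1/\eps))$ steps. Counting backwards from the end, the $i$-th-to-last call has $\delta\ge\eps(4/3)^i$, so $\eps'=\Omega(\eps(4/3)^i)$. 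At the end we output the stored witness, which has value at most $U\le(1+\eps)L\le(1+\eps)\lambda^*$.

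\textbf{Main obstacle.} The expected main obstacle is the constant bookkeeping in Phase 2. Specifically, the ratio must provably contract by $3/4$ each step despite the asymmetric slack between $1+\eps'$ and $1-10\eps'$. I would handle this by placing $\lambda'$ slightly off-center to balance the two outcomes. A secondary concern is guaranteeing a valid initial bracket from the data; the crude per-row bound handles this.
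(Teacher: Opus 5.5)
The paper gives no argument of its own here; it defers entirely to Young. Your two-phase search is the standard reduction behind that citation: a polynomial bracket, $\Oo(\log\log m)$ coarse queries on the exponent, then geometric refinement with $\eps'$ proportional to the current gap. There is, however, a step that fails as written.

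\textbf{Phase~1 does not work with the oracle you chose.} You use the oracle of Definition~\ref{def:feasibility_prob}, whose \emph{infeasible} answer only certifies $\lambda^*>(1-10\eps')\lambda'$. At $\eps'=1/2$ this reads $\lambda^*>-4\lambda'$, which carries no information. Since $\widetilde{\Pm}$ and $\x$ are nonnegative, the system $\widetilde{\Pm}\x\le(1-10\eps')\lambda'\ones$ is unsatisfiable for every $\eps'>1/10$. So answering \emph{infeasible} is always a legitimate oracle output, and your lower endpoint $L$ need never move.

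Your claim that the two certificates differ only by a factor $1+O(\eps')$ is therefore false in exactly the regime Phase~1 uses, and the coarse search does not shrink $U/L$ to $\Oo(1)$. The repair is easy but must be stated, in one of two ways:
\begin{itemize}
\item Run Phase~1 with a fixed constant $\eps'<1/10$, e.g.\ $1/40$. Then the two outcomes give $U\leftarrow\tfrac{41}{40}\lambda'$ or $L\leftarrow\tfrac34\lambda'$, so each midpoint query still halves $\log(U/L)$ up to an additive constant. This ends with $U/L\le(4/3)^2+o(1)$ after $\Oo(\log\log m)$ calls of asymptotically unchanged cost.
\item Declare that the ``$\eps'=1/2$'' in the statement refers to Young's normalization. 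There, infeasibility certifies $\lambda^*>\lambda'$ and the gap is exactly $1+\eps'$.
\end{itemize}
This mismatch is inherited from the paper, which quotes Young's constant while working with the $(1-10\eps)$ definition taken from MRWZ.

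\textbf{Two smaller corrections.}
\begin{itemize}
\item Your initial bracket is not within a factor $m$. With $\mu_j=\min_i\max_k\widetilde{\Pm}_{ki}/\widetilde{\Cm}_{ji}$ your rule gives $\max_j\mu_j/m\le\lambda^*\le m\max_j\mu_j$; the upper bound comes from one best variable per covering row, merging repeats. Both ends are attained: one covering row over $m$ variables in separate packing rows, versus $m$ covering rows with private variables in a single packing row. So the ratio is $m^2$. This is harmless, since only $\log(U/L)=\Oo(\log m)$ matters.
\item With the exact geometric midpoint and $c=1/40$, the infeasible branch gives $\sqrt{1+\delta}/(1-\delta/4)=1+\tfrac34\delta+\tfrac1{16}\delta^2+O(\delta^3)$, which is strictly above $1+\tfrac34\delta$. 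Your off-center choice $\lambda'^2=LU/((1+\eps')(1-10\eps'))$ fixes this. It equalizes both branches at $\sqrt{(1+\delta)(1+\eps')/(1-10\eps')}$, which is at most $1+\tfrac34\delta$ for $\delta\le 1$. A smaller $c$ also works, as long as Phase~1 hands over $\delta$ bounded by a small constant so that $10\eps'$ stays well below $1$.
\end{itemize}
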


For more details on the choice of subproblems, refer to \cite{Young2001}.
Importantly, the running time of \cref{lem:young-approx} is dominated by solving the feasibility problem $\Oo(\log \log m)$ many times for~$\eps'=1/2$ and solving one feasibility problem with $\eps'=\eps$.
Due to \Cref{thm:solve_feasibility}, the number of rounds necessary to solve each of these subproblems is bounded by \cref{eq:solve_feasibility}.
In order to execute a binary search in \CONGEST, we use \Cref{lm:aggregation} to broadcast both the outcome and the next value for $\lambda'$ to all the nodes in $\Oo(D)$ rounds.
With the broadcast overhead for binary search, we solve any given mixed packing-covering LP in a total running time of 
\begin{align*}
    &\Oo \Bigg( \log \log m \cdot  D \log^2 n \log m + D \cdot  \frac{\log^2 n \log \left(\frac{m}{\eps}\right)}{\eps^3}
    \Bigg).
\end{align*}

\subsection{Solving the Feasibility Problem}
\label{sec:solve_feas}

In this section, we provide the missing proof of \cref{thm:solve_feasibility} by implementing the \emph{parallel} algorithm of \cite{MRWZ16} in the \CONGEST model, for the pseudocode refer to \cref{alg:mixed_covering/packing}.
First, we give a short description of \cref{alg:mixed_covering/packing} in \cref{sec:algo}. Second, for the sake of completeness, we include a brief overview of the correctness proof of \cite{MRWZ16} in \Cref{sec:correctness}.
Finally, we bound the round complexity of \cref{alg:mixed_covering/packing} in \cref{subsubsec:round-complexity}.

\subsubsection{Algorithm for the Approximate Feasibility Problem}
\label{sec:algo}
At a high level, \Cref{alg:mixed_covering/packing} starts with a small nonnegative solution $\x$ and iteratively increases selected entries of $\x$. In each iteration, the algorithm compares the progress made towards satisfying the covering constraints with the corresponding increase in the packing constraints. To this end, constraints are weighted according to how close they are to becoming critical, so that highly loaded packing constraints and poorly satisfied covering constraints have a stronger influence on the update. Based on these weights, the algorithm determines which variables can be increased while making sufficiently more progress on the covering side than on the packing side. The process continues until the constraints have reached a sufficiently large scale, at which point the current solution is rescaled and returned. If no variable admits such an update, the algorithm concludes that the corresponding feasibility instance is infeasible. We describe the quantities used to implement these steps in more detail in the following.

We start by setting the initial solution in Line 3 of \cref{alg:mixed_covering/packing} to
\begin{equation*}
    \x_i=\frac{1}{m ||\Pm_{:,i}||_{\infty}}, 
\end{equation*}
for all $i\in [m]$.\footnote{For better readability, we omit the iteration counter throughout this description.}
Then the values for $x$ are increased until it holds that
\begin{equation*}
    \max\left\{\max_j\set{\Pm \x}_j, \min_j \set{\Cm \x}_j \right\} \geq K = \frac{10 \ln n}{\eps}. 
\end{equation*}
The two components of the stopping condition can be found in Line 7 for the packing constraints and in Line 13 for the covering constraints. 
We implicitly encode $\max_j \set{\Pm \x}_j \geq K$ by stopping as soon as either \emph{one} packing constraint fulfills $\Pm_j \x \geq K$ in Line 11. To encode $\min_j \set{\Cm x}_j \geq K$ we stop if \emph{all} covering constraints fulfill~$\Cm_j \x \geq K$ in Line 17. 
In contrast to~\cite{MRWZ16}, it is not necessary to explicitly delete all covering constraints of size at least $K$.
Instead, we simply do not communicate the values $z_j$ and $b_{ij}$ for these constraints as they do not contribute anything to the computation of $b_i$.

In the for loops for the packing and covering constraints, we compute $y_j$ and $z_j$, respectively. Note that these values are exponentials of one packing or covering constraint and thus the vectors $\set{y_i}_{i=1}^{n_p}$ and $\set{z_i}_{i=1}^{n_c}$ are exponentials of the packing and covering constraints, respectively. These exponentials are used to weigh the update step for the individual variables.
Similarly, we obtain $a_{ij}$ and $b_{ij}$ to compute $a_i$ and $b_i$ for each variable $i$---note that $\set{a_i}_{i=1}^m$ and~$\set{b_i}_{i=1}^m$ can be interpreted as gradients of~$\lmax(\Pm \x)$ and $\lmin(\Cm \x)$ that are used to update~$\x$. In particular, in Line 26, we multiplicatively update any entry $\x_i$ by a factor dependent on $\frac{a_i}{b_i}$ if~$a_i \leq (1-\frac{\eps}{50})b_i$.

Note that we encode the infeasibility condition $\set{i \mid a_i \leq (1-\frac{\eps}{50})b_i}=\emptyset$ by checking each variable in Line 23.
If all of them broadcast \emph{infeasible}, then each variable returns \emph{infeasible}. 

\begin{algorithm}
    \caption{Distributed Algorithm for the Approximate Feasibility Problem}
    \label{alg:mixed_covering/packing}
    \SetKwInOut{Input}{Input}
    \SetKwInOut{Output}{Output}
    
 \newcommand\mycommfont[1]{\textcolor{gray}{\itshape #1}}
    \SetCommentSty{mycommfont}
    \SetKwComment{tcp}{}{}
    \DontPrintSemicolon

        \Input{$\Pm,\Cm,\eps$}
        \Output{ $\textit{infeasible or } \x_i \geq 0\textit{ at each node } i \textit{ s.t.} \max_j(\Pm \x)_j \leq (1+\eps) \min_j(\Cm \x)_j $}

        Let $R=\Oo\left(\frac{1000 \ln^2 n \ln(\frac{m}{\eps})}{\eps^3}\right),  K=\frac{10 \ln n}{\eps}$ and $t=0$. 

        \For{each variable $i$}{
        
        $\x_i^{(0)}=\frac{1}{m ||\Pm_{:,i}||_{\infty}}$ \tcp*[f]{---Each variable $j$ can send $p_{ji}$ to the variable $i$ in one round}
        
        Send $\x_i^{(0)}$ to all constraints $j \in N(i)$ 
         
        }
        \While{not Stop and $t \leq R$}{
        
        \For(\tcp*[f]{---$j$ has access to $\{p_{ji}\}_{i=0}^{n}$ and $\x_i$ for all $i \in N(j)$}){each packing constraint $j$ }{ 
        \If{$(\Pm_j\x^{(t)}) < K$}{
        
        $y_j^{(t)}=\exp(P_j\x^{(t)})=\exp(\sum_{i \in N(j)}p_{ji}\x_i^{(t)})$
        
        $a_{ij}^{(t)}=p_{ji}y_j^{(t)}$ for all $i \in N(j)$
        
        To each $i \in N(j)$ send $y_j^{(t)}$  and $a_{ij}^{(t)}$ 
        
        \lElse{ Broadcast Stop$_p^{(t)}$ \tcp*[f]{---If one packing constraint broadcasts Stop$_p$, we abort}} 
        }
        }
        \For{each covering constraint $j$}{
        \If{$(\Cm_j\x^{(t)})<K$}{
        $z_j^{(t)} = \exp(-\Cm_j\x^{(t)}) = \exp(-\sum_{i\in N(j)}c_{ji}\x_i^{(t)})$,
        
        $b_{ij}^{(t)} = c_{ji}z_j^{(t)}$
        
        To each $i \in N(j)$ send $z_j^{(t)}$ and $b_{ij}^{(t)}$
        
        \lElse{ Broadcast Stop$_c^{(t)}$ \tcp*[f]{---If all covering constraints broadcast Stop$_c$, we abort}}  
        }
        }
        \For{each variable $i$}{
        
        $y^{(t)} = \sum_{j=1}^{n_p} y_j^{(t)}$ \tcp*[f]{---$y^{(t)}$ and $z^{(t)}$ can be computed in $\Oo(D)$ rounds, see \Cref{lem:inner_prod}} 
        
        $z^{(t)}=\sum_{j=1}^{n_c} z_j^{(t)}$ \tcp*[f]{---Only consider covering constraints that sent values $z_j,b_{ij}$}
        
        $a_i^{(t)}=\frac{\sum_{j\in N(i) }a_{ij}^{(t)}}{y^{(t)}}$
        
        $b_i^{(t)}=\frac{\sum_{j\in N(i) }b_{ij}^{(t)}}{z^{(t)}}$
        
        \If(\tcp*[f]{---If all variables broadcast "infeasible", return "infeasible"})
        {$a_i^{(t)} > (1-\frac{\eps}{50})b_i^{(t)}$ }
        {
        
        $\Delta_i^{(t)} = 0$ and Broadcast infeasible$^{(t)}$ 
        
        \Else{ $\Delta_i^{(t)} = \frac{1}{2}(1-\frac{a_i^{(t)}}{b_i^{(t)}}) \in [\eps/100,\frac{1}{2}]$}
        }
        
        $\x_i^{(t+1)} \leftarrow \x_i^{(t)}(1+ \frac{1}{K} \Delta_i^{(t)})$  
        
        $t\leftarrow t+1$
        
        To each constraint $j \in N(i)$ send $\x_i^{(t+1)}$ and $t$ 
        } }
        \lIf{$t > R$}{
        \Return "infeasible"
        }
        
        Let $T$ be the first iteration, where one packing constraint broadcasted Stop$_p^{(T)}$ or all covering constraints broadcasted Stop$_c^{(T)}$.
        
        \lFor{each variable $i$}{
        \Return $\x_i=\frac{\x_i^{(T)}}{K}$ \label{line:return}
        }  
\end{algorithm}

\subsubsection{Correctness }

\label{sec:correctness}

For the sake of completeness, we include a brief overview of the correctness proof of \cite{MRWZ16} that also shows correctness for our distributed implementation.
We begin by showing that \Cref{alg:mixed_covering/packing} terminates and outputs the correct answer. 

\begin{lemma}[\cite{MRWZ16}, Lemma 2] 
    \label{lem:infeasible}
    If the problem instance is feasible, then we have
    \begin{align*}
        \forall \x\geq 0, B=\{i \mid a_i \leq (1-\frac{\eps}{50})b_i\} \neq \emptyset.
    \end{align*}
\end{lemma}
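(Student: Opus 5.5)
Throughout, feasibility means that LP~\eqref{LP:feas_prob} has a solution $\x^*\ge 0$ with $\Cm\x^*\ge\ones$ and $\Pm\x^*\le(1-10\eps)\ones$. I would argue by averaging against this point: the quantities $a_i$ and $b_i$ are normalized gradient coordinates, and a convex combination of the ratios $a_i/b_i$ weighted by $\x^*$ must be small.

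Fix an arbitrary $\x\ge 0$ and let $y_j,z_j,a_i,b_i$ be the values the algorithm computes at $\x$. Here $a_i=\sum_j p_{ji}y_j/y$ and $b_i=\sum_j c_{ji}z_j/z$, where the sum in $z$ runs only over active covering constraints. If every covering constraint is inactive, the algorithm has stopped, so I may assume at least one active covering constraint. Then $z>0$, so the $b_i$ are well defined.

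Now pair these vectors with $\x^*$. For the packing side,
\[
\sum_i a_i\x^*_i=\frac{1}{y}\sum_j y_j(\Pm\x^*)_j\le(1-10\eps)\frac{1}{y}\sum_j y_j=1-10\eps .
\]
If the weights $y_j$ range over only some packing constraints, the same bound holds since $y$ is the matching sum. For the covering side,
\[
\sum_i b_i\x^*_i=\frac{1}{z}\sum_{j\text{ active}} z_j(\Cm\x^*)_j\ge\frac{1}{z}\sum_{j \text{ active}} z_j=1 .
\]
Combining the two,
\[
\sum_i a_i\x^*_i\le(1-10\eps)\sum_i b_i\x^*_i<\Bigl(1-\tfrac{\eps}{50}\Bigr)\sum_i b_i\x^*_i .
\]
The strict inequality uses $\sum_i b_i\x^*_i\ge 1>0$.

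Suppose for contradiction that $B=\emptyset$, so $a_i>(1-\eps/50)b_i$ for every $i$. Multiplying by $\x^*_i\ge 0$ and summing gives $\sum_i a_i\x^*_i\ge(1-\eps/50)\sum_i b_i\x^*_i$. This contradicts the strict inequality above. Hence some $i$ with $\x^*_i>0$ lies in $B$.

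The only delicate step is the bookkeeping of which constraints carry weights. Packing constraints with $\Pm_j\x\ge K$ trigger termination, so they never matter while the loop runs. Covering constraints that are omitted once they reach $K$ must be excluded consistently from both $z$ and $b_i$. Since both the normalizer $z$ and the numerator of $b_i$ are restricted to the same active set, the covering inequality still holds, because each active constraint individually satisfies $(\Cm\x^*)_j\ge1$. I expect this consistency check to be the main point needing care; the remainder is the one-line averaging argument.
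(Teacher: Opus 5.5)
Your proof is correct and takes essentially the same route as the source: the paper does not prove this lemma itself but cites Lemma~2 of \cite{MRWZ16}, whose proof pairs the normalized gradients $a$ and $b$ with a feasible point $x^*$ exactly as you do. Your handling of the active covering constraints also matches the paper's variant, in which saturated covering constraints stop sending $z_j$ and $b_{ij}$, so they drop out of both $z$ and the numerators of the $b_i$.
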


\Cref{lem:infeasible} shows that the fact that all variables broadcast infeasible certifies that the algorithm correctly terminates when the input instance is infeasible.
If it never outputs infeasible, then at least one variable $\x_i$ gets increased by factor of 
\begin{equation*}    
    (1+\frac{1}{K}\Delta_i) \geq (1+\frac{1}{K}\frac{\eps}{100})=(1+\frac{\eps^2}{1000 \ln n}) 
\end{equation*}
in each iteration of the while loop.
Consequently, the algorithm must reach one of the termination conditions after a finite number of iterations. 

In the remainder of this section, we show that the output $\x=\{\x_i\}_{i=1}^n$ satisfies 
\begin{equation*}    
    0<\max_j(\Pm \x)_j \leq (1+\eps) \min_j(\Cm \x)_j.
\end{equation*}
For the analysis, we consider the potential function $f(\x)=\lmax (\Pm \x)-\lmin(\Cm \x)$ where the soft-max $\lmax(\Pm \x)$ and soft-min $\lmin(\Cm \x)$ are defined as follows
\begin{align*}
    \lmax(\Pm \x)= \ln \Big(\sum\nolimits_{j}\exp (\Pm \x)_j \Big) \quad \textit{ and } \quad \lmin(\Cm \x )= - \ln\Big( \sum\nolimits_{j} \exp(-(\Cm \x)_j) \Big). 
\end{align*}

Then the following lemma bounds the difference of $\lmax(\Pm\x)$ and $\lmin(\Cm\x)$ by $2\ln n$.

\begin{lemma}[\cite{MRWZ16}, Lemma 4]
    \label{lem:lemma4}
    Given $\max_j(\Pm \x^{(t)})_j<\frac{10 \ln n}{\eps}$ and $\min_j(\Cm \x^{(t)})_j<\frac{10 \ln n}{\eps}$, we always have $f(\x^{(t)})\leq2 \ln n$ during the execution of \Cref{alg:mixed_covering/packing}.
\end{lemma}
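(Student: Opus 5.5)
The plan is to prove the bound by induction on $t$ with the potential $f(\x)=\lmax(\Pm\x)-\lmin(\Cm\x)$. First I bound $f(\x^{(0)})\le 2\ln n$. Then I show that one iteration of the while loop of \Cref{alg:mixed_covering/packing} never increases $f$, as long as both stopping conditions are still unmet.

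\textbf{Base case.} Since $\x_i^{(0)}=1/(m\|\Pm_{:,i}\|_\infty)$, every packing row satisfies $(\Pm\x^{(0)})_j=\sum_i p_{ji}\x^{(0)}_i\le \sum_i 1/m\le 1$. Hence $\lmax(\Pm\x^{(0)})\le \ln(n_p e)=\ln n_p+1$. The soft-min satisfies $\lmin(\Cm\x)\ge \min_j(\Cm\x)_j-\ln n_c\ge -\ln n_c$, because $\x\ge 0$. Therefore $f(\x^{(0)})\le \ln n_p+\ln n_c+1$. Using $n_p+n_c\le n$, we get $n_pn_c\le n^2/4$, so this is at most $2\ln n+1-\ln 4<2\ln n$.

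\textbf{Monotonicity.} Fix an iteration and write $\delta\x_i=\x_i\Delta_i/K$ with $\Delta_i\in[0,1/2]$.
\begin{itemize}
\item For a packing row, the increment is $\delta_j=\sum_i p_{ji}\x_i\Delta_i/K$. Since $(\Pm\x)_j<K$, we have $\delta_j\le\frac12(\Pm\x)_j/K<\frac12$.
\item Using $e^{\delta}\le 1+\delta+\delta^2$ for $\delta\le1$ and $\ln(1+u)\le u$, the change in $\lmax$ is at most $\sum_j (y_j/y)(\delta_j+\delta_j^2)$.
\item The key estimate is Cauchy--Schwarz applied to the quadratic term:
\[\delta_j^2=\Big(\sum_i p_{ji}\x_i\tfrac{\Delta_i}{K}\Big)^2\le \frac{(\Pm\x)_j}{K}\sum_i p_{ji}\x_i\frac{\Delta_i^2}{K}\le \sum_i p_{ji}\x_i\frac{\Delta_i^2}{K}.\]
\item Swapping the sums gives $\Delta\lmax\le \sum_i \frac{\x_i}{K}\,a_i(\Delta_i+\Delta_i^2)$.
\item On the covering side, the increments are nonnegative. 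Using $e^{-\delta}\le 1-\delta+\delta^2/2$, $\ln(1-u)\le -u$, and the same Cauchy--Schwarz trick with $(\Cm\x)_j<K$ for the active rows, I get $\Delta\lmin\ge \sum_i\frac{\x_i}{K}\,b_i(\Delta_i-\Delta_i^2/2)$.
\end{itemize}
Subtracting the two bounds,
\[\Delta f\le \sum_i \frac{\x_i}{K}\Big[(a_i-b_i)\Delta_i+(a_i+\tfrac12 b_i)\Delta_i^2\Big].\]
For updated variables, $\Delta_i=\frac12(1-a_i/b_i)$, so $a_i-b_i=-2b_i\Delta_i$. Each bracket then equals $\Delta_i^2(a_i-\frac32 b_i)\le 0$, because $a_i\le b_i$. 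For non-updated variables $\Delta_i=0$ and the term vanishes. Hence $f(\x^{(t+1)})\le f(\x^{(t)})$.

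\textbf{Expected obstacle.} The main difficulty is the second-order control: a single step may move a row by up to a constant, so a naive first-order bound fails. The Cauchy--Schwarz step is what converts the quadratic error into a per-variable $\Delta_i^2$ term, and this term is absorbed exactly by the choice of $\Delta_i$.

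\textbf{Dropped covering constraints.} A second subtlety is that constraints with $(\Cm\x)_j\ge K$ stop communicating. I would handle this by carrying out the argument for $f$ restricted to the active covering constraints, with the induction tracking the active potential $f_{\mathrm{act}}$. This restricted potential is exactly the one whose gradient is the $b_i$ computed by the algorithm. Removing terms from the sum inside $\lmin$ can only increase $\lmin$, so $f_{\mathrm{act}}$ does not jump up when a constraint becomes inactive. Every row that remains active satisfies $(\Cm\x)_j<K$, which is precisely the hypothesis the covering Cauchy--Schwarz step uses. The conclusion is therefore $f_{\mathrm{act}}(\x^{(t)})\le 2\ln n$, i.e.\ the lemma's bound for the potential over the covering constraints that are still active.

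It remains to relate this to $f$ over all covering constraints, as the lemma is stated. Adding the inactive terms back can lower $\lmin$ and so raise $f$. The inactive terms are tiny, but the increase is only small relative to the active terms, and an exact bound of $2\ln n$ for the full potential does not follow from this argument alone. I would therefore either state the lemma for the active potential, which is what the subsequent correctness argument uses, or show separately that the extra terms cost at most the slack $\ln 4-1$ left over from the base case.
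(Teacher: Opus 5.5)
Your proof is correct and follows the route of the cited source. The paper gives no proof of its own and defers to Lemma~4 of MRWZ16. There, covering rows that reach $K$ are deleted explicitly, so the potential is taken over the remaining rows, which is exactly your $f_{\mathrm{act}}$. The ingredients match: the initial bound, the one-step non-increase, and the observation that deleting rows only lowers the potential. Your Cauchy--Schwarz estimate $\delta_j^2\le\sum_i p_{ji}x_i\Delta_i^2/K$ supplies the per-coordinate quadratic control that makes each coordinate contribute $\Delta_i^2(a_i-\tfrac32 b_i)\le 0$. You use implicitly that the hypotheses at time $t$ also hold at all earlier times. This is true because $x$ only grows and $P,C\ge 0$.

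The loose end you flag is not a real obstruction, so you do not need to weaken the statement. The problem is that you compared the inactive terms relative to $Z_{\mathrm{act}}$. Since $f=\ln(YZ)$ with $Y=\sum_j e^{(Px)_j}$, you should instead compare them in absolute terms against $n^2$:
\begin{itemize}
\item Under the hypothesis every packing row is below $K$, so $Y<n_p e^{K}$.
\item Every inactive covering row has $(Cx)_j\ge K$, so it contributes at most $e^{-K}$ to $Z$. Hence $Y Z_{\mathrm{inact}}<n_p n_c$, and the factor $e^{K}$ cancels exactly.
\item Your base case and monotonicity give $Y Z_{\mathrm{act}}\le e\,n_p n_c$.
\end{itemize}
Combining these,
\[
e^{f(x^{(t)})}=Y\,(Z_{\mathrm{act}}+Z_{\mathrm{inact}})<(e+1)\,n_p n_c\le \tfrac{e+1}{4}\,n^2<n^2 .
\]
So $f(x^{(t)})<2\ln n$ holds for the potential over all covering constraints as well.
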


It follows from \cref{lem:lemma4} that these functions approximate $\max_j(\Pm\x)_j$ and $\min_j(\Cm\x)_j$:
\begin{align*}
     \max_j(\Pm\x)_j & \leq \lmax(\Pm\x) \leq \max_j(\Pm\x)_j +\ln n, \\
     \min_j(\Cm\x)_j - \ln n &\leq  \lmin(\Cm\x) \leq  \min_j(\Cm\x)_j.
\end{align*}
This yields the desired approximation once the terms are large at the time of termination. 
\begin{lemma}[\cite{MRWZ16}, Lemma 5]
    If the algorithm terminates in Line \ref{line:return}, then $\x=\{\x_i\}_{i=1}^n$ fulfills $\x\geq 0$ with $0<\max_j(\Pm \x)_j\leq (1+\eps)\min_j(\Cm \x)_j.$
\end{lemma}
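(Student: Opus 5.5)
We need to show that when the algorithm returns in Line~\ref{line:return}, the rescaled output $\x=\x^{(T)}/K$ is nonnegative and satisfies $0<\max_j(\Pm \x)_j\leq (1+\eps)\min_j(\Cm \x)_j$. Since the target inequality is invariant under scaling $\x$ by $1/K>0$, it suffices to prove $\max_j(\Pm \x^{(T)})_j\leq (1+\eps)\min_j(\Cm \x^{(T)})_j$ and $\max_j(\Pm\x^{(T)})_j>0$.

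\emph{Nonnegativity and positivity.} The initialization $\x_i^{(0)}=1/(m\|\Pm_{:,i}\|_\infty)>0$ and the updates are multiplicative with factor $1+\Delta_i/K\ge 1$, so $\x^{(T)}>0$ entrywise. Hence every packing row with a nonzero entry gives $(\Pm\x^{(T)})_j>0$. I will assume, as is implicit, that $\Pm$ is not identically zero; an isolated variable, with $\Pm_{:,i}=0$, would be handled separately.

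\emph{Main inequality.} The key step is to control $f$ at the stopping time $T$. By Lemma~\ref{lem:lemma4}, $f(\x^{(t)})\le 2\ln n$ for every iterate that has not yet triggered a stop. I would use this at $\x^{(T-1)}$ and then bound the one-step change: each coordinate grows by a factor at most $1+\frac{1}{2K}$. So $(\Pm\x)_j$ and $(\Cm\x)_j$ change by at most a factor $1+\frac1{2K}$, and their values before the step were below $K$. Hence each moves additively by at most $1/2$, and $f(\x^{(T)})\le 2\ln n+1\le 3\ln n$. Combining this with the soft-max and soft-min sandwich bounds stated after Lemma~\ref{lem:lemma4} gives
\[
\max_j(\Pm\x^{(T)})_j\le \lmax(\Pm\x^{(T)})\le \lmin(\Cm\x^{(T)})+3\ln n\le \min_j(\Cm\x^{(T)})_j+3\ln n .
\]

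\emph{Case split on the stopping rule.} If all covering constraints stopped, then $\min_j(\Cm\x^{(T)})_j\ge K=10\ln n/\eps$, so $3\ln n\le \tfrac{3\eps}{10}\min_j(\Cm\x^{(T)})_j$ and the inequality follows. If instead a packing constraint stopped, then $\max_j(\Pm\x^{(T)})_j\ge K$, and the display gives $\min_j(\Cm\x^{(T)})_j\ge K-3\ln n\ge (1-\tfrac{3\eps}{10})K$. Therefore $3\ln n\le \tfrac{0.3\eps}{1-0.3\eps}\min_j(\Cm\x^{(T)})_j\le \eps\min_j(\Cm\x^{(T)})_j$ for $\eps\le 1$, which again yields the claim.

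\emph{Expected obstacle.} The delicate point is applying Lemma~\ref{lem:lemma4} exactly at the stopping iterate, where its hypothesis may fail by one step. The plan handles this with the one-step overshoot bound. Distributedly, constraints that have already exceeded $K$ are silently dropped from $z$, and I would need to check that this does not break the potential argument. This should be fine because it only happens once a constraint is at least $K$, and the min is then governed by the remaining constraints until all of them have stopped.
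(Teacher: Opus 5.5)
Your proposal is correct and follows the route the paper indicates. The paper defers to \cite{MRWZ16} but sketches exactly this: Lemma~\ref{lem:lemma4}, the directions $\max_j(\Pm\x)_j\le\lmax(\Pm\x)$ and $\lmin(\Cm\x)\le\min_j(\Cm\x)_j$, and the fact that the relevant quantities have reached scale $K$ at termination; your one-step overshoot argument supplies a detail the paper glosses over. One small correction: covering rows that were already at least $K$ before step $T$ can grow by more than $1/2$ additively, but this does not matter because $\lmin(\Cm\x)$ is nondecreasing in $\x$, so only the packing side can increase $f$, and in fact $f(\x^{(T)})\le 2\ln n+1/2$.
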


Until now, we showed that \cref{alg:mixed_covering/packing} either terminates and outputs $\{\x \}_{i=1}^n$ that satisfies~(\ref{cond:feas}) or terminates early and correctly certifies the infeasibility of LP (\ref{LP:feas_prob}). 

\subsubsection{Round Complexity}
\label{subsubsec:round-complexity}

To finish the proof of \Cref{thm:solve_feasibility}, it remains to show that if the LP (\ref{LP:feas_prob}) is feasible, then \Cref{alg:mixed_covering/packing} terminates after 
\begin{equation*}
    \Oo\left(D\cdot \frac{1000 \ln^2 n \ln(\frac{m}{\eps})}{\eps^3}\right)
\end{equation*}
rounds.
In case it takes more rounds, we simply terminate and output that the LP (\ref{LP:feas_prob}) is infeasible.
In a nutshell, we show that every iteration of the parallel algorithm of \cite{MRWZ16} runs in~$\Oo(D)$ rounds in the \CONGEST model. Then their bound on the number of iterations 
\begin{align}
\label{eq:number_of_it}
    \Oo \left(\frac{\log n}{\eps} \cdot \left(\frac{\ln n \ln (m/\eps)}{\eps}+\frac{\ln n \ln(m/\eps)}{\eps^2}\right)\right)
    = \Oo\!\left(\frac{\log^2 n \log (m/\eps)}{\eps^3}\right)
\end{align}
yields the overall desired running time.
We start by showing that the sums $y^{(t)}$ and $z^{(t)}$ can be computed efficiently.

\begin{lemma}
    \label{lem:inner_prod}
    Let $0 \leq t\leq T$. The sums $y^{(t)}$ and $z^{(t)}$ can be computed in $\Oo(D)$ rounds.
\end{lemma}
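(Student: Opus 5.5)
The plan is to compute both sums by a single convergecast followed by a broadcast on a BFS tree of the communication network $G$ described in \Cref{sec:underlying_comm}. I would build this tree once, at the start of \Cref{alg:mixed_covering/packing}, in $\Oo(D)$ rounds and rooted at a leader; this cost is paid once and does not recur in each iteration. In iteration $t$, every packing constraint node $j$ with $(\Pm_j\x^{(t)})<K$ already holds $y_j^{(t)}$ locally. It gets this value from the $\x_i^{(t)}$ sent by its neighbours and its row $\Pm_j$. Likewise, every active covering constraint node holds $z_j^{(t)}$. Nodes that are not active constraints contribute $0$. This includes the variable nodes and the covering constraints that have stopped, which matches the convention in the algorithm that we "only consider covering constraints that sent values".

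Next, I would run a standard aggregation in the style of \Cref{lm:aggregation}. Each node waits until it has received the partial sums of its tree children. It then adds its own contribution and forwards the pair of partial sums to its parent. Because the tree has depth $\Oo(D)$, the root learns $y^{(t)}=\sum_j y_j^{(t)}$ and $z^{(t)}=\sum_j z_j^{(t)}$ after $\Oo(D)$ rounds. The root then broadcasts the two values down the tree, which takes another $\Oo(D)$ rounds. After that every variable node $i$ can compute $a_i^{(t)}$ and $b_i^{(t)}$ locally. The two sums are pipelined together, or sent one after the other, so this only costs a constant factor.

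The one step that needs real care is message size, since \CONGEST only allows $\Oo(\log n)$ bits per message. The values $y_j^{(t)}=\exp(\Pm_j\x^{(t)})$ can be as large as $e^{K}=n^{10/\eps}$, which is not polynomial in $n$. On the other side, $z_j^{(t)}$ is at least $e^{-K}$. To handle this, I would transmit each quantity in floating-point form: an exponent of $\Oo(\log(K)) = \Oo(\log\log n+\log(1/\eps))$ bits, together with a mantissa of $\Oo(\log n)$ bits, which gives relative precision $1/\poly n$. Summing nonnegative numbers stored this way adds at most a $(1\pm 1/\poly n)$ relative error at each addition. Over a tree of depth $\Oo(D)\le n$ this accumulates to a total relative error of $1\pm 1/\poly n$.

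The remaining task is to argue that this perturbation does not harm the analysis of \cite{MRWZ16}. The ratios $a_i^{(t)}/b_i^{(t)}$ only move by a factor of $1\pm 1/\poly n$. This is far below the $\eps/50$ slack built into the test $a_i^{(t)}\le (1-\eps/50)b_i^{(t)}$ and into $\Delta_i^{(t)}$. I would therefore defer this last point to the finite-precision discussion in \Cref{subsec:impl_cong}, where it naturally belongs.
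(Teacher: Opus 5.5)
Your proposal is correct and follows the same route as the paper's proof. Both build a BFS tree in $\Oo(D)$ rounds, convergecast the partial sums of the $y_j^{(t)}$ and $z_j^{(t)}$ to the root, and broadcast the two totals back down. The paper leaves message size entirely to \Cref{subsec:impl_cong}; your floating-point remark is a reasonable addition that defers the same way.
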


\begin{proof}
Compute a BFS tree rooted at an arbitrary node $v\in V$, e.g., the largest ID node in $\Oo(D)$ rounds. Now starting from the leaves, each node $i$ sends the sum of the value it received from its children and $y_i^{(t)}$ ($z_i^{(t)}$ respectively) to its parent. If the node is a constraint, it just sends the information forward towards $v$ and does not change it. Then after $\Oo(D)$ rounds the two values $y^{(t)}$ and $z^{(t)}$ are stored in $v$. These can now be broadcasted to all the nodes using the computed BFS tree.
\end{proof}

We conclude by bounding the round complexity of \Cref{alg:mixed_covering/packing}.

\begin{lemma}
\label{lem:rounds_per_it}
    Each iteration of the while loop of \Cref{alg:mixed_covering/packing} runs in $\Oo(D)$ rounds.
\end{lemma}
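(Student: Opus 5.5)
We walk through one iteration of the while loop of \Cref{alg:mixed_covering/packing} and bound each phase separately. Each phase is either a constant number of rounds of local exchange between a variable node and its adjacent constraint nodes, or a global aggregation or broadcast along a BFS tree. We compute this BFS tree once in $\Oo(D)$ rounds before the loop and reuse it in every iteration, so it adds nothing per iteration.

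\textbf{Local phases.} At the start of iteration $t$, every constraint node $j$ already knows $\x_i^{(t)}$ for all $i\in N(j)$, since these were sent at the end of the previous iteration or in the initialization. It also holds its own row of $\Pm$ or $\Cm$. Hence $j$ computes $\Pm_j\x^{(t)}$ (respectively $\Cm_j\x^{(t)}$), then $y_j^{(t)}$ or $z_j^{(t)}$, and then $a_{ij}^{(t)}$ or $b_{ij}^{(t)}$, all without communication. It sends the pair $(y_j^{(t)},a_{ij}^{(t)})$ or $(z_j^{(t)},b_{ij}^{(t)})$ to each neighbour $i$, which takes $\Oo(1)$ rounds because each edge carries only a constant number of values. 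Each variable then knows $\sum_{j\in N(i)}a_{ij}^{(t)}$ and $\sum_{j\in N(i)}b_{ij}^{(t)}$ locally. After $\x_i^{(t+1)}$ has been computed, it is sent back to the neighbouring constraints, which is again $\Oo(1)$ rounds.

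\textbf{Global phases.} Three kinds of global information are needed.
\begin{itemize}
  \item \emph{Global sums.} The sums $y^{(t)}$ and $z^{(t)}$ are computed in $\Oo(D)$ rounds by \Cref{lem:inner_prod}. Only covering constraints that are still below $K$ contribute to $z^{(t)}$; such a node contributes $0$ otherwise.
  \item \emph{Termination flags.} The Stop conditions are a global OR over packing constraints (Stop$_p$) and a global AND over covering constraints (Stop$_c$).
  \item \emph{Infeasibility flag.} The infeasible condition is a global AND over variables.
\end{itemize}
Each flag is a single-bit aggregate, so we convergecast it up the BFS tree and broadcast the result back down in $\Oo(D)$ rounds (cf.\ \Cref{lm:aggregation}). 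All these aggregates can be pipelined together, or simply run one after another, since there are only constantly many of them.

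\textbf{Main obstacle: message sizes.} The delicate point is that the quantities $y_j=\exp(\Pm_j\x)$ and the sums $y^{(t)}$ must fit in $\Oo(\log n)$-bit messages. While the loop runs we have $\Pm_j\x<K=10\ln n/\eps$, so $y_j\le n^{10/\eps}$. Symmetrically, $z_j\ge n^{-10/\eps}$ and $z_j\le 1$. Therefore the logarithms of these values are bounded by $\Oo(\log n/\eps)$, and fixed-point representations with $\Oo(\log n/\eps)$ bits suffice, given relative error $1/\poly(n)$ that the analysis of \cite{MRWZ16} tolerates.

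If $\eps$ is constant this is $\Oo(\log n)$ bits. Otherwise we would transmit the values in a floating-point format (exponent plus $\Oo(\log n)$ mantissa bits), which the paper defers to \Cref{subsec:impl_cong}. We therefore treat message size as handled there.

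Summing the phases, each iteration takes $\Oo(1)+\Oo(D)=\Oo(D)$ rounds.
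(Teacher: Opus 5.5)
Your proposal is correct and follows the paper's proof essentially step by step. The constraint-side quantities $y_j^{(t)}, a_{ij}^{(t)}, z_j^{(t)}, b_{ij}^{(t)}$ are computed locally from the $\x_i^{(t)}$ received earlier, the global sums come from \Cref{lem:inner_prod}, the variable updates are local, and the Stop/infeasible signals are constant-size and disseminated in $\Oo(D)$ rounds, with message size deferred to \Cref{subsec:impl_cong}; your treatment of those signals as single-bit OR/AND convergecasts over a BFS tree is, if anything, a more explicit version of the paper's remark that each token needs only two bits and causes no congestion.
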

\begin{proof}
    We will consider an arbitrary iteration $0 \leq t \leq T$.
    First observe that at the beginning iteration $t$ before the while loop in line 6 starts, each packing constraint $j_p$ and each covering constraint $j_c$ has access to $\x_i^{(t)}$ for all $i\in N(j_p)$ and $i \in N(j_c)$, respectively, which holds due to line 5 and line 29. Thus the computations of $y_j^{(t)}, a_{ij}^{(t)},z_j^{(t)}$ and $b_{ij}^{(t)}$ can be executed locally in the node corresponding to the respective constraint. This finishes the computations executed at constraints.
    
    Since $y^{(t)}$ and $z^{(t)}$ can be seen as inner products \Cref{lem:inner_prod} shows that they can be computed in $\Oo(D)$ rounds. Due to line 11 and line 17, each variable $i$ has access to $a_{ij}^{(t)},y_j^{(t)},b_{ij}^{(t)}$ and $z_j^{(t)}$ for all $j\in N(i)$. This allows to execute lines 22 to 28 locally. 
    
    Finally, observe that broadcasting the stop tokens Stop$_p^{(t)}$ and Stop$_c^{(t)}$ can be done in $\Oo(D)$ rounds and does not yield congestion, since we can represent each token by 2 bits; one representing whether the corresponding constraint wants to stop or not and the other one indicating if the sending constraint is a packing or covering constraint. A similar behavior happens for the broadcast of "infeasible$^{(t)}$" in line 25, which can also be done in $\Oo(D)$ rounds. 
\end{proof}

We are now in the position to prove our feasibility result.

\begin{proof}[Proof of \cref{thm:solve_feasibility}]
    Since the correctness follows from \cref{sec:correctness}, it remains to show the desired round complexity. 
    Combining the bound on the number of iterations in \Cref{eq:number_of_it} with \Cref{lem:rounds_per_it} shows that the algorithm terminates after 
    \begin{equation*}
        \Oo\left(D\cdot\frac{\log^2n \log \left(m/\eps\right)}{\eps^3}\right)
    \end{equation*}
    rounds, if the LP is feasible. 
    Otherwise, there are two cases in which the algorithm terminates and outputs infeasible: either based on the infeasibility condition or after it reached
    \begin{equation*}
        \Oo\left(D\cdot\frac{1000\log^2n \log \left(m/\eps\right)}{\eps^3}\right) 
    \end{equation*} 
    rounds.
    Finally, it might be the case that we have to wait an additional $\Oo(D)$ rounds to gather all the stop tokens Stop$_c^{(t)}$ and Stop$_p^{(t)}$, so we can return $\frac{\x_i^{(T)}}{K}$.
    Thus, the algorithm stops at the latest after the following number of rounds
    \begin{equation*}
         \Oo\left(D\cdot\frac{1000\log^2n \log \left(m/\eps\right)}{\eps^3}+D\right)= \Oo\left(D\cdot\frac{1000\log^2n \log \left(m/\eps\right)}{\eps^3}\right)  \qedhere
    \end{equation*}
\end{proof}

Together with \Cref{rem:equivalent-formulations}, this almost completes the proof of \Cref{thm:solving_mixed_packing_covering}. All that is left to show is how to implement \Cref{alg:mixed_covering/packing} with small message size, and still high (enough) precision, which we show in \Cref{subsec:impl_cong}.

\subsection{Finite precision and \boldmath\CONGEST Implementation}
\label{subsec:impl_cong}
In this section, we show that \Cref{alg:mixed_covering/packing} can be implemented with small messages of size~$\Oo(\log n)$ in the \CONGEST model.
Notably, for $\eps < 1/n^{1.5}=1/\poly n$, the running time of \Cref{thm:solving_mixed_packing_covering} exceeds $\Oo(n^2)$.
Thus, we can trivially match the running time of \Cref{thm:solving_mixed_packing_covering} by gathering the whole graph in one node and locally solving the problem. As a consequence, we only consider the case where $\eps > 1/\poly n$ in the following lemma.

\begin{restatable}{lemma}{finiteprecision}
    \label{rem:finite_precision_congest}
    For $\eps > 1/\poly n$, \Cref{alg:mixed_covering/packing} can be implemented in the \CONGEST model with rational messages of $\Oo(\log n)$ bits and rational output values $x_i$.

\end{restatable}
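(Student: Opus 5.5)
We plan to represent every quantity the algorithm manipulates (the variables $\x_i^{(t)}$, the exponentials $y_j^{(t)},z_j^{(t)}$, the products $a_{ij}^{(t)},b_{ij}^{(t)}$, and the global sums $y^{(t)},z^{(t)}$) in a floating-point format: a sign, a mantissa of $\Oo(\log n)$ bits and an exponent of $\Oo(\log n)$ bits. Since $\eps>1/\poly n$, the iteration bound $R=\Oo(\ln^2 n\ln(m/\eps)/\eps^3)$ is $\poly n$. The bounds on the ranges go as follows. The entries of $\Pm,\Cm$ are assumed to be given with $\poly n$ precision and magnitude ratio; in the Santa Claus application this holds since $1/\poly n\le v_g\le\poly n$. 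The initial $\x_i^{(0)}=1/(m\|\Pm_{:,i}\|_\infty)$ is then in $[1/\poly n,\poly n]$. Each update multiplies by a factor in $[1,1+1/(2K)]$, so after $R$ iterations every $\x_i^{(t)}$ stays within a $\poly n$-bounded range (at most $e^{R/K}$, which is $\exp(\poly\log n/\eps^2)$). Exponents that large are still representable with $\Oo(\log n)$ bits, because the exponent field only needs $\log(R/K)=\Oo(\log n)$ bits. Moreover, every constraint value $\Pm_j\x,\Cm_j\x$ is below $K$ while the algorithm runs, so $y_j,z_j\in[e^{-K},e^{K}]$ and these exponentials also fit in the format.

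The second step is a robustness argument: each local operation is computed with relative error $\delta=1/\poly n$, and we show the analysis of \cite{MRWZ16} tolerates this. The argument behind \Cref{lem:lemma4} and the termination lemma only uses that $a_i,b_i$ are the normalized gradients of $\lmax$ and $\lmin$, and that the step $\Delta_i/K$ is small. We would therefore show that perturbing $a_i,b_i$ by a relative factor $(1\pm\delta')$ keeps the infeasibility test and the step valid after a constant shift of $\eps/50$, e.g.\ to $\eps/60$. Sums of positive terms along the BFS tree in \Cref{lem:inner_prod} accumulate relative error at most $(1+\delta)^{\Oo(D)}$, and products accumulate at most $(1+\delta)^{\Oo(1)}$, so choosing $\delta=1/\poly n$ small enough suffices. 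Errors in the $\x_i$ themselves compound over $R$ iterations to at most $(1+\delta)^{R}=1+o(\eps)$. This only perturbs the potential $f$ by $o(1)$, so the bound $2\ln n$ and the final approximation $(1+\eps)$ survive with slightly adjusted constants. The output $\x_i^{(T)}/K$ is then a rational number with $\Oo(\log n)$ bits.

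The main obstacle is the accumulation of rounding error through the multiplicative updates and the exponentials. Errors of size $\delta$ in $\Pm_j\x$ become relative errors of size $K\delta$ in $y_j$, and these feed back into every later step. I would handle this by tracking, as an invariant, that the computed $\tilde\x^{(t)}$ equals an exact run of a perturbed algorithm whose step sizes $\Delta_i$ are off by a factor $1\pm\eps/1000$. I would then verify that the potential-decrease argument of \cite{MRWZ16} holds for any step in that window, as long as the infeasibility threshold has slack. Message size is then immediate: each message carries $\Oo(1)$ such numbers, i.e.\ $\Oo(\log n)$ bits.
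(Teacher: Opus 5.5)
Your plan is sound in outline, but at the decisive step it takes a different route from the paper.

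\textbf{The paper's route.} It does three things. First, it replaces each nonzero entry of $\Pm,\Cm$ by a rational within relative error $\eps/100$. A transfer claim then shows that a $(1+\eps/100)(1+\eps/3)$-accurate solution of the rounded instance is $(1+\eps)$-approximate for the original. Second, it bounds the ranges of all quantities essentially as you do. Third, it runs a \emph{forward} error analysis: it compares every stored value with the exact run, asserts that one iteration multiplies the accumulated relative error by at most $1+c\delta$, and therefore takes $\delta=\eps/(200cR)$.

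\textbf{Your route.} You take a \emph{backward} view. The stored iterate is treated as exact, and each iteration is judged only against the true gradients $a_i,b_i$ at that iterate. The finite-precision run is then an exact run of a perturbed algorithm, whose \cite{MRWZ16} analysis you re-verify.

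\textbf{What each buys.} The paper's argument is shorter and never reopens \cite{MRWZ16}. Its input-rounding claim also covers real-valued entries, which you simply assume away; both arguments tacitly need the entries' magnitudes to fit into $\Oo(\log n)$ exponent bits. Your argument is the more robust one. The paper's per-iteration factor $1+c\delta$ does not account for amplification of errors that already exist within an iteration:
a relative error $\eta$ in $\x$ becomes about $K\eta$ in $y_j=\exp(\Pm_j\x)$, then about $K\eta/\eps$ in $\Delta_i$ through $1-a_i/b_i$, and the test $a_i>(1-\eps/50)b_i$ can flip.
It also ignores the many roundings inside the sums $\Pm_j\x$ and those of \Cref{lem:inner_prod}, which you do track. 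In your view nothing has to stay close to the exact trajectory, so a per-iteration accuracy of $1/\poly n$ suffices.

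\textbf{Points to tidy up.}
\begin{itemize}
\item Drop the sentence that errors in $\x_i$ compound to $(1+\delta)^R$ and perturb $f$ by $o(1)$. As a comparison with the exact run it fails for the reasons above. In your backward view it is unnecessary, because $f$ is evaluated at the stored iterate.
\item Your range statement is inconsistent: the bound $e^{R/K}$ is not polynomial in $n$. The exponent-bit count is what you actually need. In fact $\Pm_j\x<K$ gives $\x_i<K/\|\Pm_{:,i}\|_\infty$, so $\x_i^{(t)}/\x_i^{(0)}<mK$.
\item State explicitly that the perturbed rule is sandwiched. It updates every $i$ with $a_i\le(1-\eps/40)b_i$, and only indices with $a_i\le(1-\eps/60)b_i$. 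The first inclusion is what certifies infeasibility. If no variable is updated, then $a_i>(1-\eps/40)b_i$ for all $i$. This is impossible for a feasible instance: averaging against a feasible $\x^*$, the argument behind \Cref{lem:infeasible}, yields some $i$ with $a_i\le(1-10\eps)b_i$.
\item Besides \Cref{lem:lemma4} and Lemma 5 of \cite{MRWZ16}, you must recheck the iteration bound \eqref{eq:number_of_it} for the sandwiched rule. Note that the stopping tests in Lemma 5 are now evaluated on computed values of $\Pm_j\x,\Cm_j\x$. You must also enlarge $R$ by the resulting constant factor; otherwise the cutoff $t>R$ could wrongly report infeasibility.
\end{itemize}
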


We defer the proof of \Cref{rem:finite_precision_congest} to \Cref{sec:deferred_proofs}. As a last step, we conclude this section by proving \Cref{thm:solving_mixed_packing_covering}.

\begin{proof}[Proof of \Cref{thm:solving_mixed_packing_covering}]
    Correctness and runtime follow from the described reduction in \Cref{sec:opt_to_feas} combined with \Cref{thm:solve_feasibility}. \Cref{rem:finite_precision_congest} shows that we can implement the algorithm in the \CONGEST model, finishing the proof.
\end{proof}

\newpage
\printbibliography[heading=bibintoc]

\appendix

\section{A Counterexample for Naive Sparsification}\label{sec:example}

In this section we provide a concrete fractional solution to Santa Claus which cannot be rounded to an approximately optimal integral solution if we restrict ourselves to only consider the subset of non-zero fractional edges for rounding.

\begin{example}\label{example:sparsification}
    Let $k$ be an arbitrary integer. 
    We give an example of a graph, together with an optimal fractional solution, where any $k$-approximation of the optimal integral solution requires edges that are not part of the fractional solution. 
    We consider an example with $k$ children, $k-1$ big gifts of value $T$ and $T$ many small gifts of value $1$, where $T$ is a multiple of $k$. 
    Consider the graph depicted in \Cref{fig:sparsification_example}.
    In this example, each big gift has value $T$ and each small gift has value $1$.  
    Every child desires every small gift in this example. 

    An optimal integral solution in this example has value $T$. 
    It consists of assigning $k-1$ children a big gift and assigning one child all the small gifts. 

    We give a fractional solution of value $T$ as follows. 
    The fractional solution for the big gifts is depicted by weights on the edges. Moreover, the fractional solution assigns $T/k $ small gifts exclusively to each child (with fractionality 1), depicted by the blue sets.     
    In this solution, each child receives $(k-1)/k \cdot T$ of fractional value by the big gifts and $T/k$ value by small gifts, for a total of $T$. 

    Most edges between children and small gifts now have fractionality $0$.  
    That is, the value that any child in this tree can get from small gifts alone is capped by $T/k$, unless edges with fractionality 0 can be considered for the integral solution.
    Since there has to be a child that does not receive a big gift in any integer allocation, we can only achieve a value of $T / k$, if we ignore edges with fractionality 0 in this particular fractional assignment.
\end{example}

\begin{figure}
    \centering
    \includegraphics[width=0.5\linewidth]{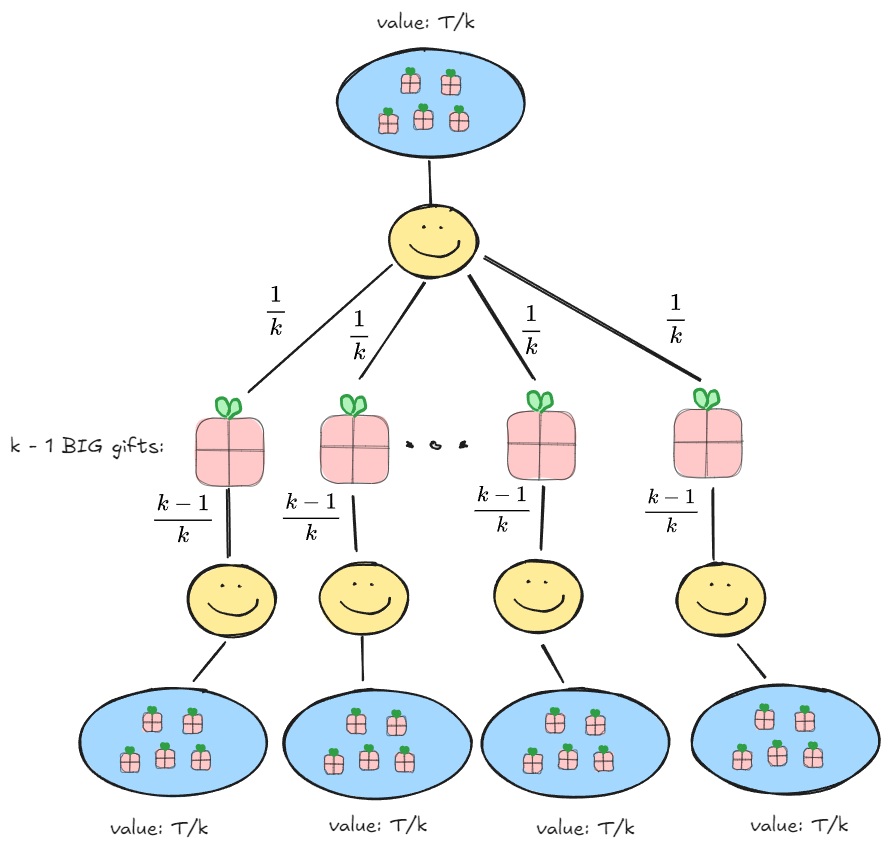}
    \caption{A fractional assignment of gifts in a graph with $k$ children, $k-1$ big gifts of value $T$ and $T$ small gifts of value $1$.}
    \label{fig:sparsification_example}
\end{figure}

\section{\boldmath \CONGEST Primitives}
\label{subsec:preliminaries}

In this section, we provide several \CONGEST primitives that we use throughout our algorithms.

\paragraph{Broadcast and Convergecast.}
In the \CONGEST model, there are standard Broadcast and Convergecast procedures, also known as ``pipelining''~\cite{peleg00}. Here information is either broadcasted \emph{from} one node or gathered \emph{in} one. 
We include a proof here for completeness. 

\begin{lemma}\label{lm:aggregation}
    There is a $\CONGEST$ algorithm that, given a graph $G=(V,E)$ with information of $t$ bits divided over the nodes, aggregates this information at a single node $v\in V$ in $\Oo(t/\log n+D)$ rounds. Conversely, in $\Oo(t/\log n+D)$ rounds we can make $t$ bits known by a single node $v\in V$, known to all nodes in the network.
\end{lemma}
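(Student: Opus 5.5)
We use a BFS tree and pipelining. First, in $\Oo(D)$ rounds we build a BFS tree $\mathcal{T}$ rooted at $v$: $v$ floods a token, and each node takes as parent the neighbor from which it first heard it. We may assume a leader $v$ is fixed in advance, or elect one by flooding the maximum ID in $\Oo(D)$ rounds. Every node learns its depth, and $\mathcal{T}$ has depth at most $D$.

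\textbf{Convergecast.} Cut the $t$ bits into $k=\lceil t/\log n\rceil$ packets of $\Oo(\log n)$ bits each. Each packet carries its content plus an index that identifies it globally, e.g.\ the originating node's ID together with a local counter, which costs $\Oo(\log n)$ bits. Each round, every node forwards to its parent one packet it holds but has not yet sent upward. The standard pipelining argument then finishes in $\Oo(k+D)$ rounds. We use the potential that after round $r$ the root has received at least $\min(k, r-D)$ packets. This is proved by induction on the structure of the tree: a node at depth $d$ that still holds pending packets sends one every round, so a pending packet never waits behind more than one packet per level plus the packets ahead of it. Equivalently, no packet is delayed more than $D$ rounds beyond its rank in any fixed global order. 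One clean way to enforce this is to have each node always forward the smallest-indexed packet it currently holds.

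\textbf{Broadcast.} This is symmetric. The root sends packets $1,\dots,k$ down $\mathcal{T}$, one per round, and every node forwards packet $i$ to its children in the round after it receives it. Packet $i$ reaches depth $d$ by round $i+d$, so all nodes have all packets after $k+D=\Oo(t/\log n+D)$ rounds. No edge ever carries more than one packet per round, so the bandwidth constraint holds.

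\textbf{Main obstacle.} The only delicate point is the convergecast delay bound. Packets start at different depths and compete for the same edges, so we need the invariant that the root never idles while packets remain anywhere in the tree after the first $D$ rounds. I would prove this with the classical argument of Peleg, by induction on depth: for every node $u$, the number of packets delivered by $u$ to its parent by round $r$ is at least $\min(\#\text{packets in subtree}(u),\, r-\mathrm{height}(u))$. Applying this at the root gives the $\Oo(k+D)$ bound.
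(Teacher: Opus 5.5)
Your proposal is correct and follows the paper's proof: build a BFS tree rooted at $v$ in $\mathcal{O}(D)$ rounds, pipeline the packets up the tree, and pipeline them back down for the converse. The one difference is in justifying the $\mathcal{O}(t/\log n + D)$ upcast bound. The paper cites a general congestion-plus-dilation packet-routing result, whereas your induction on subtree height proves it directly. That induction also shows that any forwarding rule which never idles while holding a packet suffices, so the smallest-index priority you suggest is not actually needed.
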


\begin{proof}
    Create a BFS-tree rooted at $v\in V$ in $\Oo(D)$ rounds. Upcast the information along this tree. For a single bit, this takes at most $\Oo(D)$ rounds. Since each message can contain at most $\Oo(\log n)$ bits, the congestion can be at most $t/\log n$. Using the packet routing scheme from \cite{PacketRouting} this takes at most $\Oo(t/\log n+D)$ rounds. The converse result is obtained by downcasting the information towards the leaves.
\end{proof}

\paragraph{Rake and Compress.}

For our algorithm, we often need to run subroutines on subgraphs or topological minors of our communication network. These graphs might contain long paths and thus have a much higher diameter compared to our original communication network. 
To overcome this issue, we show how to circumvent the dependency on the diameter of the input graph by shortcutting via the additional edges in the communication graph. 
Similar techniques are for example used in~\cite{ForsterGLPSY21,RozhonGHZL22}. 
We will use two simple operations inspired by Miller \& Reif \cite{MillerR89}.
\begin{itemize}
    \item \textit{Rake:} Remove all vertices of degree at most one and all incident edges.
    \item \textit{Compress:} For each path $P = (v_0,v_1,\dots,v_k)$ such that $\mathrm{deg}(v_i) = 2$ for all ${i = 1,\dots, k - 1}$, we replace $P$ by the single edge $\{v_0,v_k\}$.
\end{itemize}

\begin{lemma}\label{lem:compress}
    Let $G=(V,E)$ denote an $n$-node \CONGEST communication network of diameter $D$ and let $G'$ be a topological minor of $G$.
    Assume that every vertex $v \in V$ knows its neighbors in $G'$.
    Then, there is a deterministic algorithm that computes the graph $G'' = \textit{Compress}(G')$ in $\Otilde(\sqrt n+D)$ rounds of \CONGEST on $G$. 
    Computing the graph $G''$ means that for every edge $uv \in E$ both $u$ and $v$ know whether $uv \in E[G'']$.
    In addition, if the edge $uv$ got removed by a \textit{Compress} operation, then both $u$ and $v$ know the edge $u'v' \in E[G'']$ that replaced $uv$ as well as the immediate neighbor on the path towards $u'$ and $v'$, respectively.
    Furthermore, from now on, one round of \CONGEST on $G''$ can be simulated in $\widetilde{\mathcal{O}}(\sqrt{ n } + D)$ rounds of \CONGEST on $G$.
\end{lemma}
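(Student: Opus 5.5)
The plan is to reduce \textit{Compress} to computations on vertex-disjoint paths of $G$, and to handle long paths by sampling ``skeleton'' vertices connected through a global BFS tree.

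\textbf{Step 1: local structure and unrolling.} Since $G'$ is a topological minor and every vertex knows its $G'$-neighbours, each vertex of degree two in $G'$ knows it is an interior vertex of some maximal degree-two path $P$. The union of these paths is a vertex-disjoint collection of paths in $G'$. Each $G'$-edge is a subdivided path in $G$. In earlier rounds we store at every original vertex its immediate neighbours towards both ends. Hence one step along a $G'$-edge can be unrolled into a walk in $G$, and these walks are disjoint for different edges. So every path $P$ of $G'$ becomes a path $\widehat P$ of $G$, and distinct $\widehat P$'s are vertex-disjoint.

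\textbf{Step 2: find the endpoints.} We must tell every vertex of each $\widehat P$ the endpoints $v_0,v_k$ and the direction (immediate neighbour) towards each of them. This is the standard path-aggregation problem: compute a prefix or suffix aggregate (such as the first or last ID) along vertex-disjoint paths of $G$. Paths of length at most $\sqrt n$ are handled by pipelining along the path in $\Oo(\sqrt n)$ rounds.

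\textbf{Step 3: long paths (the main obstacle).} I expect this to be the hard step. Every vertex joins a sample set independently with probability about $\log n/\sqrt n$; this uses randomness, and derandomization is addressed at the end. With high probability every subpath of length $\sqrt n$ contains a sampled vertex, and there are $\Otilde(\sqrt n)$ samples in total.

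First, each vertex walks at most $\Otilde(\sqrt n)$ steps in both directions to learn its nearest samples or a path endpoint. This happens in parallel without congestion, because the paths are disjoint.

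Next, each sample knows its two neighbouring samples or endpoints, giving $\Otilde(\sqrt n)$ ``skeleton edges'' in total. We broadcast all skeleton edges to every node via a BFS tree using \Cref{lm:aggregation}, which takes $\Otilde(\sqrt n + D)$ rounds. Every node then locally computes, for each skeleton path, its endpoints. Finally, the samples send the endpoint information back along their segments in $\Otilde(\sqrt n)$ rounds.

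To make the whole lemma deterministic, I would not sample. Instead I would choose skeleton vertices deterministically, e.g.\ by a ruling set or by marking every $\sqrt n$-th vertex from a local pointer-jumping count. Alternatively I would cite the deterministic path-shortcut routines used by \cite{RozhonGHZL22} for Eulerian tours. Getting this deterministic $\Otilde(\sqrt n+D)$ bound is the main obstacle.

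\textbf{Step 4: simulating rounds on $G''$.} To simulate one round on $G''$, each edge $u'v'$ of $G''$ corresponds to a path in $G$, and these paths are edge-disjoint. Sending one $\Oo(\log n)$-bit message along all such paths in parallel is the same aggregation/broadcast along disjoint paths handled in Step 3. Each message travels either within a short segment, or via the skeleton through the global BFS tree, where $\Otilde(\sqrt n)$ messages are pipelined. This gives $\Otilde(\sqrt n+D)$ rounds per simulated round.
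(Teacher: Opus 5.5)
Your proposal is correct and follows essentially the paper's route: you split the maximal degree-two paths into short and long ones by their length in $G$, handle short ones by pipelining, and place $\Otilde(\sqrt n)$ skeleton vertices (plus the path endpoints) on the long ones. You then broadcast this skeleton graph over a BFS tree via \Cref{lm:aggregation}, resolve endpoints locally, and simulate $G''$-rounds by pipelining on short paths and global broadcast for the $\Oo(\sqrt n)$ long ones. For determinism the paper takes exactly your ruling-set option, a deterministic $(\sqrt n,\Oo(\sqrt n\log n))$-ruling set on the union of long paths computed in $\Oo(\sqrt n\log n)$ rounds, so neither sampling nor pointer jumping is needed (the latter would not be local in \CONGEST anyway).
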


\begin{proof}
    Let $\mathcal{P}$ be the set of all maximal paths $P = (v_0,v_1,\dots,v_k)$ in $G'$ such that $v_0 \neq v_k, k \geq 2$ and $\mathrm{deg}(v_i) = 2$ for all $i = 1, \dots, k - 1$.
    Note that $\mathcal{P}$ is a set of edge-disjoint paths and that two paths in $\mathcal{P}$ may only intersect at the endpoints of both paths.
    For each path $P \in \mathcal{P}$, let $P'$ denote the corresponding path in $G$ obtained by replacing each edge in $G'$ that does not appear in $G$ with its corresponding subdivision, which is a path in the original graph $G$.
    Using this terminology, we call any path $P \in \mathcal{P}$ \emph{short} if the length of $P'$ in $G$ is at most $\sqrt{ n }$, and \emph{long} otherwise.
    For any short path $P$, each vertex on $P$ can learn both endpoints of $P$ in $\mathcal{O}(\sqrt{ n })$ rounds of \CONGEST on $G$.
    Furthermore, each vertex can also learn which incident edges belong to a long path in the same runtime.
    We denote the graph induced by the edges of all long paths as $G_\mathrm{long} \subseteq G$.
    Since traversing these long paths all the way to the end would be prohibitively expensive, we need to find a way to break them up into segments of manageable size.
    For that purpose we compute a $(\sqrt n, \mathcal{O}(\sqrt n \cdot \log n))$-ruling set $R$ on $G_\mathrm{long}$ in $\mathcal{O}(\sqrt{n} \cdot \log n)$ rounds on $G$ (implicit in \cite{GoldbergPS88}, explicit in \cite{HenzingerKN21}). 
    Next, we note that by a simple vertex counting argument there can only be $\mathcal{O}(\sqrt n)$ long paths in $\mathcal{P}$.
    Furthermore, the size of $R$ is also bounded by $\mathcal{O}(\sqrt{ n })$, since each long path of length $k$ contains at most $\lceil k / \sqrt{ n } \rceil $ vertices from the ruling set $R$.
    Let $V_\mathrm{long}$ denote the set of endpoints of all long paths in $\mathcal{P}$.
    Now we consider the graph $G_\mathrm{long}' = (V_\mathrm{long} \cup R, E)$, where two vertices $v$ and $w$ are connected by an edge in $G_\mathrm{long}'$ if and only if they are connected in $G_\mathrm{long}$ and there is a path between $v$ and $w$ that does not contain any other vertex from $V_\mathrm{long} \cup R$.
    The graph $G_\mathrm{long}'$ consists of at most $\Otilde(\sqrt n)$ vertices and $\Oo(\sqrt n)$ edges and each vertex in $G_\mathrm{long}'$ can determine its neighbors in $\widetilde{\mathcal{O}}(\sqrt{ n })$ of \CONGEST. 
    Hence, we can broadcast $G_\mathrm{long}'$ to all nodes of $G$ in $\Otilde(\sqrt{ n } + D)$ rounds using \Cref{lm:aggregation}.
    Each vertex on a long path can now find the closest ruling set vertex in $R$ and use the graph $G_\mathrm{long}'$ to determine both endpoints of the path in $\widetilde{\mathcal{O}}(\sqrt{ n })$ rounds of \CONGEST.
    To sum up, it takes $\widetilde{\mathcal{O}}(\sqrt{ n } + D)$ rounds of \CONGEST on $G$ to compute the graph $G'' = \textit{Compress}(G')$ and to communicate the local topology of $G''$ to each vertex $v$ in $G$.
    Now that every vertex on a long path knows both its endpoints, any further round of communication on $G$ can be routed in $\Oo(\sqrt n+D)$ rounds, again using \Cref{lm:aggregation}.
\end{proof}

\RakeCompress*

\begin{proof}
    First we show that one iteration of \textit{Rake} and \textit{Compress} can be performed in $\widetilde{\mathcal{O}}(\sqrt{ n } + D)$ rounds of \CONGEST.
    The iterated \textit{Rake} and \textit{Compress} procedure produces a sequence of graphs starting from $H_0 := H$ until it terminates with a graph $H_k$ where each component is either a single cycle or has minimum degree of at least three.
    Note that each $H_i$ is a topological minor of the \CONGEST network $G$.
    Secondly, to complete the proof, we show that $k = \mathcal{O}(\log n)$.

    We maintain the invariant that a single round of \CONGEST on $H_i$ can be simulated in $\widetilde{\mathcal{O}}(\sqrt{ n } + D)$ rounds on $G$.
    Clearly, the invariant holds for $H_0 := H$ and due to the implementation of \textit{Compress} shown in \Cref{lem:compress} we can guarantee that it holds for any subsequent $H_i$.
    Performing one \textit{Rake} operation takes just a single round on $H_i$ and can therefore be simulated in $\widetilde{\mathcal{O}}(\sqrt{ n } + D)$ rounds of $G$.
    Further, each \textit{Compress} operation also takes $\widetilde{\mathcal{O}}(\sqrt{ n } + D)$ rounds on $G$, again according to \Cref{lem:compress}.

    Now we claim that this procedure terminates after $k = \lceil \log n \rceil $ iterations with a topological minor $H_{k}$ of $G$ without vertices of degree less than three.
    We show that after the first \textit{Rake} and \textit{Compress} iteration, every subsequent \textit{Rake} and \textit{Compress} iteration reduces the number of degree-one vertices by at least a factor two.
    Note that each \textit{Compress} iteration removes all vertices that currently have degree two from the graph. Furthermore, since we replace each path of degree-two vertices by an edge connecting its two endpoints, every remaining vertex maintains its degree.
    Hence, from the second \textit{Rake} operation onwards, there are no degree-two vertices in the graph before the \textit{Rake} operation.
    Therefore, every degree-one vertex $v$ is either adjacent to another degree-one vertex or a vertex of degree at least three.
    In the first case, \textit{Rake} removes both vertices.
    In the latter case, let $d \geq 3$ denote the degree of $v$'s only neighbor $w$.
    If $w$ has less than $d - 1$ neighbors of degree one, then its degree remains at least two after the \textit{Rake} operation.
    Hence, in order to create a new vertex of degree one, at least $d - 1 \geq 2$ vertices of degree one need to be removed from the graph.
    Let $\ell$ denote the number of degree-one vertices in $H$.
    Thus, after $k = \lceil \log \ell \rceil + 1 \leq \lceil \log n \rceil + 1$ iterations of \textit{Rake} and \textit{Compress}, the resulting graph $H_k$ does not contain any vertices of degree less than two.
    At this point, one additional \textit{Compress} operation removes all remaining degree-two vertices from the graph without decreasing the degree of any other vertex.
    The only case in which \textit{Compress} cannot remove all vertices of degree-two is if there is no vertex of degree three left.
    Hence, each component of the resulting graph is either a cycle or has minimum degree at least three.
\end{proof}

\paragraph{Rooting a forest.}

Next, we apply the \textit{Rake} and \textit{Compress} procedure to efficiently compute an orientation in an unrooted forest that for each tree $T$ selects a unique root $r$ and lets each vertex point towards its parent in $T$ with respect to $r$.

\begin{lemma}[Rooting all trees in a forest]
    \label{lm:root_a_tree}
    Let $G=(V,E)$ denote an $n$-node \CONGEST communication network of diameter $D$ and let $F \subseteq G$ be a forest. There is a deterministic $\widetilde{\mathcal{O}}(\sqrt{ n } + D)$-round \CONGEST algorithm that roots every tree $T\in F$ at an arbitrary vertex $r \in T$ and informs every node in $V$ of their parent in their respective tree.
\end{lemma}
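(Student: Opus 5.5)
We root all trees of $F$ with a tree-contraction scheme in the style of Miller and Reif: run \textit{Rake} and \textit{Compress} on $F$ until it is empty, recording what each step removed, and then undo the steps in reverse order, at each stage orienting the edges that are being restored. The model is the sampling procedure in the proof of \Cref{lem:santa-randomized-selection-of-small-gifts}, with ``pick a vertex'' replaced by ``orient edges towards the root''.

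\textbf{Contraction phase.} Set $F_0 := F$ and obtain $F_{j+1}$ from $F_j$ as follows. \textit{Rake} removes every vertex of degree at most one in $F_j$; each removed leaf records the neighbor it was attached to. \textit{Compress} contracts every maximal path of degree-two vertices into a single edge. Each edge of $F_j$ is a path of $G$, so every $F_j$ is a topological minor of $G$.
\begin{enumerate}
\item By \Cref{lem:compress}, one iteration costs $\widetilde{\mathcal{O}}(\sqrt n+D)$ rounds on $G$, and afterwards every vertex on a compressed path knows both endpoints of its path and its next hop towards each of them.
\item The number of iterations is $\mathcal{O}(\log n)$, by the counting argument already used in \Cref{lem:santa-randomized-selection-of-small-gifts}. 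If fewer than $3/5$ of the vertices have degree two, then leaves outnumber degree-$\ge 3$ vertices and at least a $1/5$ fraction of vertices are raked. Otherwise the degree-two vertices form few path components, so compress removes at least a $1/5$ fraction.
\item A tree shrinks to a single vertex, or to an edge whose endpoints are both raked at once. That surviving vertex, or the endpoint with the larger ID, becomes the root $r$. This choice is made locally, because the last survivors know they are last.
\end{enumerate}

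\textbf{Expansion phase.} We process $j = k, k-1, \dots, 0$ and maintain the invariant that every vertex of $F_j$ knows its parent pointer in $F_j$, where a pointer may be a virtual (compressed) edge.
\begin{enumerate}
\item \emph{Undoing a rake.} A raked leaf simply points to the neighbor it recorded.
\item \emph{Undoing a compress.} Let $(u,v)$ be a compressed edge that is oriented, say $u \to v$. Every internal vertex of the underlying path points towards the $v$-end, and $u$ points to its first hop along the path. Every path vertex already knows both endpoints and its next hops. So the only information needed is the orientation of the virtual edge, a single bit, and it must reach every vertex of the path.
\end{enumerate}

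\textbf{Main obstacle.} The difficulty is broadcasting this orientation bit along very long paths within the time budget, in parallel for all paths. Short paths, of length at most $\sqrt n$ in $G$, are handled directly in $\mathcal{O}(\sqrt n)$ rounds. For long paths we reuse the ruling-set skeleton graph $G'_\mathrm{long}$ from the proof of \Cref{lem:compress}. It has $\widetilde{\mathcal{O}}(\sqrt n)$ vertices and edges, so the orientation bits of all long paths, $\mathcal{O}(\sqrt n)$ bits in total, can be broadcast to every node via \Cref{lm:aggregation} in $\widetilde{\mathcal{O}}(\sqrt n+D)$ rounds. Each vertex then orients its edge locally, using the ruling-set segment it belongs to and its known next hops.

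This gives $\mathcal{O}(\log n)$ stages at $\widetilde{\mathcal{O}}(\sqrt n+D)$ rounds each, i.e.\ $\widetilde{\mathcal{O}}(\sqrt n+D)$ in total. When the expansion reaches $F_0$ every pointer is a real edge, so each node knows its parent. The remaining care points are that pointers stay consistent across the nested compressions, and that all trees proceed in parallel without congestion. The latter holds because the paths are edge-disjoint.
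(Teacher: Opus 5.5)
Your proposal is correct and follows the paper's proof. You contract every tree with iterated \textit{Rake} and \textit{Compress}, root it at the last survivor (a single vertex or one endpoint of the final edge), and then undo the iterations in reverse, pointing raked leaves to their recorded neighbor and orienting each compressed path towards the parent end of its virtual edge; the paper simply cites \Cref{lem:rake_and_compress} for the $\mathcal{O}(\log n)$ iteration bound and \Cref{lem:compress} for the path broadcast, where you re-derive these via a direct counting argument and the ruling-set skeleton.
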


\begin{proof}
    Let $T_0 := T$ and obtain $T_{i+1}$ from $T_i$ by applying one \textit{Rake} operation followed by one \textit{Compress} operation on $T_i$.
    From \Cref{lem:rake_and_compress} we get that after $k = \mathcal{O}(\log n)$ iterations, this sequence terminates with a tree $T_k$ of minimum degree at least three.
    Since the average degree of every tree is less than two, this implies that $T_k$ must be empty.




    Now we show how to use the decomposition to efficiently root $T$.
    We traverse the sequence of subtrees starting from the last non-empty graph $T_{k-1}$ to $T_0$.
    Note that the operation that removes the last vertex of the tree is always a \textit{Rake} operation and therefore $T_{k-1}$ is either just a single vertex or consists of two vertices connected by a single edge. In any case, we can pick a root $r$ in $T_{k-1}$ in a single round on $T_{k-1}$ and (if necessary) orient the only edge towards $r$.
    Now suppose all edges in $T_{i+1}$ are oriented towards a common root $r$.
    Note that any edge $e \in E(T_{i+1})$ corresponds to a path $P(e)$ in $T$ and according to \Cref{lem:compress} can broadcast a message to all edges $f \in P(e)$ in $\widetilde{\mathcal{O}}(\sqrt{ n } + D)$ rounds on $G$.
    For every edge $e = (u \to v) \in E(T_{i+1})$ we orient all edges on $P(e)$ towards $v$ in $\widetilde{\mathcal{O}}(\sqrt{ n } + D)$ rounds on $G$. This orients all edges that were removed from $T_i$ by the \textit{Compress} operation.
    For any vertex $v$ removed by the \textit{Rake} operation from $T_i$, we orient all incident edges $\{ u, v \} $ from $v$ to $u$. This operation takes a single round on $T_i$ and can thus be simulated in $\widetilde{\mathcal{O}}(\sqrt{ n } + D)$ rounds on $T$.
    Hence, after processing all subtrees $T_k,\dots,T_0 := T$ we have successfully rooted $T$.
    Since $k = \mathcal{O}(\log n)$, the whole algorithm runs in $\widetilde{\mathcal{O}}(\sqrt{ n } + D)$ rounds of \CONGEST.
\end{proof}

\section{Deferred Proofs}\label{sec:deferred_proofs}

In this section, we provide the missing proofs. First, we prove the relation between the two different formulations \ref{def:min-MPC} and \ref{def:max-MPC} of mixed packing-covering LPs.

\equivalentFormulations*

\begin{proof}
Let $(\x,\lambda)$ be a solution of the LP \ref{def:min-MPC}. Let $\widetilde{\x}=\x/\lambda$ and $\gamma = 1/\lambda$, then 
\begin{align*}
    \Pm \widetilde{\x}&=\Pm(\x/\lambda)\leq \lambda \pv / \lambda = \pv \\
    \Cm \widetilde{\x} & = \Cm(\x/\lambda)\geq \cv/\lambda = \gamma \cv.
\end{align*}
So $(\widetilde{\x}, \gamma)$ is a feasible solution of the LP \ref{def:max-MPC}.
Moreover, if $\lambda^* \leq \lambda \leq (1+\eps) \lambda^*$, we have
\begin{align*}
    1/\gamma^*=\lambda^*\leq \lambda  \leq (1+\eps) \lambda^* = (1+\eps)/\gamma^*.
\end{align*}
Since $ \gamma=1/\lambda$ we get $\gamma^*/(1+\eps) \leq \gamma \leq \gamma^*$.
\end{proof}

Secondly, we show that we can simulate the mixed packing-covering solver on the communication network given by the child gift graph.

\EquivalentModels*

\begin{proof}
    We may assume that every entity holds its corresponding constraints of \LP, i.e., each child has access to its corresponding constraints \Ref{eq:lp-objective-value}, \ref{eq:lp-gifts-sum-up-to-one}, and \ref{eq:lp-child-gets-only-one-big-gift}, each small gift has access to \ref{eq:lp-gifts-sum-up-to-one} and \ref{eq:lp-small-gift-assigned-to-beta-many-children}, and each big gift has access to \ref{eq:lp-big-gift-assigned-to-one-child}. Variables are defined on the edges between a child and a gift, so we may assume that children take charge of the variable nodes in \Cref{alg:mixed_covering/packing}. Packing/covering constraints are simulated by the corresponding entities that have access to the corresponding constraint. Finally, we observe that by the definition of \LP variables $x_{cg}$ for a child $c$ and a gift $g$ only show up if there exists an edge $\{c,g\}$ in $G$, i.e., entries in the packing/covering matrices are positive if and only if there exists an edge in $G$.
\end{proof}

Finally, we prove that we can implement~\Cref{alg:mixed_covering/packing} in the \CONGEST model. 

\finiteprecision*

\begin{proof}
    Let $\delta_{\mathrm{in}}:=\eps/100$. For every non-zero entry $\alpha$ of $\Pm$ and $\Cm$, replace it by a rational number in the interval $[(1-\delta_{\mathrm{in}})\alpha,(1+\delta_{\mathrm{in}})\alpha]$.
In particular, we obtain rational matrices $\widehat{\Pm},\widehat{\Cm}$ such that entrywise
\begin{align*}
(1-\delta_{\mathrm{in}})\Pm \leq \widehat{\Pm} \leq (1+\delta_{\mathrm{in}})\Pm
\quad \text{and} \quad
(1-\delta_{\mathrm{in}})\Cm \leq \widehat{\Cm} \leq (1+\delta_{\mathrm{in}})\Cm.
\end{align*}
Since $\Pm,\Cm$ have non-negative entries and $\x\ge 0$, for every $\x\ge 0$ we have 
\begin{align*}
(1-\delta_{\mathrm{in}})\Pm\x \leq \widehat{\Pm}\x \leq (1+\delta_{\mathrm{in}})\Pm\x
\quad \text{and} \quad
(1-\delta_{\mathrm{in}})\Cm\x \leq \widehat{\Cm}\x \leq (1+\delta_{\mathrm{in}})\Cm\x.
\end{align*}

\begin{claim}\label{claim:initial_rational}
    Any vector $\x \geq 0$ satisfying
    $\max\limits_j (\widehat{\Pm}\x)_j \leq (1+\eps/100)(1+\eps/3)\min\limits_j (\widehat{\Cm}\x)_j$
    also satisfies $\max\limits_j (\Pm\x)_j \leq (1+\eps)\min\limits_j (\Cm\x)_j$.
\end{claim}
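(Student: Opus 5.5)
The plan is to sandwich the true loads $\Pm\x$ and $\Cm\x$ between the perturbed loads $\widehat{\Pm}\x$ and $\widehat{\Cm}\x$ using the entrywise bounds just established, and then absorb the resulting constant factors into $1+\eps$. Fix $\x\ge 0$ satisfying the hypothesis. For every packing row $j$ we have $(1-\delta_{\mathrm{in}})(\Pm\x)_j \le (\widehat{\Pm}\x)_j$. Taking the maximum gives
\[
\max_j(\Pm\x)_j \le \frac{1}{1-\delta_{\mathrm{in}}}\max_j(\widehat{\Pm}\x)_j .
\]
For every covering row $j$ we have $(\widehat{\Cm}\x)_j \le (1+\delta_{\mathrm{in}})(\Cm\x)_j$. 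Taking the minimum on both sides gives
\[
\min_j(\widehat{\Cm}\x)_j \le (1+\delta_{\mathrm{in}})\min_j(\Cm\x)_j .
\]
Both steps are monotone because all entries are non-negative. Chaining these two inequalities with the hypothesis yields
\[
\max_j(\Pm\x)_j \le \frac{(1+\delta_{\mathrm{in}})(1+\eps/100)(1+\eps/3)}{1-\delta_{\mathrm{in}}}\,\min_j(\Cm\x)_j,
\]
where $\delta_{\mathrm{in}}=\eps/100$.

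It then remains to check the scalar inequality
\[
\frac{(1+\eps/100)^2(1+\eps/3)}{1-\eps/100}\le 1+\eps .
\]
This is the only step requiring any care, and it is where I expect the main (minor) obstacle. The bound is plainly false for large $\eps$, so I would assume $\eps\le 1$. This assumption is harmless, since the solver's guarantee is only meaningful for $\eps<1$ and the paper uses constant $\eps\le 1/2$.

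Under $\eps\le1$, the inequality follows by bounding each factor:
\begin{itemize}
\item $1/(1-\eps/100)\le 1+\eps/99$;
\item $(1+\eps/100)^2\le 1+3\eps/100$.
\end{itemize}
The product of all factors is then $1+\eps(1/3+3/100+1/99)+O(\eps^2)$ terms with small coefficients. The cross terms can be bounded by $\eps^2\cdot c$ with $c<1/2$ using $\eps\le 1$, so the total is at most $1+\eps(0.38+0.5)\le 1+\eps$.

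Finally, the claim presumably also implicitly needs positivity, i.e.\ $0<\max_j(\Pm\x)_j$. This transfers directly from $\widehat{\Pm}\x$, since $\widehat{\Pm}$ and $\Pm$ have the same support and the factors $1\pm\delta_{\mathrm{in}}$ are positive.
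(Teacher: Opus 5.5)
Your proposal is correct and follows the paper's proof essentially verbatim: you sandwich $\max_j(\Pm\x)_j$ and $\min_j(\widehat{\Cm}\x)_j$ using the entrywise bounds with $\delta_{\mathrm{in}}=\eps/100$, chain them with the hypothesis, and verify the resulting scalar inequality for $\eps\in(0,1]$. The paper bounds $1/(1-\eps/100)\le 1+2\eps/100$ where you use $1+\eps/99$, which makes no difference, and your positivity remark is an extra that the claim as stated does not require.
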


\begin{proof}[Proof of \Cref{claim:initial_rational}]
    If $\x \geq 0$ satisfies
    $
    \max\limits_j (\widehat{\Pm}\x)_j \leq (1+\eps/100)(1+\eps/3)\min\limits_j (\widehat{\Cm}\x)_j,
    $
    then
    \begin{align*}
    \max_j (\Pm\x)_j
    \leq \frac{1}{1-\delta_{\mathrm{in}}}\max_j (\widehat{\Pm}\x)_j 
    &\leq \frac{(1+\eps/100)(1+\eps/3)}{1-\delta_{\mathrm{in}}}\min_j (\widehat{\Cm}\x)_j \\
    &\leq \frac{(1+\eps/100)(1+\eps/3)(1+\delta_{\mathrm{in}})}{1-\delta_{\mathrm{in}}}\min_j (\Cm\x)_j.
    \end{align*}
    For $\delta_{\mathrm{in}}=\eps/100$, the prefactor is upper bounded by
    $
    \frac{(1+\eps/100)^2(1+\eps/3)}{1-\eps/100}.
    $
    
    Using $(1+\eps/100)^2\le 1+3\eps/100$ and
    $1/(1-\eps/100)\le 1+2\eps/100$ for $\eps\in(0,1]$,
    we obtain
    \[
    \frac{(1+\eps/100)^2(1+\eps/3)}{1-\eps/100}
    \le
    (1+3\eps/100)(1+\eps/3)(1+2\eps/100)
    \le
    1+\eps.\qedhere
    \]
\end{proof}

Thus, solving the rounded instance to accuracy $(1+\eps/100)(1+\eps/3)$ is enough to obtain a $(1+\eps)$-approximate solution for the original instance.

\medskip
 
Next, we show that the rounded instance can be executed with fixed floating-point precision. 

\begin{claim}
    \label{claim:rational}
    There exists an absolute constant $c>0$ such that if every communicated number is rounded to relative precision
    $
    1\pm \delta
    \text{ with }
    \delta := \frac{\eps}{200cR},
    $
    where $R$ is the iteration bound from \Cref{thm:solve_feasibility}, then the
    finite-precision execution outputs a vector $\x$ satisfying
    \[
    \max_j (\widehat{\Pm}\x)_j
    \le
    (1+\eps/100)(1+\eps/3)\min_j (\widehat{\Cm}\x)_j.
    \]

    Additionally, every communicated number can be represented by $O(\log n)$ bits.
\end{claim}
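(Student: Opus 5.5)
We prove the claim by a perturbation argument: the finite-precision run is compared, iteration by iteration, with an exact run of \Cref{alg:mixed_covering/packing} on a slightly perturbed instance, and the accumulated relative error is absorbed into the slack of the potential-function analysis of \cite{MRWZ16}. Concretely, we track multiplicative errors. Each computed quantity (the loads $\widehat{\Pm}_j\x^{(t)}$ and $\widehat{\Cm}_j\x^{(t)}$, the exponentials $y_j,z_j$, the products $a_{ij},b_{ij}$, the global sums $y^{(t)},z^{(t)}$, and the ratios $a_i,b_i$) is obtained from rounded inputs by $O(1)$ arithmetic operations and one aggregation. Each operation incurs a relative error $(1\pm\delta)$. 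Sums of nonnegative terms preserve relative error, so aggregation over the BFS tree via \Cref{lem:inner_prod} contributes only a factor $(1\pm\delta)^{O(D)}$. To avoid this $D$-dependence we instead round only once at the end of the sum, or we note that $\delta\cdot D\le \eps/(200 c)$ anyway, since $R\ge D$ can be assumed after absorbing $D$ into $c$. The exponentials need extra care: an additive error $\eta$ in the exponent becomes a relative error $e^{\eta}$. Since the exponents are at most $K=10\ln n/\eps$ until termination, a relative error $\delta$ in the load yields $e^{\delta K}=1+O(\delta\ln n/\eps)$, which is still $1/\poly n$-small for our choice of $\delta$, after possibly shrinking $\delta$ by a further $\ln n/\eps$ factor absorbed into $c$.

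Next we show that the update of each variable per iteration is a $(1\pm O(\delta))$-perturbation of the exact update. We have $\Delta_i=\tfrac12(1-a_i/b_i)$, and the test $a_i\le(1-\eps/50)b_i$ is affected only near the threshold. The key observation is that the analysis of \cite{MRWZ16} (\Cref{lem:infeasible} and \Cref{lem:lemma4}) is robust in two respects. First, using a threshold $1-\eps/50\pm O(\delta)$ still keeps $\Delta_i\in[\eps/200,1/2]$. Second, the per-iteration bound $f(\x^{(t+1)})-f(\x^{(t)})\le 0$ (up to second-order terms controlled by $K$) degrades by at most an additive $O(\delta)\cdot(\text{progress})$ term. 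Summed over $R$ iterations, the potential $f$ grows by at most $O(cR\delta)\cdot\ln n\le(\eps/200)\ln n$ beyond the bound $2\ln n$. Thus, at termination, $\max_j(\widehat{\Pm}\x)_j\le\lmin(\widehat{\Cm}\x)+(2+\eps/200)\ln n\le\min_j(\widehat{\Cm}\x)_j+3\ln n$. Dividing by the scale $K$ gives the ratio $1+3\eps/10\le(1+\eps/100)(1+\eps/3)$. The returned vector $\x^{(T)}/K$ is scale invariant, so the division introduces only a further factor $1\pm\delta$.

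For the bit complexity, we use $\eps>1/\poly n$ and $R=\poly(n)$, so $\delta\ge1/\poly n$ and a mantissa of $O(\log n)$ bits suffices. For exponents, all entries start at $x_i^{(0)}\ge 1/(m\,\|\Pm_{:,i}\|_\infty)$ and grow by at most $(1+1/K)^R$, while loads stay below $K$. We may also assume the input entries are in $[1/\poly n,\poly n]$, because entries outside this range can be truncated within the allowed $\delta_{\mathrm{in}}$ perturbation or handled by standard normalization. Hence every exponent is $O(\log n)$ in absolute value and fits in $O(\log\log n)$ bits. Quantities such as $y_j=\exp(\cdot)$ are at most $n^{10/\eps}$, which is $\poly n$ only for constant $\eps$; for general $\eps$ we transmit the floating-point exponent, which has $O(\log(K))=O(\log n)$ bits.

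The main obstacle is the second step, namely verifying that the potential-function inequality of \cite{MRWZ16} tolerates relative errors in $a_i,b_i$ and in the decision threshold, with cumulative loss linear in $R\delta$. I would first try to show that the exact-analysis inequality $f(\x^{(t+1)})\le f(\x^{(t)})$ holds with slack of order $(\eps/K)\sum_i\Delta_i x_i b_i$, so that perturbations of relative size $\delta\ll\eps$ are dominated without needing a global error accumulation bound.
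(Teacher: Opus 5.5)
Your route differs from the paper's. The paper uses forward error propagation. It asserts that each iteration inflates the relative deviation of every stored value from the \emph{exact} execution by at most a factor $1+c\delta$. Hence after $R$ iterations the output is within $e^{cR\delta}\le 1+\eps/100$ of the exact output and inherits its guarantee. You never track a global exact run. You compare each finite-precision step only with the exact step taken from the same stored $x$, and absorb the discrepancy into the potential argument of \Cref{lem:lemma4}.

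The paper's version is shorter and keeps \cite{MRWZ16} a black box; if valid, it would also carry over termination within $R$ iterations and correctness of ``infeasible''. But its per-iteration bound is not justified by ``constantly many operations''. A deviation $\eta_t$ in $x$ shifts loads of size up to $K$ by up to $\eta_t K$. This multiplies $y_j,z_j$ by $e^{\pm\eta_t K}$, moves $\Delta_i$ by $\Theta(\eta_t K)$, and moves the next update factor $1+\Delta_i/K$ by $\Theta(\eta_t)$. That only gives $\eta_{t+1}\le O(1)\cdot\eta_t+O(\delta)$. Moreover, the test $a_i\le(1-\eps/50)b_i$ is discontinuous. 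In your approach the errors are fresh in every iteration, so neither issue arises. The price is reopening the analysis. For the full lemma you would also need to recheck \Cref{lem:infeasible} and the iteration count under noise. The former is easy: feasibility gives some $i$ with $a_i\le(1-10\eps)b_i$, far from the threshold.

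Your crux goes through as you suggest, with two corrections to the bookkeeping. First, the relevant gradient error is $\gamma=\Theta(\delta K)$, not $O(\delta)$. Second, the growth ``$O(cR\delta)\ln n$'' is not derived; summing correctly gives $O(R\delta K)=O(\ln n/c)$, which is still fine, but monotonicity is cleaner.

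To see monotonicity, set $u_j=\frac1K\sum_iP_{ji}x_i\Delta_i$. Cauchy--Schwarz and $(Px)_j<K$ give $u_j^2\le\frac1K\sum_iP_{ji}x_i\Delta_i^2$, and likewise for covering rows. So a step changes $f$ by at most $\frac1K\sum_ix_i\bigl[\Delta_i(a_i-b_i)+\Delta_i^2(a_i+\tfrac12b_i)\bigr]$, which for exact gradients is $\le-\frac1{2K}\sum_ix_ib_i\Delta_i^2$. Now suppose the computed $a_i,b_i$ are within $e^{\pm\gamma}$ of the true gradients at the stored $x$. Then $a_i-b_i\le b_i(-2\widetilde\Delta_i+3\gamma)$, and since $\widetilde\Delta_i\ge\eps/100$, the bracket stays $\le-\tfrac14b_i\widetilde\Delta_i^2$ once $\gamma\ll\eps$. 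Rounding the updated $x_i$ costs $O(\delta)\sum_ix_ib_i$, which is dominated once $\delta\ll\eps^2/K$. Both conditions hold for $\delta=\eps/(200cR)$, since $\delta K=\ln n/(20cR)=O(\eps^3/\ln n)$. So nothing needs to be absorbed into $c$; absorbing $\ln n/\eps$ into an absolute constant is not possible anyway. Also, $e^{\delta K}$ is $1+O(\eps^3/\ln n)$, not $1/\poly n$-close to $1$.

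The same objection applies to the aggregation step. You cannot absorb $D$ into $c$, and $R$ does not depend on $D$. Under uniform precision $\delta$, the convergecast of $y^{(t)},z^{(t)}$ in \Cref{lem:inner_prod} can be off by a factor $(1+\delta)^D$. This matters because these normalizers set the scale of every ratio $a_i/b_i$ in the threshold test. Your first alternative is the right fix: carry partial sums at precision $\delta/n$, which still needs only $O(\log n)$ bits. The paper's proof does not treat this either.

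Finally, the slack $3\ln n$ is too loose in one case. Termination via a packing row only gives $\min_j(Cx)_j\ge K-3\ln n$, hence ratio $1/(1-3\eps/10)$. This exceeds $(1+\eps/100)(1+\eps/3)$ for $\eps$ near $1$, so keep the bound $(2+o(1))\ln n$.
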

If \Cref{claim:rational} holds, the final output is still a $(1+\eps)$-approximate feasible solution for the original instance.

\begin{proof}[Proof of \Cref{claim:rational}]
    We consider the execution of \Cref{alg:mixed_covering/packing} on the rounded rational instance $(\widehat{\Pm},\widehat{\Cm})$. Every quantity that is ever communicated is replaced by a rational approximation with relative error at most~$\delta=\eps/(200cR)$, where $c>0$ will be defined later in the proof:
    \begin{align*}
    \widetilde{u} \leftarrow (1+\delta)^{\lfloor \log_{1+\delta} u \rfloor}
    \end{align*}
    Here $\widetilde{u}$ denotes the rounded representation of any communicated quantity $u$.

    In one iteration, each communicated value is obtained from previously stored values by a constant number of additions, multiplications, divisions, and one evaluation of $\exp(\cdot)$. Since all quantities that occur before termination satisfy
    \begin{align*}
        0\leq (\widehat\Pm\x)_j,(\widehat\Cm\x)_j \leq K=\frac{10\ln n}{\eps},
    \end{align*}
    the arguments of all exponentials lie in $[-K,K]$, and hence all values $y_j=\exp((\widehat\Pm\x)_j)$ and $z_j=\exp(-(\widehat\Cm\x)_j)$ lie in a range whose logarithm is $O(K)$. Further, taking logarithms of the update rule
    \begin{align*}
        \x_i^{(t+1)}
        =
        \x_i^{(t)}
    \Bigl(1+\frac{\Delta_i^{(t)}}{K}\Bigr)
    \end{align*}
    yields
    \begin{align*}
        \log \x_i^{(t+1)} = \log \x_i^{(t)} + \log\Bigl(1+\frac{\Delta_i^{(t)}}{K}\Bigr).
    \end{align*}
    Since $0\le \Delta_i^{(t)}\le 1/2$ and
    $\log(1+u)\le u$ for all $u\ge0$, we obtain
    \begin{align*}
        \log\Bigl(1+\frac{\Delta_i^{(t)}}{K}\Bigr) \leq \frac{1}{2K}.
    \end{align*}
    Summing over all $R$ iterations shows that
    $\log \x_i^{(t)}$ increases by at most
    $O(R/K)$ over the entire execution. Consequently, every transmitted number $k \in \{y_j,z_j,x_i,a_{ij},b_{ij}\}$ can be stored in normalized floating-point form
    $
        k=\pm\, M\cdot 2^E
   $
    with $O(\log(R/\eps))$ bits for the mantissa $M$ and $O(\log K+\log R)$ bits for the exponent $E$, since $R=\poly(\log n,1/\eps,\log m)$ and $\eps\ge 1/\poly(n)$, this is $O(\log n)$ bits in total.

    It remains to bound the effect of arithmetic rounding. 
    Because all quantities occurring in the execution are positive and each newly stored value is computed from previously stored values by only constantly many operations, one iteration increases the current relative error by at most a multiplicative factor $1+c\delta$ for some absolute constant $c>0$. Hence, if after iteration $t$ every stored value approximates the exact value within a factor $1+\eta_t$, then after the next iteration the error is at most $        1+\eta_{t+1}\le (1+c\delta)(1+\eta_t).
    $
        Starting with $\eta_0=0$, induction yields
     $   1+\eta_t\le (1+c\delta)^t
        \le \exp(c t\delta).
    $
    Therefore, for all $t\le R$,
    $
        1+\eta_t\le \exp(cR\delta).
    $
    Choosing $\delta\le \eps/(200cR)$ implies
    \[
        \exp(cR\delta)\le \exp(\eps/200)\le 1+\eps/100,
    \]
    so the total arithmetic error over the entire execution is at most
    $\eps/100$ multiplicatively.
    \end{proof}
    Hence, the finite-precision execution computes a solution that is $(1+\eps/3)$-feasible for the rounded instance up to an additional factor $1+\eps/100$, and therefore, by \Cref{claim:initial_rational} it is still a $(1+\eps)$-approximate solution for the original instance.
    Finally, every communicated quantity is rational by construction, and since the update rule multiplies rational numbers by rational approximations, every output value $\x_i$ is rational as well.
\end{proof}

\end{document}